\pgfplotsset{compat=newest}
\newtheorem{theorem}{Theorem}
\newtheorem{lemma}{Lemma}
\newtheorem{corollary}{Corollary}
\newtheorem{remark}{Remark}
\newtheorem{definition}{Definition}
\newtheorem{construction}{Construction}
\def\ve#1{{\mathchoice{\mbox{\boldmath$\displaystyle #1$}}%
              {\mbox{\boldmath$\textstyle #1$}}%
              {\mbox{\boldmath$\scriptstyle #1$}}%
              {\mbox{\boldmath$\scriptscriptstyle #1$}}}}
\newcommand{\F}{\ensuremath{\mathbb{F}}}
\newcommand{\code}{\ensuremath{\mathcal{C}}}
\newcommand{\Code}{\ensuremath{\mathcal{C}}}
\newcommand{\dmin}{\ensuremath{d_{\min}}}
\newcommand{\cL}{\ensuremath{\mathcal{L}}}
\newcommand{\cO}{\ensuremath{\mathcal{O}}}
\newcommand{\cR}{\ensuremath{\mathcal{R}}}
\newcommand{\cB}{\ensuremath{\mathcal{B}}}
\newcommand{\cA}{\ensuremath{\mathcal{A}}}
\newcommand{\cI}{\ensuremath{\mathcal{I}}}
\newcommand{\cW}{\ensuremath{\mathcal{W}}}
\newcommand{\bs}{\ensuremath{\mathbf{s}}}
\newcommand{\sbar}{\ensuremath{\bar{\bs}}}
\newcommand{\bB}{\mathbf{B}}
\newcommand{\bA}{\mathbf{A}}
\newcommand{\bF}{\mathbf{F}}
\newcommand{\bb}{\mathbf{b}}
\newcommand{\bx}{\mathbf{x}}
\newcommand{\bH}{\mathbf{H}}
\newcommand{\bC}{\mathbf{C}}
\newcommand{\0}{\ensuremath{\ve{0}}}
\newcommand{\bc}{\ensuremath{\ve{c}}}
\newcommand{\bu}{\ensuremath{\ve{u}}}
\newcommand{\alphaVec}{\ensuremath{\ve{\alpha}}}
\newcommand{\gammaVec}{\ensuremath{\ve{\gamma}}}
\newcommand{\family}{\ensuremath{\mathcal{F}}}
\newcommand{\pmds}{\ensuremath{\mathsf{PMDS}}}
\newcommand{\pmdsfamily}{\ensuremath{\family_\mathsf{PMDS}}}
\newcommand{\Gab}{\ensuremath{\mathsf{Gab}}}
\newcommand{\msr}{\ensuremath{\mathsf{MSR}}}
\newcommand{\sd}{\ensuremath{\mathsf{SD}}}
\newcommand{\bHa}{\ensuremath{\bH^{(a)}}}
\newcommand{\order}{\mathcal{O}}
\newcommand{\ZZ}{\mathbb{Z}}
\DeclareMathOperator{\rank}{rank}
\DeclareMathOperator{\diag}{diag}
\newcommand{\Fq}{\mathbb{F}_q}
\newcommand{\FqM}{\mathbb{F}_{q^M}}
\newcommand{\bG}{\mathbf{G}}
\newcommand{\bGa}{\bG^{(a)}}
\renewcommand{\bb}{\mathbf{b}}
\newcommand\qbin[3]{\left[\begin{matrix} #1 \\ #2 \end{matrix} \right]_{#3}}
\newcommand{\Clocal}{\code_{\mathsf{local}}}
\newcommand{\CGab}{\code_{\mathcal{G}}}
\newcommand{\CLRS}{\code_{\mathcal{LRS}}}
\newcommand{\svdots}{\raisebox{3pt}{$\scalebox{.75}{$\vdots$}$}}
\newcommand{\sddots}{\raisebox{3pt}{$\scalebox{.75}{$\ddots$}$}}
\newcommand{\QA}{Q_\mathsf{A}}
\newcommand{\QB}{Q_\mathsf{B}}
\newcommand{\QC}{Q_\mathsf{C}}
\newcommand{\QD}{Q_\mathsf{D}}
\newcommand{\QE}{Q_\mathsf{E}}
\newcommand{\scrambledYB}{scrambled YB grouping }
\definecolor{constructionAcolor}{rgb}{1,0.4,0}  %
\definecolor{constructionBcolor}{rgb}{0.8,0,1}	%
\definecolor{constructionCcolor}{rgb}{0,0,1}	%
\definecolor{constructionDcolor}{rgb}{0,0,0}	%
\definecolor{constructionEcolor}{rgb}{0,0.7,0}	%
\title{Partial MDS Codes with Regeneration} %
\author{%
  \IEEEauthorblockN{Lukas Holzbaur, \emph{Student Member, IEEE},
                    Sven Puchinger, \emph{Member, IEEE},\\
                    Eitan Yaakobi, \emph{Senior Member, IEEE},
                    and Antonia Wachter-Zeh, \emph{Senior Member, IEEE}}\\
  \thanks{This work was partially presented at the IEEE International Symposium on Information Theory (ISIT) 2020 \cite{holzbaur2020partial}.

  The work of L.~Holzbaur and A.~Wachter-Zeh was supported by the German Research Foundation (Deutsche Forschungsgemeinschaft, DFG) under Grant No. WA3907/1-1. S.~Puchinger has received funding from the European Union’s Horizon 2020 research and innovation programme under the Marie Skłodowska-Curie grant agreement no.~713683. E.~Yaakobi was supported in part by the Israel Science Foundation under Grant No. 1817/18, by the Technion Hiroshi Fujiwara Cyber Security Research Center, and by the Israel National Cyber Directorate.
  This work was also supported by the Technical University of Munich -- Institute for Advanced Study, funded by the German Excellence Initiative and European Union 7th Framework Programme under Grant Agreement No. 291763.

  L.~Holzbaur, S.Puchinger, and A.~Wachter-Zeh are with the Institute for Communications Engineering, Technical University of Munich, Germany.
S.~Puchinger was with the Department of Applied Mathematics and Computer Science, Technical University of Denmark (DTU), Denmark.
  E.~Yaakobi is with the Computer Science Department, Technion --- Israel Institute of Technology, Israel.

  Emails: lukas.holzbaur@tum.de, sven.puchinger@tum.de, antonia.wachter-zeh@tum.de, yaakobi@cs.technion.ac.il}
}
\begin{document}

\maketitle

\begin{abstract}
  Partial MDS (PMDS) and sector-disk (SD) codes are classes of erasure correcting codes that combine locality with strong erasure correction capabilities. We construct PMDS and SD codes with local regeneration where each local code is a bandwidth-optimal regenerating MDS code. In the event of a node failure, these codes reduce both, the number of servers that have to be contacted as well as the amount of network traffic required for the repair process.
  The constructions require significantly smaller field size than the only other construction known in literature.
Further, we present a construction of PMDS codes with global regeneration which allow to efficiently repair patterns of node failures that exceed the local erasure correction capability of the code and thereby invoke repair across different local groups.
\end{abstract}

\section{Introduction}

Distributed data storage is ever increasing its importance with the amount of data stored by cloud service providers and data centers in general reaching staggering heights. The data is commonly spread over a number of nodes (servers or hard drives) in a \emph{distributed storage system} (DSS), with some additional redundancy to protect the system from data loss in the case of node failures (erasures). The resilience of a DSS against such events can be measured either by the minimal \emph{number of nodes} that needs to fail for data loss to occur, i.e., the \emph{distance} of the storage code, or by the expected time the system can be operated before a failure occurs that causes data loss, referred to as the \emph{mean time to data loss}. For both measures the use of maximum distance separable (MDS) codes provides the optimal trade-off between storage overhead and resilience to data loss (note that replication is a trivial MDS code). The downside of using MDS codes is the cost of recovering (replacing) a failed node. Consider a storage system with $k$ information nodes and $s$ nodes for redundancy. If an MDS code is used for the recovery of a node by means of erasure decoding, it  necessarily involves at least $k$ nodes (helpers) and, if done by straight-forward methods, a large amount of network traffic, namely the download of the entire content from $k$ nodes. To address these issues, the concepts of \emph{locally repairable codes} (LRCs) \cite{gopalan2012locality,kamath2014codes,rawat2014,Krishnan2018,gligoroski2017repair,hollmann2014minimum,li2016relieving} and \emph{regenerating codes} \cite{dimakis2010network,cadambe2013asymptotic,ye2017optimalRepair} have been introduced.

To lower the amount of network traffic in recovery, and thereby the required bandwidth, regenerating codes allow for repairing nodes by accessing $d > k$ nodes, but only retrieve a function of the data stored on each node. This significantly decreases the repair bandwidth, i.e., the amount of data that needs to be transmitted for the recovery of a number of failed nodes. Lower bounds on the required bandwidth for repair have been derived in \cite{dimakis2010network,cadambe2013asymptotic} which lead to two extremal code classes, namely \emph{minimum bandwidth regenerating} (MBR) and \emph{minimum storage regenerating} (MSR) \emph{codes}. MBR codes offer the lowest possible repair bandwidth, but at the cost of increased storage overhead compared to MDS codes. In this work we consider $d$-MSR codes, which require a higher bandwidth for repair than MBR codes, but are optimal in terms of storage overhead, i.e., they are MDS.

To address the other downside of node recovery in MDS codes, namely the large number of required helper nodes, LRCs introduce additional redundancy to the system, such that in the (more likely) case of a few node failures the recovery only involves less than $k$ helper nodes, i.e., can be performed \emph{locally}. This subset of helper nodes is referred to as a \emph{local code}. Recently several constructions of LRCs which maximize the distance have been proposed. However, when considering the mean time to data loss as the performance metric, distance-optimal LRCs are not necessarily optimal, as it is possible to tolerate many failure patterns involving a larger number of nodes than the number that can be guaranteed, while still fulfilling the locality constraints \cite{tamo2016optimal,holzbaur2019error}. \emph{Partial MDS} (PMDS) \emph{codes} \cite{blaum2013partial,blaum2016construction,gabrys2018constructions}, also referred to as \emph{maximally recoverable codes} \cite{chen2007maximally,gopalan2014explicit}, are a subclass of LRCs which guarantee to tolerate \emph{all} failure patterns possible under these constraints and thereby maximize the mean time to data loss. Specifically, an $(r,s)$-PMDS code of length $\mu n$ can be partitioned into $\mu$ local groups of size $n$, such that any erasure pattern with $r$ erasures in each local group plus any $s$ erasures in arbitrary positions can be recovered.

However, the local recovery of nodes still requires substantial network bandwidth, as the entire content of the helper nodes needs to be downloaded when considering straight-forward use recovery algorithms. To circumvent this bottleneck, several locally regenerating codes \cite{dimakis2010network} have been proposed \cite{kamath2014codes,rawat2014,Krishnan2018,gligoroski2017repair,hollmann2014minimum,li2016relieving}. In \cite{calis2016} it was shown that the LRC construction of \cite{rawat2014} is in fact a PMDS code, implicitly giving the first construction of PMDS codes with local regeneration\footnote{The construction in \cite{rawat2014} consists of two encoding stages, where in the second stage an arbitrary linear MDS code can be used to obtain the local codes. In \cite{calis2016} it was shown that the construction in fact gives a PMDS code, independent of the explicit choice of the MDS code in the second encoding stage. It follows that using a regenerating MDS code in the second encoding stage results in a PMDS code with local regeneration.}. However, these PMDS codes require a field size exponential in the length of the code and the subpacketization of the local regenerating code (which may itself be exponential in the length of the local code).
In the first part of this paper, we propose several constructions of locally MSR PMDS codes with significantly smaller field size than the construction in \cite{rawat2014}.

In the second part of this work, we consider PMDS codes with global regeneration properties that offer non-trivial repair schemes for the case where local recovery is not possible. Specifically, we give a PMDS code construction that, when punctured in any $r$ positions in each local group, becomes an MSR code. The reduction in global repair bandwidth is of particular interest, as the bandwidth of connections between nodes of different local groups is often assumed to be smaller than of nodes within the same local groups. Accordingly, though being less likely to occur, the non-local repair of a larger number of erasures can take a substantial amount of time. Repair problems where the communication cost within a local group differs from the cost of communication between local groups have been studied in \cite{gaston2013realistic,pernas2013non,sohn2018class,sohn2018capacity} and, in particular, the notion of \emph{rack-aware regenerating codes} (RRCs) \cite{hou2019rack,hou2020minimum} has been introduced. In this setting, the nodes are partitioned into a smaller number of racks, similar to the partitioning of nodes for codes with locality. Under this model, when a (number of) node(s) fails within a rack, it is regenerated by transmitting from each rack a function of the content of its nodes. The distinction to regenerating codes is that the repair bandwidth is given measured in terms of the amount of data transmitted \emph{between} the racks, while communication within each rack is ignored. Aside from this definition of the repair bandwidth, there are two important differences to the model we consider: 1) RRCs require a node that collects the data from the nodes within the rack and computes a function of it that is to be transmitted and 2) RRCs generally \emph{do not have locality}, i.e., no repair is possible within each rack.
\emph{Double regenerating codes} \cite{hu2016double} refine this model by considering two levels of regeneration, a local one, i.e., within the racks, and a global one, i.e., across the racks. %
A sightly different model has been considered in \cite{prakash2018storage}, in which repair is conducted by downloading a number of symbols from helper racks (also called clusters) and additionally a number of symbols is downloaded from a set of nodes of the same rack, where, unlike for RRCs, both contribute to the overall repair bandwidth. Similar to RCCs, the codes under this model do not have locality.

A rack-aware setting that also considers local recovery from node failures are codes for \emph{multi-rack distributed storage} \cite{tebbi2014code,qu2018multi}. There, a small number of nodes can be repaired/regenerated locally and failure patterns for which this is not possible are repaired by contacting other racks in addition to the surviving local nodes. Similar to RRCs, it is assumed that the contacted helper racks can process the data of the nodes within the rack and that the communication between racks is more costly than within a rack. Along with an information-theoretic bound, \cite{qu2018multi} presents a construction for the case of an efficient local repair of a single node failure. The minimization of the cross-rack repair bandwidth is stated as an open problem. In \cite{tebbi2014code} the authors consider a more general setting which improves both, the repair bandwidth within a rack in case of a small number of failure and across racks for failure patterns that cannot be repaired locally. Similar to RRCs, this model differs from the one in this work in that racks are able to process the data from their nodes prior to sending it to other racks. Additionally, we consider a stronger notion of locality in requiring the storage code to be PMDS.%

The work with closest relation to the model of global regeneration in codes with locality that we consider is \cite{gligoroski2017repair}, which introduces local redundancy by splitting parity-check equations of HashTag codes \cite{kralevska2016general,kralevska2017hashtag,li2018generic}. While it is shown that the codes are distance-optimal LRCs, they are generally not PMDS codes and possess only information locality, i.e., the recovery from a small subset of positions is only guaranteed for a set of systematic positions. Further, as the construction of HashTag codes \cite{kralevska2016general,kralevska2017hashtag} is not explicit, but partially relies on computer search, the construction of these parity-split HashTag codes with locality \cite{gligoroski2017repair} also partially relies on computer search.

\subsection{Contributions and Outline}

In \cref{sec:constructionS2}, we construct a new PMDS code with two global parities ($s=2$), where each local code is a $d$-MSR code. The construction is a non-trivial combination of the PMDS codes in \cite{blaum2016construction} with the MSR codes in \cite{ye2017optimalRepair}. This PMDS code construction is based on Reed-Solomon codes and only defined for a specific set of code locators. We generalize this construction to accept arbitrary code locators, which enables us to combine it with the MSR codes of \cite{ye2017optimalRepair}. This construction has field size in the order of
\begin{equation*}
O(\mu r^2 n).
\end{equation*}

In \cref{sec:universalPMDSconstruction}, we present a new general construction of locally MSR PMDS codes for any number of global parities. The construction is based on the observation that any universal PMDS code (that is, the local codes can be chosen almost arbitrarily) can be combined with a specific subclass of MSR codes, namely, MSR codes for which each row is an MDS code, to obtain locally MSR PMDS codes. The main contribution in this part of the work is the recognition of the interplay of these properties, both of which received little attention in the literature so far. This immediately leads to several new explicit locally MSR PMDS codes using known universal PMDS code families and the MSR codes in \cite{ye2017optimalRepair}: the PMDS codes in \cite{rawat2014} result in a field size in the order of
\begin{equation*}
O\big((rn)^{\mu (n-r)}\big)
\end{equation*}
and the ones in \cite{martinez2019universal} give a field size in
\begin{equation*}
O\big(\max\{rn,\mu+1\})^{n-r}\big).
\end{equation*}
We also slightly generalize the PMDS code family in \cite{gabrys2018constructions} and prove that this generalization in fact gives a universal PMDS code. The resulting field size of the corresponding locally MSR PMDS code is in
\begin{equation*}
O(nr(2n\mu)^{s(r+1)-1}).
\end{equation*}
All new locally MSR PMDS codes have the same subpacketization as the underlying MSR code %
from~\cite{ye2017optimalRepair}.

In \cref{sec:discussion}, we analyze the field size of the new constructions of locally MSR PMDS codes.
For the two-global-parities construction and the universal construction with the PMDS codes in \cite{martinez2019universal} and \cite{gabrys2018constructions}, there is a reasonable parameter range in which the respective construction has lowest field size among all known constructions.
Moreover, for all parameters, there is a new construction that has a smaller field size than the known construction in \cite{rawat2014}.

In \cref{sec:global_regeneration}, we propose the first known construction of globally MSR PMDS codes (that is, the MDS code obtained from puncturing $r$ positions in each local group is an MSR code), which allows for a significant reduction in the repair bandwidth in case a global repair event is triggered. To achieve this, we introduce a new MSR code construction based on~\cite{ye2017optimalRepair} which utilizes Gabidulin instead of Reed-Solomon codes and prove that it is in fact an MSR code. This allows for building PMDS codes with regenerating properties in a similar fashion as the Gabidulin-code-based PMDS code construction (without regeneration) in \cite{rawat2014}. The involved part for retaining the MSR property for any puncturing of $r$ positions in each local group is the choice of evaluation points of these Gabidulin codes. We present an explicit choice based on pairwise trivially intersecting subspaces and prove that it fulfills the required property for any such puncturing pattern.
The resulting code has a field size in $O(n^{\mu(n+s)})$ and subpacketization in $O((8n)^{\mu n (n+s)})$.

\section{Preliminaries}\label{sec:preliminaries}

\subsection{Notation}
We write $[a,b]$ for the set of integers $\{a,a+1,\ldots,b\}$ and $[b]$ if $a=1$.
For a set of integers $R \subseteq [n]$ and a code $\code$ of length $n$ we write $\code |_R$ for the code obtained by restricting $\code$ to the positions indexed by $R$, i.e., puncturing in the positions $[n]\setminus R$. For an element $\alpha \in \F$ we denote its order by $\cO(\alpha)$.
For an $a\times b$  matrix $\bB$ we denote by $\bB_{i,j}$ the entry in the $i$-th row and $j$-th column. For the $i$-th row/column we write $\bB_{i,:}$ and $\bB_{:,i}$, respectively. For a set $\mathcal{S}\subset [b]$, we denote by $\bB_{\mathcal{S}}$ the restriction of the matrix $\bB$ to the columns indexed by $\mathcal{S}$.
We denote the Gaussian binomial coefficient, i.e., the number of $k$-dimensional subspaces of $\Fq^n$, by
  \begin{equation*}
    \qbin{n}{k}{q}  = 
      \begin{cases}
        \frac{(1-q^n)(1-q^{n-1})\hdots(1-q^{n-k+1})}{(1-q)(1-q^2)\hdots(1-q^k)}, & k\leq n ,\\
        0, & k > n .
      \end{cases} 
  \end{equation*}

We denote a linear code of length $n$, dimension $k$, and distance $d_{\min}$ over a field $\Fq$ by $[n,k,d_{\min}]_q$. If the field size or minimum distance is not relevant, we sometimes omit the respective parameters and write $[n,k]$, $[n,k]_q$, or $[n,k,d_{\min}]$. Similarly, an $[n,k]$ RS code denotes a Reed--Solomon code of length $n$, dimension $k$ and minimum distance $n-k+1$ over a sufficiently large field.
For a code over $\F_{q^\ell}$ that is linear over $\F_q$ we write $[n,k,d_{\min};\ell]_q$, $[n,k,d_{\min};\ell]$, or $[n,k;\ell]$, respectively. The parameter $\ell$ is referred to as the subpacketization of the code and as each codeword of this code can be viewed as an array over $\F_q$ with $n$ columns and $\ell$ rows, we also refer to such codes as \emph{array codes}.

This work is largely based on the constructions of PMDS codes by Rawat \emph{et al.} \cite{rawat2014}, {Blaum
\emph{et al.}} \cite{blaum2016construction} {and Gabrys \emph{et al.}} \cite{gabrys2018constructions}, Mart{\'\i}nez-Pe{\~n}as--Kschischang \cite{martinez2019universal}, and the construction of MSR codes {by Ye and Barg} in \cite{ye2017optimalRepair}. Since the notations in these works are conflicting, i.e., the same symbols are used for different parameters of the codes,~\cref{tab:notation} provides an overview of the notation used in this work compared to these works.
\begin{table*}[htb]
  \centering
  \caption{An overview of the notation used in this work compared to the notation used in \cite{blaum2016construction,gabrys2018constructions,martinez2019universal,rawat2014,ye2017optimalRepair}. The largest benefit from this comparison is in \cref{sec:constructionS2,sec:universalPMDSconstruction,sec:discussion}, where we construct and discuss PMDS codes with local MSR codes. Therefore, the length and number of parities in the MSR code construction of \cite{ye2017optimalRepair} are matched with the parameters of the local codes in our work. Note that, in our notation, the length of the MSR code in \cref{sec:global_regeneration}, where we consider PMDS codes with global repair properties, is $\mu(n-r)$ and the number of parities is $s$.}
  \setlength{\tabcolsep}{10pt}
  \begin{tabular}{lcccccc}
    Description &\cite{blaum2016construction}&\cite{gabrys2018constructions} & \cite{martinez2019universal} & \cite{rawat2014} & \cite{ye2017optimalRepair} & This work \\ \hline
    Number of local groups & $r$ & $m$ & $g$ & $g$ & - & $\mu$ \\
    Length of local MSR code & $n$ & $n$ & $r+\delta-1$ & $r+\delta-1$ & $n$ & $n$ \\
    Number of local parity symbols & $m$ & $r$ & $\delta-1$ & $\delta-1$ & $r$ & $r$ \\
    Number of global parity symbols & $s$ & $s$ & $h$ & $D-1$ & - & $s$ \\
    Code length & $rn$ & $mn$ & $n$ & $n$ & - & $\mu n$ \\
    Subpacketization & - & - & - & $\alpha$ & $l$& $\ell$ \\
    Number of nodes needed for repair & - & - & - & $d$ & $d$ & $d$
  \end{tabular}
  \label{tab:notation}
\end{table*}

\subsection{Definitions}

All code construction presented in the following are vector codes, where each row is an arbitrary codeword of a specific code. To keep the presentation compact, we define a notation for the special case of the code being the same in each row.

\begin{definition}\label{def:cartesianCodeProduct}
For a linear $[n,k,\dmin]$ code $\code$ over $\Fq$ we denote by
\begin{equation*}
  \code^{\times \ell} = \underbrace{\code \times \cdots \times \code}_{\ell \ \text{times}}
\end{equation*}
the $\ell$-fold Cartesian product of the code $\code$ with itself arranged as an $\ell \times n$ matrix, i.e., the set
\begin{equation*}
  \code^{\times \ell} = \{\bC \in \F_q^{\ell\times n} \ | \ \bC_{i,:} \in \code \ \forall \ i \in [\ell]\} \ .
\end{equation*}
\end{definition}

It is generally desirable to keep the size of the field in which operations are conducted small. While we have to rely on larger fields to achieve some of the code properties in the following, it can be useful to regard codes over larger fields as array codes over a subfield.
\begin{corollary}\label{col:cartesianCodeParameters}
Let $\{\gamma_1,\ldots,\gamma_\ell\}$ be a basis of $\F_{q^\ell}$ over $\Fq$ and $\code$ be an $[n,k,\dmin]$ code over $\Fq$. Then the code
\begin{align*}
  \left\{ (\gamma_1,\ldots, \gamma_\ell) \cdot \bC \ | \ \bC\in \code^{\times \ell} \right\} \simeq \langle \code \rangle_{\F_{q^\ell}}
\end{align*}
 is an $[n,k,\dmin]$ code over $\F_{q^\ell}$.
\end{corollary}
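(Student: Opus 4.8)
The plan is to identify the displayed code with $\langle \code \rangle_{\F_{q^\ell}}$, i.e.\ the $\F_{q^\ell}$-linear span of $\code$ inside $\F_{q^\ell}^n$, which is the same as the code over $\F_{q^\ell}$ generated by any generator matrix $\G \in \Fq^{k\times n}$ of $\code$; once this identification is in place, the parameters follow by short arguments.

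For the set equality, fix such a $\G$. Given $\bC \in \code^{\times\ell}$, write each row as $\bC_{i,:} = \bu_i \G$ with $\bu_i \in \Fq^k$; then $(\gamma_1,\ldots,\gamma_\ell)\cdot\bC = \big(\sum_{i=1}^\ell \gamma_i \bu_i\big)\G \in \langle\code\rangle_{\F_{q^\ell}}$. Conversely, any codeword $\bc \in \langle\code\rangle_{\F_{q^\ell}}$ is of the form $\bc = \bu\G$ with $\bu \in \F_{q^\ell}^k$, and expanding each coordinate of $\bu$ in the basis gives $\bu = \sum_{i=1}^\ell \gamma_i \bu_i$ with $\bu_i \in \Fq^k$, whence $\bc = (\gamma_1,\ldots,\gamma_\ell)\cdot\bC$ for the matrix $\bC \in \code^{\times\ell}$ whose rows are $\bu_i\G$. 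So the two codes coincide; in particular the map $\bC\mapsto(\gamma_1,\ldots,\gamma_\ell)\cdot\bC$ realises the isomorphism $\simeq$. The length is $n$ by construction, and the $\F_{q^\ell}$-dimension equals $\rank_{\F_{q^\ell}}\G = \rank_{\Fq}\G = k$, using that matrix rank is unchanged by field extension.

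For the minimum distance I would use that, for $\bc = (\gamma_1,\ldots,\gamma_\ell)\cdot\bC$, the $j$-th entry $\bc_j = \sum_{i=1}^\ell \gamma_i \bC_{i,j}$ vanishes if and only if $\bC_{i,j} = 0$ for every $i$, because the $\gamma_i$ are $\Fq$-linearly independent while the $\bC_{i,j}$ lie in $\Fq$. Hence the support of $\bc$ is the union of the supports of the rows $\bC_{i,:}$. If $\bc \neq \0$, some row $\bC_{i,:}$ is a nonzero word of $\code$ and thus has weight at least $\dmin$, so $\mathrm{wt}(\bc) \geq \dmin$; conversely, choosing $\bC$ with a minimum-weight word of $\code$ in the first row and zeros elsewhere yields a codeword of weight exactly $\dmin$ (since $\gamma_1 \neq 0$). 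Therefore the code has parameters $[n,k,\dmin]$ over $\F_{q^\ell}$. This argument is entirely routine; the only places needing a moment's care are the support identity, which hinges on the linear independence of the basis, and the invariance of $\rank\G$ under the extension $\Fq \hookrightarrow \F_{q^\ell}$, so I do not anticipate a real obstacle.
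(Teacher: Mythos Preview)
Your proof is correct and complete. The paper states this corollary without proof, treating it as an immediate consequence of the definitions; your argument supplies exactly the routine details one would expect---identifying the displayed set with the $\F_{q^\ell}$-span of $\code$ via a generator matrix, reading off the dimension from rank invariance under field extension, and computing the minimum distance from the fact that the support of $(\gamma_1,\ldots,\gamma_\ell)\cdot\bC$ is the union of the row supports.
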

With these basic notions established, we now define the code classes and concepts used in this work, starting with a formal definition of PMDS codes.
This special class of LRCs is able to correct all patterns of erasures that are theoretically correctable and thereby provides the strongest possible protection against data loss, given the locality constraints.

\begin{definition}[Partial MDS array codes]\label{def:pmds}
Let $n,\mu,r,s,\ell \in \mathbb{Z}_{>0}$ be such that $\mu \geq 2$, $r < n$, and $s \leq (n-r)(\mu-1)$.
Let $\cW = \{W_1,W_2,\ldots,W_{\mu}\}$ be a partition of $[\mu n]$ with $|W_i|=n \ \forall \ i\in [\mu]$.

Let $\code \subset \F_q^{\ell \times \mu n}$ be a linear $[\mu n,(n-r)\mu -s;\ell]$ code. The code $\code$ is a $\pmds(\mu,n,r,s,\cW;\ell)$ \emph{partial MDS array code} if
\begin{itemize}
\item the code $\code |_{W_i}$ is an $[n,n-r,r+1; \ell]$ MDS code
for all $i \in [\mu]$ and
\item for any $E_i \subset W_i$ with $|E_i|=r \ \forall \ i\in [\mu]$, the code $\code |_{[\mu n] \setminus \cup_{i=1}^{\mu} E_i}$ is an $[\mu n-r\mu,\mu n-r\mu-s,s+1;\ell]$ MDS code.
\end{itemize}
\end{definition}

We refer to the code $\code |_{W_i}$ as the $i$-th \emph{local code}.
Furthermore, we refer to parameters $n,\mu,r,s$ satisfying the constraints of \cref{def:pmds} as \emph{valid PMDS parameters}.
Note that we exclude trivial parameters for which the definition results simply in an MDS code or a concatenation of MDS codes\footnote{There are several parameter combinations for which such a trivial case occurs.
  One trivial case is given by $r=0$ (and arbitrary $\mu,s$) where the code obtained from ``puncturing $r=0$ positions'' in each local group, i.e., the unpunctured code, is MDS by the second property of \cref{def:pmds}. On the other hand, if $s=0$ (and arbitrary $\mu,r$) the code is just a concatenation of (independent) local MDS codes.}. The requirement $s \leq (n-r)(\mu-1)$ is necessary for the PMDS code definition since otherwise the dimension of the local code exceeds the one of the global code --- a contradiction.

\begin{remark}
In \cite{blaum2016construction,gabrys2018constructions} each codeword of the PMDS and SD codes is regarded as a $\mu \times n$ array, where for PMDS codes each row can correct $r$ erasures and for SD codes $r$ erased columns can be corrected. As we will construct PMDS and SD codes with local MSR codes, we will require subpacketization, i.e., each node will not store a symbol, but a vector of multiple symbols. To avoid having different types of rows, we adopt the terminology commonly used in the LRC literature and view the codewords of a PMDS or SD code as vectors, and what we refer to as \textbf{local codes} is equivalent to the \emph{rows} of \cite{blaum2016construction,gabrys2018constructions}.
\end{remark}

In the following we will construct both PMDS and SD codes with local regeneration, but since the concepts and proofs are mostly the same, we provide them in less detail for SD codes.
\begin{remark}\label{rem:SDcodes}
  A Sector-Disk $\sd(\mu,n,r,s,\cW;\ell)$ code is defined similar to a PMDS codes as in~\cref{def:pmds}, except that $E_1=E_2= \cdots =E_{\mu}$ holds.
\end{remark}

\begin{definition}[Regenerating code \cite{dimakis2010network,cadambe2013asymptotic}]\label{def:regeneratingCode}
  Let $\mathcal{F}, \cR \subset [n]$ be two disjoint subsets.
  Let $\code$ be an $[n,n-r;\ell]$ MDS array code $\code$ over $\F_q$. Define $M(\code,\mathcal{F},\cR)$ as the smallest number of symbols of $\F_q$ one needs to download from the surviving nodes indexed by $\cR$ to recover the erased nodes indexed by $\mathcal{F}$. Then
  \begin{equation}\label{eq:boundRegenrating}
    M(\code,\mathcal{F},\cR) \geq \frac{|\mathcal{F}||\cR|\ell}{|\mathcal{F}|+|\cR|-n+r} \ .
  \end{equation}
  For two integers $h,d$, with $1 \leq h \leq r$ and $n-r \leq d \leq n-h$, we say that the code $\code$ is an \emph{$(h,d)$-MSR} code if
  \begin{equation*}
    \max_{\substack{|\mathcal{F}| = h, |\cR|=d \\ \mathcal{F} \cap \cR = \emptyset}} M(\code,\mathcal{F},\cR) = \frac{hd\ell}{h+d-n+r} \ .
  \end{equation*}
  If $h=1$ we say that the code is a $d$-MSR code. If in addition, $d=n-1$, we simply say that the code is an MSR code.
\end{definition}

Informally, in a regenerating array code, as in~\cref{def:regeneratingCode}, we require that every codeword can be recovered from an arbitrary subset of $n-r$ columns. We now define a slightly stronger property, which contains additionally a similar requirement on every row of a codeword.
\begin{definition}\label{def:rowWiseMDS}
Let $\code$ be an $[n,n-r;\ell]$ regenerating code as in~\cref{def:regeneratingCode}. We say that the code $\code$ is a \textbf{row-wise MDS} regenerating code if for any $i\in [\ell]$ the set $\left\{ \ve{C}_{i,:} \ | \ \ve{C} \in \code \right\}$ is an MDS code.
\end{definition}

With these notions established, we combine \cref{def:pmds,def:regeneratingCode} to formally define the class of codes we construct and analyse in \cref{sec:constructionS2,sec:universalPMDSconstruction,sec:discussion}.
\begin{definition}[Locally $(h,d)$-MSR PMDS array codes]\label{def:locallyMSR}
  Let $\code$ be a $\pmds(\mu,n,r,s,\cW;\ell)$ code and $d,h$ be chosen such that $1 \leq h$ and $n-r \leq d \leq n-h$. We say that the code $\code$ is locally $(h,d)$-MSR if $\code |_{W_i}$ is an $(h,d)$-MSR code for all $i\in [\mu]$.
  If $h=1$ we say the code is a locally $d$-MSR PMDS code. If in addition, $d=n-1$, we simply say that the code is an MSR PMDS code.
\end{definition}

\begin{figure}
  \centering
    \resizebox{\columnwidth}{!}{\def\x{0.5}%

\begin{tikzpicture}

\node (S1) at (0,0) [draw,thick,minimum width=\x*0.75cm,minimum height=\x*6.5cm] {};
\node (S2)  [right=\x*0.3cm of S1, draw,thick,minimum width=\x*0.75cm,minimum height=\x*6.5cm] {};
\node (S3)  [right=\x*0.3cm of S2, draw,thick,minimum width=\x*0.75cm,minimum height=\x*6.5cm] {};
\node (S32)  [right=\x*0.3cm of S3, draw,thick,minimum width=\x*0.75cm,minimum height=\x*6.5cm] {};
\node (S4)  [right=\x*0.3cm of S32, draw,thick,minimum width=\x*0.75cm,minimum height=\x*6.5cm] {};

\node (S5)  [right=\x*0.7cm of S4, draw,thick,minimum width=\x*0.75cm,minimum height=\x*6.5cm] {};
\node (S6)  [right=\x*0.3cm of S5, draw,thick,minimum width=\x*0.75cm,minimum height=\x*6.5cm] {};
\node (S7)  [right=\x*0.3cm of S6, draw,thick,minimum width=\x*0.75cm,minimum height=\x*6.5cm] {};
\node (S72)  [right=\x*0.3cm of S7, draw,thick,minimum width=\x*0.75cm,minimum height=\x*6.5cm] {};
\node (S8)  [right=\x*0.3cm of S72, draw,thick,minimum width=\x*0.75cm,minimum height=\x*6.5cm] {};

\node (S9)  [right=\x*0.7cm of S8, draw,thick,minimum width=\x*0.75cm,minimum height=\x*6.5cm] {};
\node (S10)  [right=\x*0.3cm of S9, draw,thick,minimum width=\x*0.75cm,minimum height=\x*6.5cm] {};
\node (S11)  [right=\x*0.3cm of S10, draw,thick,minimum width=\x*0.75cm,minimum height=\x*6.5cm] {};
\node (S112)  [right=\x*0.3cm of S11, draw,thick,minimum width=\x*0.75cm,minimum height=\x*6.5cm] {};
\node (S12)  [right=\x*0.3cm of S112, draw,thick,minimum width=\x*0.75cm,minimum height=\x*6.5cm] {};

\draw[dotted,thick] (\x*5.2, \x*3) -- (\x*5.2,-\x*3);
\draw[dotted,thick] (\x*11.1, \x*3) -- (\x*11.1,-\x*3);

\foreach \i in {1,...,12,32,72,112}{
  \foreach \j in {0,1,2,3}{
  \draw ($(S\i)+(-\x*0.25,\x*3.15-\x*\j*0.4)$) rectangle ($(S\i)+(+\x*0.25,\x*2.85-\x*\j*0.4)$) node (C\i) {};}
  \foreach \j in {0,1,2,3}{
  \draw ($(S\i)+(-\x*0.25,\x*1.35-\x*\j*0.4)$) rectangle ($(S\i)+(+\x*0.25,\x*1.05-\x*\j*0.4)$) node (C\i) {};}
  \foreach \j in {0,1,2,3}{
  \draw ($(S\i)+(-\x*0.25,-1.65*\x-\x*\j*0.4)$) rectangle ($(S\i)+(+\x*0.25,-1.95*\x-\x*\j*0.4)$) node (C\i) {};}
  \node (C\i) at ($(S\i)+(0,\x*-0.7)$) [minimum width=\x*0.5cm,minimum height=\x*0.2cm,rounded corners=1pt] {$\vdots$};

}

\draw[dashed, rounded corners = 1pt, blue, thick] ($(S1)+(-\x*.5,\x*3.3)$) rectangle ($(S12)+(\x*.5,\x*2.7)$);

\node[anchor = south east] (L1) at ($(S12)+(\x*1,\x*4.5)$) {\footnotesize PMDS codeword};
\path (L1.south east) edge[bend left, -{Latex[length=1mm,width=0.8mm]}]  ($(S12)+(\x*.7,\x*3.3)$) ;

\draw[dashed, rounded corners = 1pt, orange, thick] ($(S1)+(-\x*.5,\x*3.3)$) rectangle ($(S4)+(\x*.5,\x*1.5)$);
\node[anchor = south east] (L2) at ($(S4)+(\x*0.8,\x*4.4)$) {\footnotesize Row-wise MDS MSR codeword};
\path (L2.south east) edge[bend left, -{Latex[length=1mm,width=0.8mm]}]  ($(S4)+(\x*0.55,\x*2.3)$) ;

\draw [decorate,decoration={brace,amplitude=2pt}]  ($(S1)+(-\x*0.7,\x*1.5)$) -- ($(S1)+(-\x*0.7,\x*3.3)$) node [black,midway,xshift=-0.25cm] {\footnotesize $\ell$ };

\draw[thick,-{Latex[length=2mm,width=1mm]},red] ($(S2)+(0,\x*3)$) -- ($(S2)+(-\x*0.2,-\x*0.4)$) -- ($(S2)+ (\x*0.2,\x*0.4)$) -- ($(S2)+(0,-\x*3)$);
\draw[thick,-{Latex[length=2mm,width=1mm]},red] ($(S7)+(0,\x*3)$) -- ($(S7)+(-\x*0.2,-\x*0.4)$) -- ($(S7)+ (\x*0.2,\x*0.4)$) -- ($(S7)+(0,-\x*3)$);
\draw[thick,-{Latex[length=2mm,width=1mm]},red] ($(S9)+(0,\x*3)$) -- ($(S9)+(-\x*0.2,-\x*0.4)$) -- ($(S9)+ (\x*0.2,\x*0.4)$) -- ($(S9)+(0,-\x*3)$);
\draw[thick,-{Latex[length=2mm,width=1mm]},red] ($(S10)+(0,\x*3)$) -- ($(S10)+(-\x*0.2,-\x*0.4)$) -- ($(S10)+ (\x*0.2,\x*0.4)$) -- ($(S10)+(0,-\x*3)$);
\draw[thick,-{Latex[length=2mm,width=1mm]},red] ($(S11)+(0,\x*3)$) -- ($(S11)+(-\x*0.2,-\x*0.4)$) -- ($(S11)+ (\x*0.2,\x*0.4)$) -- ($(S11)+(0,-\x*3)$);
\draw[thick,-{Latex[length=2mm,width=1mm]},red] ($(S4)+(0,\x*3)$) -- ($(S4)+(-\x*0.2,-\x*0.4)$) -- ($(S4)+ (\x*0.2,\x*0.4)$) -- ($(S4)+(0,-\x*3)$);
\draw[thick,-{Latex[length=2mm,width=1mm]},red] ($(S112)+(0,\x*3)$) -- ($(S112)+(-\x*0.2,-\x*0.4)$) -- ($(S112)+ (\x*0.2,\x*0.4)$) -- ($(S112)+(0,-\x*3)$);

\node at ($(S6)+(\x*1.1,\x*4.7)$) {Servers};
\node at ($(S2)+(\x*1.1,\x*3.8)$) {\footnotesize Local group $1$};
\node at ($(S6)+(\x*1.1,\x*3.8)$) {\footnotesize Local group $2$};
\node at ($(S10)+(\x*1.1,\x*3.8)$) {\footnotesize Local group $3$};

\node (El) at ($(S6)+(\x*0.6,-\x*4.3)$) {\color{red} Server Failures};
\path ($(El.north west)+(\x*0,\x*-0.6)$) edge[bend left=5, -{Latex[length=1mm,width=0.8mm]}]  ($(S2)+(\x*0.1,-\x*3.4)$) ;
\path ($(El.north west)+(\x*0,\x*-0.2)$) edge[bend left=5, -{Latex[length=1mm,width=0.8mm]}]  ($(S4)+(\x*0.1,-\x*3.4)$) ;
\path ($(El.north) + (\x*0.5,0)$) edge[bend right=5, -{Latex[length=1mm,width=0.8mm]}]  ($(S7)+(-\x*0,-\x*3.4)$) ;
\path ($(El.north east)+ (\x*0,-\x*0.3)$) edge[bend right=5, -{Latex[length=1mm,width=0.8mm]}]  ($(S9)+(\x*0.1,-\x*3.4)$) ;
\path ($(El.north east)+ (\x*0,-\x*0.4)$) edge[bend right=5, -{Latex[length=1mm,width=0.8mm]}]  ($(S10)+(\x*0.1,-\x*3.4)$) ;
\path ($(El.north east)+ (\x*0,-\x*0.5)$) edge[bend right=5, -{Latex[length=1mm,width=0.8mm]}]  ($(S11)+(\x*0.1,-\x*3.4)$) ;
\path ($(El.north east)+(\x*0,\x*-0.6)$) edge[bend right=5, -{Latex[length=1mm,width=0.8mm]}]  ($(S112)+(-\x*0.1,-\x*3.4)$) ;

\end{tikzpicture}}
  \caption{Illustration of locally MSR PMDS array codes as constructed in this work, with $n=5$, $\mu=3$ and each symbol of the code alphabet represented by a small rectangle. The shown erasure pattern can be corrected by an $(r=2,s=2)$-PMDS code.}
  \label{fig:illustration}
\end{figure}
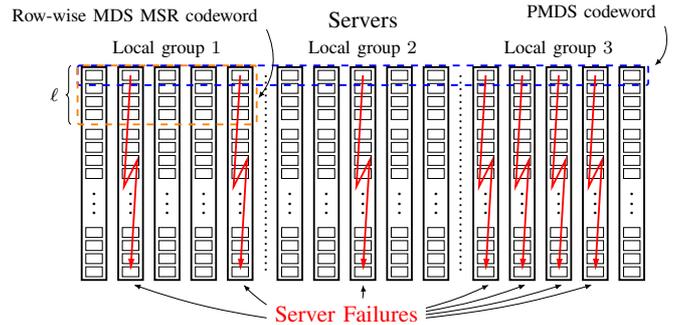
\cref{fig:illustration} shows an illustration of a locally MSR PMDS array code. Assuming it to be an $(r=2,s=2)$-PMDS code, the erasures in the first local code can be corrected locally, but without taking advantage of the regenerating property, as the number of available helper nodes is only $n-r$. The erasure in the second local code can be corrected from the remaining $n-r+1$ nodes in the local group using the locally regenerating property, and the erasures in the third local code can be recovered by accessing nodes of the other local groups. Note that the example was chosen specifically to illustrate these different cases, while the case of a single erasure in a local code, for which the locally regenerating property decreases the repair bandwidth, is far more likely than the other cases.

A globally-MSR PMDS code is formally defined as follows.
\begin{definition}[Globally $(h,d)$-MSR PMDS Code]\label{def:globallyMSR_PMDS}
Let $\code$ be a $\pmds(\mu,n,r,s,\cW;\ell)$ code and $d,h$ be chosen such that $1 \leq h \leq s$ and $\mu(n-r)-s \leq d \leq \mu(n-r)-h$.
We say that the code $\code$ is
\emph{globally $(h,d)$-MSR} if the restriction $\code |_{[\mu n] \setminus \cup_{i=1}^{\mu} E_i}$ is a $[\mu (n-r),\mu n-r\mu-s,s+1;\ell]$ $(h,d)$-MSR code
for any $E_i \subset W_i$ with $|E_i|=r$ for all $i\in [\mu]$. Again, we say $d$-MSR if $h=1$ and simply MSR if in addition $d=\mu(n-r)-1$.
\end{definition}

Throughout the paper, we consider in all constructions $h=1$.
This is the most interesting case since in a storage system, it is more likely that one node needs to be regenerated than multiple nodes.
In the globally-MSR case, we further fix $d$ to be maximal, i.e., $d=\mu(n-r)-1$. It can be seen from the bound in Definition~\ref{def:regeneratingCode} that the repair bandwidth decreases in $d$, i.e., it is minimal for this choice of $d$.
See Section~\ref{sec:conclusion} for a discussion on how the results can be generalized.

\subsection{Ye-Barg Regenerating Codes}

We repeat \cite[Construction~2]{ye2017optimalRepair} in the slightly different notation which will be used in this work.

\begin{definition}[Ye-Barg $d$-MSR codes {\cite[Construction~2]{ye2017optimalRepair}}] \label{def:yeBarg}
  Let $\code \subset \F_q$ be an $[n,n-r;\ell]$ array code over $\F_q$, where $q\geq b n$ and $b=d+1-n+r$. Let $\{\beta_{i,j} \}_{i\in [b], j\in [n]}$ be a set of $bn$ distinct elements of $\F_q$. Then each codeword is an array with $\ell = b^{n}$ rows and $n$ columns, where the $a$-th row fulfills the parity check equations
  \begin{equation*}
    \bHa =
      \begin{bmatrix}
        1&1& \hdots & 1 \\
        \beta_{a_1,1} & \beta_{a_2,2} & \hdots & \beta_{a_{n},n}\\
        \vphantom{\int\limits^x}\smash{\vdots} & \vphantom{\int\limits^x}\smash{\vdots} & & \vphantom{1}\smash{\vdots} \\
        \beta_{a_1,1}^{r-1} & \beta_{a_2,2}^{r-1} & \hdots & \beta_{a_{n},n}^{r-1}
      \end{bmatrix}\ ,
  \end{equation*}
  for $a \in [0,\ell-1]$ and $a = \sum_{i = 1}^{n} a_i b^{i-1}$ with $a_i \in [0,b-1]$.

\end{definition}
It is easy to see that a Ye-Barg code as in~\cref{def:yeBarg} is in fact \emph{row-wise MDS} $d$-MSR code.

\begin{remark}
  The constructions presented in~\cref{sec:constructionS2,sec:universalPMDSconstruction} can also be applied to obtain locally $(h,d)$-MSR PMDS codes, where each local code is an $(h,d)$-MSR code as in \cite[Construction~3]{ye2017optimalRepair}, which is very similar in structure to \cite[Construction~2]{ye2017optimalRepair} given in~\cref{def:yeBarg}. However, as the required subpacketization is larger for the former, we focus on $d$-MSR codes in this work. %
\end{remark}
\begin{remark}
  In~\cref{def:yeBarg} we define each row of the array code by a set of parity check equations independent of the other $\ell-1$ rows of the array. Note that this is not possible for array codes in general. However, for the existence of such a description it is sufficient that the matrices $A_i$, as defined in \cite{ye2017optimalRepair}, are diagonal matrices. This simplifies the notation for the cases considered in this work, as this notation makes it obvious that each row is an $[n,n-r]$ RS code, and thus MDS.
\end{remark}

\subsection{Gabidulin Codes}\label{ssec:Gabidulin_codes}

Gabidulin codes \cite{Delsarte_1978,Gabidulin_TheoryOfCodes_1985,Roth_RankCodes_1991} are rank-metric codes that have been used repeatedly in the literature to construct LRCs and PMDS codes. They are defined as follows.

\begin{definition}[Gabidulin codes]\label{def:GabidulinCodes}
Let $\bb = (\beta_1,\ldots, \beta_n) \in \FqM^{n}$ be such that the $\beta_i$ are linearly independent over $\Fq$.
The $[n,k]_{q^M}$ Gabidulin code $\Gab(n,k,\bb)$ is defined as
\begin{align*}
\Gab(n,k,\bb) = \left\{ \bc \in \FqM^{n} \ | \ \bc \cdot \bH^T = \boldsymbol{0}\right\}
\end{align*}
with
\begin{align*}
\bH =
\begin{bmatrix}
    \beta_{1} & \beta_{2} & \hdots & \beta_{n}\\
    \beta_{1}^{q^1} & \beta_{2}^{q^1} & \hdots & \beta_{n}^{q^1}\\
    \vphantom{\int\limits^x}\smash{\vdots} & \vphantom{\int\limits^x}\smash{\vdots} & & \vphantom{\int\limits^x}\smash{\vdots} \\
    \beta_{1}^{q^{n-k-1}} & \beta_{2}^{q^{n-k-1}} & \hdots & \beta_{n}^{q^{n-k-1}}
\end{bmatrix} \ .
\end{align*}
\end{definition}
Note that the existence of linearly independent $\beta_i$ implies $n \leq M$.
In the paper, we refer to the set $\{\beta_i\}, i\in [n]$ as the \emph{code locators} of the Gabidulin code. Note that, unlike the code locators of GRS codes, the code locators of a Gabidulin code are generally not the same in the generator and the parity-check matrix. In the following, when we refer to the code locators of a Gabidulin code, we always refer to the $\beta_i$'s used for the \emph{parity-check matrix} as in~\cref{def:GabidulinCodes}.

The codewords of an $[n,k]_{q^M}$ Gabidulin code can be seen as matrices in $\Fq^{M \times n}$ by expanding elements of $\FqM$ into vectors in $\Fq^M$ (using a fixed basis of $\FqM$ over $\Fq$).
Thus, we can define the rank distance of two codewords as the rank of their matrix representations' difference.
It is well-known that the minimum rank distance of a Gabidulin code is $n-k+1$, i.e., it fulfills the Singleton-like bound in the rank metric with equality.

\section{Regenerating PMDS and Sector-Disk codes with Two Global Parities} \label{sec:constructionS2}

We construct array codes from the PMDS codes of \cite{blaum2016construction} using the ideas of \cite{ye2017optimalRepair} to obtain locally $d$-MSR PMDS codes. Since the PMDS code construction in \cite{blaum2016construction} can be easily turned into an SD code (which is a slightly weaker notion, but results in a smaller field size), we also include the respective construction of SD codes with local $d$-MSR codes in this section.

\subsection{Generalization of known PMDS construction} \label{sec:BlaumSDandPMDS}

 To apply the ideas of \cite{ye2017optimalRepair} when constructing locally $d$-MSR PMDS and SD codes we need the local codes to be RS codes with specific code locators. The construction of PMDS codes given in \cite{blaum2016construction} has the property that the local codes are RS codes, but the code locators are fixed to be the first $n$ powers of some element $\beta$ of sufficient order. We generalize this construction to allow for different choices of code locators for the local codes.

  Let $\beta \in \F_{2^w}$ be an element of order $\order(\beta) \geq \mu N$. The $[\mu n,\mu(n-r)-2]$ code $\code(\mu ,n,r,2,\mathcal{L},N)$ is given by the $(r\mu +2)\times \mu n$ parity-check matrix %
  \begin{align}
   \bH =
   \begin{bmatrix}
      \bH_0 & \mathbf{0} &\hdots & \mathbf{0} \\
      \mathbf{0} & \bH_0 &\hdots & \mathbf{0} \\
      \vdots & \vdots &\ddots & \vdots \\
      \mathbf{0}& \mathbf{0} &\hdots & \bH_0 \\
      \bH_1 & \bH_2 &\hdots & \bH_{\mu } \\
    \end{bmatrix}, \label{eq:PMDS_two_parities_parity_check_matrix}
  \end{align}
  where
  \begin{align*}
    \bH_0  =
 \begin{bmatrix}
     1 & 1 & \hdots & 1 \\
     \beta^{i_1} & \beta^{i_2} & \hdots & \beta^{i_{n}} \\
     \beta^{2i_1} & \beta^{2i_2} & \hdots & \beta^{2i_{n}}\\
     \vdots & \vdots & \ddots & \vdots \\
     \beta^{(r-1)i_1} & \beta^{(r-1)i_2} & \hdots & \beta^{(r-1)i_{n}}
    \end{bmatrix}
  \end{align*}
  for $\cL = \{i_1,i_2,\ldots,i_{n}\}$ and, for $0\leq j\leq \mu-1$,
  \begin{align*}
    \bH_{j+1} =
     \begin{bmatrix}
      \beta^{ri_1} & \beta^{ri_2} &  \hdots &\beta^{ri_n} \\
      \beta^{-jN-i_1} & \beta^{-jN-i_2} & \hdots & \beta^{-jN-i_{n}}
    \end{bmatrix} \ .
  \end{align*}

Note that this generalization includes both \cite[Construction~A]{blaum2016construction} and \cite[Construction~B]{blaum2016construction} as special cases:
\begin{equation*}
\code_A=\code(\mu ,n,r,2,\{0,1,\ldots, n-1\},n)
\end{equation*}
and
\begin{equation*}
\code_B=\code(\mu ,n,r,2,\{0,1,\ldots, n-1\},N_B)
\end{equation*}
for $N_B = (r+1)(n-1-r)+1$.

We now derive a general, sufficient condition on $N$, based on the set $\cL$, such that the code is a PMDS code.

\begin{lemma} \label{lem:PMDSgeneral}
  Let $\mu,n,r$ and $s=2$ be valid PMDS parameters and $\cL$ be a set of non-negative integers with $|\cL| = n$. Then, the code $\code(\mu ,n,r,2,\cL,N)$ is a PMDS code for any $N \geq (r+1) (\max_{i \in \cL}i-r)+1$.
\end{lemma}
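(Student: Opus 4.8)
The plan is to verify the two defining properties of a $\pmds(\mu,n,r,2,\cW;1)$ code directly from the structure of the parity-check matrix $\H$ in \eqref{eq:PMDS_two_parities_parity_check_matrix}. First, each local code $\code|_{W_i}$ has parity-check matrix $\H_0$, which is a (generalized) Vandermonde matrix in the $n$ distinct nonzero elements $\beta^{i_1},\dots,\beta^{i_n}$ (distinct because $\order(\beta)\geq \mu N > \max_{i\in\L} i$ and the $i_j$ are distinct elements of $\L$); hence every $r\times r$ minor of $\H_0$ is nonzero, so $\code|_{W_i}$ is an $[n,n-r,r+1]$ RS (MDS) code. This part is essentially immediate and identical to the argument in \cite{blaum2016construction}. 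The real work is the second property: for arbitrary $E_i\subset W_i$ with $|E_i|=r$, the punctured code $\code|_{[\mu n]\setminus\cup E_i}$ must be an MDS code with $s=2$ global parities, i.e. every $2\times 2$ submatrix of the appropriate residual parity-check matrix must be invertible.

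The key step is to follow the reduction used in \cite{blaum2016construction}: after deleting the $r$ columns in each $E_i$, the local Vandermonde blocks $\H_0$ allow us to eliminate the local parity rows, leaving a $2\times (\mu(n-r))$ "effective" parity-check matrix whose entries are, up to nonzero scalars, of the form determined by $\H_{j+1}$ restricted to the surviving positions and ``projected'' modulo the Vandermonde structure of the deleted positions. Concretely, fixing a local group $j$ with surviving positions indexed by a set $S_j=\{i_{t}: t\notin E_j\text{-indices}\}$ of size $n-r$, one shows that the contribution of a surviving position with locator exponent $i_t$ to the bottom two global rows, after Gaussian elimination against $\H_0$, becomes a scalar multiple of $\bigl(1,\;\beta^{-jN}\cdot\lambda_t\bigr)$ where $\lambda_t$ is (essentially) the product $\prod_{p\in E_j}(\beta^{i_t}-\beta^{i_p})^{-1}$ times a fixed Vandermonde cofactor — the same bookkeeping as in \cite[proof of Thm.~1]{blaum2016construction}, but now with the $i$'s ranging over $\L$ instead of $\{0,\dots,n-1\}$. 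The MDS condition then reduces to showing that a certain expression, which is a polynomial in $\beta$ of bounded degree, is nonzero; requiring $N$ large enough that this degree bound stays strictly below $\order(\beta)$ forces the expression to be nonzero, since $\beta$ has order exceeding that degree.

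The main obstacle — and the crux of the lemma — is tracking precisely how large the degree of that polynomial in $\beta$ becomes when the locator exponents come from an arbitrary set $\L$ rather than $\{0,1,\dots,n-1\}$. The exponents appearing are sums and differences of the $i_t\in\L$, the term $r i_t$ coming from the first row of $\H_{j+1}$, and the offsets $-jN$; the worst case arises from combining the largest exponent $\max_{i\in\L} i$ with the factor $r$ from $\beta^{ri_t}$ and one more unit from a Vandermonde difference. Carefully, one obtains that all relevant nonzero quantities are (Laurent) polynomials in $\beta$ whose support, after clearing denominators, spans an interval of length at most $(r+1)(\max_{i\in\L} i - r)$; inserting the periodic offsets $jN$ for $j\in[0,\mu-1]$ and demanding these shifted supports do not wrap around modulo $\order(\beta)$ gives exactly the stated bound $N\geq (r+1)(\max_{i\in\L} i - r)+1$. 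I would therefore structure the proof as: (i) distinctness of locators and the local-MDS claim; (ii) the Gaussian-elimination reduction to a $2$-row effective matrix, citing the computation of \cite{blaum2016construction}; (iii) the degree/support analysis showing the critical determinant is a nonzero Laurent polynomial of the claimed degree; and (iv) conclude via $\order(\beta)\geq \mu N$ together with the $N$-bound that this determinant cannot vanish, so the punctured code is MDS. Verifying that the specializations $\L=\{0,\dots,n-1\}$ with $N=n$ and $N=N_B$ recover $\code_A$ and $\code_B$ serves as a sanity check on the bookkeeping.
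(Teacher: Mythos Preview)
Your strategy is plausible but diverges from what the paper actually does, and the crucial step in your outline is not yet justified.

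The paper does \emph{not} first eliminate down to a single $2\times\mu(n-r)$ effective parity-check matrix and then do a uniform degree/support analysis of its $2\times 2$ minors. Instead it follows \cite{blaum2016construction} and \emph{case-splits} on the location of the two extra erasures. If both extra erasures fall in the same local group $z$, one only needs the $(r+2)\times(r+2)$ matrix $\begin{psmallmatrix}\H_0\\ \H_z\end{psmallmatrix}$ restricted to the $r+2$ erased positions to be invertible; after multiplying the last row by $\beta^{jN}$ and each column by $\beta^{i_{j_\xi}}$ this becomes Vandermonde in distinct nonzero elements, so it is invertible with no constraint on $N$. If the two extra erasures lie in different groups, the paper invokes \cite[Lemma~3]{blaum2016construction}, which reduces invertibility of the relevant $(2r+2)\times(2r+2)$ matrix to the single arithmetic condition
\[
Nz+\sum_{u=1}^{r+1} j_u' - \sum_{u=1}^{r+1} j_u \ \not\equiv\ 0 \pmod{\order(\beta)},
\]
where $\{j_u\}$ and $\{j_u'\}$ are the two sets of $r+1$ erased exponents in the two groups and $1\leq z\leq \mu-1$. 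The $N$-bound then comes from a purely combinatorial estimate: since the $j_u$'s are $r+1$ \emph{distinct} non-negative integers not exceeding $\max_{i\in\L}i$, one has
\[
\tfrac{r(r+1)}{2}\ \leq\ \sum_{u=1}^{r+1} j_u\ \leq\ (r+1)\big(\max_{i\in\L}i - r\big)+\tfrac{r(r+1)}{2}= (N-1)+\tfrac{r(r+1)}{2},
\]
so the difference of the two sums lies in $[-(N-1),N-1]$, and adding $Nz$ with $z\ge 1$ forces the whole expression into $[1,N\mu-1]\subset(0,\order(\beta))$.

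Two concrete gaps in your plan relative to this: (i) the case-split is essential, and your unified $2\times 2$-minor approach hides it --- minors with both columns in the same group behave differently (they need no $N$-separation at all) from minors straddling two groups (which is where $N$ enters); (ii) your stated source of the bound, ``the factor $r$ from $\beta^{ri_t}$ and one more unit from a Vandermonde difference,'' is not what produces $(r+1)(\max_{i\in\L}i-r)$. The bound is not a degree bound on a single entry but a bound on the \emph{sum of $r+1$ distinct exponents} from $\L$; this is exactly why distinctness (and hence the $-r$ correction) matters. Also, after elimination the first global row is not normalized to $1$: its entries are $\beta^{ri_t}$, so your description of the effective column as a scalar multiple of $(1,\beta^{-jN}\lambda_t)$ is off. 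If you insist on the Gaussian-elimination route, you will end up re-deriving the determinant identity behind \cite[Lemma~3]{blaum2016construction} and then needing precisely the sum bound above; it is shorter to cite the lemma and bound the sums directly, as the paper does.
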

\begin{IEEEproof}
  We follow the proofs of \cite[Theorem~5]{blaum2016construction} and \cite[Theorem~7]{blaum2016construction}.
  The difference to the construction above is that in \cite{blaum2016construction}, the powers $i_1,\dots,i_n$ are consecutive, i.e., $i_j = j-1$. This results in a slightly more technical proof.

  Assume $r$ positions in each local group (row of the PMDS code) have been erased and in addition there are $2$ random erasures. If the two erasures occur in the same local group $z$, all local groups except for this one will be corrected by the local codes. Assume the erasures in local group $z$ occurred in positions $\mathcal{E}_z \subset [n]$. Since all points in $\cL$ are distinct, by the same argument as in \cite{blaum2016construction}, the erased positions can be recovered uniquely if the matrix
  \begin{equation*}
   \hat{\bH} =
     \begin{pmatrix}
       \bH_0 \\ \bH_z
     \end{pmatrix}
  \end{equation*}
  restricted to the erased positions $\mathcal{E}$ is of full rank.
  Say that the erased positions are $1 \leq j_1 < j_2 < \cdots < j_{r+2} \leq n$.
  It is easy to see that this matrix $\hat{\bH}_{\mathcal{E}}$ can be transformed into a Vandermonde matrix by multiplying the last row by $\beta^{jN}$ and the $\xi$-th column by $\beta^{i_{j_\xi}}$ for all $\xi\in [r-2]$ (instead of $\beta^{j_\xi-1}$ as in \cite[Theorem~5]{blaum2016construction}).
  Therefore, it is of full rank and the erasures can be corrected. \\

  Now consider the case of two local groups (horizontal codes) with $r+1$ erasures each. Assume, without loss of generality, that the erased positions are given by $\{j_1,\ldots,j_{r+1}\} \subset \cL$ in local group $1$ and $\{j_1',\ldots,j_{r+1}'\} \subset \cL$ in local group $z+1$ with $1\leq z \leq \mu-1$. Define the matrix
  \begin{align*}
    &\bF(j_1,\ldots,j_{r+1};j_1,\ldots,j_{r+1}; r ; N ; z) = \\
    & \quad \begin{bsmallmatrix}
      1 &\hdots & 1 & 0 &\hdots & 0\\
      \alpha^{j_1} & \hdots & \alpha^{j_{r+1}}& 0 &\hdots & 0\\
      \svdots & \sddots & \svdots & \svdots & \sddots & \svdots \\
      \alpha^{(r-1)j_1} & \hdots & \alpha^{(r-1)j_{r+1}}& 0 &\hdots & 0\\
      0 &\hdots & 0 & 1 &\hdots & 1\vphantom{\alpha^{j_{r+1}'}} \\
      0 &\hdots & 0 & \alpha^{j_1'} & \hdots & \alpha^{j_{r+1}'} \\
      \svdots & \sddots & \svdots & \svdots & \sddots & \svdots \\
      0 &\hdots & 0 & \alpha^{(r-1)j_1'} & \hdots & \alpha^{(r-1)j_{r+1}'}\\
      \alpha^{rj_1} &\hdots & \alpha^{rj_{r+1}} & \alpha^{rj_1'} & \hdots & \alpha^{rj_{r+1}'}\\
      \alpha^{-j_1} &\hdots & \alpha^{-j_{r+1}} & \alpha^{-Nz-j_1'} & \hdots & \alpha^{-Nz-j_{r+1}'}\\
    \end{bsmallmatrix} \ .
  \end{align*}
  To show that the erased positions can be recovered, we need to show that this matrix is invertible. By \cite[Lemma~3]{blaum2016construction} this is true if
  \begin{equation*}
    Nz + \sum_{u=1}^{r+1} j_u' - \sum_{u=1}^{r+1} j_u \neq 0 \mod \order(\beta) \ .
  \end{equation*}
  Note that in \cite[Lemma~3]{blaum2016construction} shows this relation only for $0\leq j_1 <j_2 <\cdots <j_{r+1} \leq n-1$ and $0\leq j_1' <j_2' <\cdots <j_{r+1}' \leq n-1$. However, it is easy to check that the result is independent of the specific values and only depends on the sums $\sum_{u=1}^{r+1} j_u$ and $\sum_{u=1}^{r+1} j_u'$.
  Since all $j_u$ are distinct, we have
  \begin{equation}\label{eq:boundOnSum1}
    \frac{r(r+1)}{2} = \sum_{u=0}^{r} u \leq \sum_{u=1}^{r+1} j_u
  \end{equation}
  and
  \begin{align}
    \sum_{u=1}^{r+1}j_u &\leq \sum_{u=0}^{r}\big((\max_{ j\in \cL}j -r)+u\big) \nonumber \\
                    &= (r+1)(\max_{j\in \cL}j-r) + \sum_{u=0}^{r} u = N-1 + \frac{r(r+1)}{2} \ . \label{eq:boundOnSum2}
  \end{align}
  Combining (\ref{eq:boundOnSum1}) and (\ref{eq:boundOnSum2}) we get
  \begin{align*}
    -(N-1) \leq \sum_{u=1}^{r+1} j_u' - \sum_{u=1}^{r+1} j_u \leq N-1.
  \end{align*}
  Then,
  \begin{align*}
    1=N-(N-1) &\leq Nz + \sum_{u=1}^{r+1} j_u' - \sum_{u=1}^{r+1} j_u \\
    &\leq N(\mu-1) +(N-1)  = N\mu-1 < \order(\beta)
  \end{align*}
  and thus it follows that
  \begin{align*}
   1 \leq N(z-1)+ \sum_{u=1}^{r+1} j_u' - \sum_{u=1}^{r+1} j_u \leq N\mu -1 < \order(\beta) \ .
  \end{align*}
\end{IEEEproof}

By similar arguments we also give a general, sufficient condition on $N$ for the code to be an SD code.
\begin{lemma} \label{lem:SDgeneral}
  Let $\mu,n,r$ and $s=2$ be valid PMDS parameters and $\cL$ be any set of non-negative integers with $|\cL| = n$. Then, the code $\code(\mu ,n,r,2,\cL,N)$ is an SD code for any $N \geq \max_{j \in \cL} j+1$.
\end{lemma}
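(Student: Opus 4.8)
The plan is to follow the proof of \cref{lem:PMDSgeneral} step by step; the only --- but decisive --- change is that in an SD code the $r$ guaranteed erasures occupy the \emph{same} column set $\mathcal{E}\subseteq[n]$, $|\mathcal{E}|=r$, in every local group, so that the contribution of $\mathcal{E}$ cancels in the determinant condition that must be checked. As in \cite{blaum2016construction}, the first requirement of \cref{def:pmds} (each local code is $[n,n-r,r+1]$ MDS) is immediate, since any $r$ columns of $\H_0$ form a Vandermonde matrix in the nodes $\beta^{i_1},\dots,\beta^{i_n}$, which are pairwise distinct because every element of $\L$ lies in $[0,\max_{j\in\L}j]$ and $\max_{j\in\L}j<N\le\order(\beta)$ (using $\order(\beta)\ge\mu N$). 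It then remains to show that $\H$ restricted to any SD erasure pattern --- $\mathcal{E}$ together with two further erased symbols, in column $p_a$ of group $a$ and column $p_b$ of group $b$ --- has full column rank.

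I would first treat the case $a=b$, both extra erasures in one group $z$. For every other group, its block copy of $\H_0$ restricted to $\mathcal{E}$ is an invertible $r\times r$ Vandermonde matrix, which forces that block's part of any kernel vector to vanish; the equations left for block $z$ are its copy of $\H_0$ together with the last two rows of $\H$, and restricted to the (at most $r+2$) erased columns of block $z$ this matrix becomes, after scaling column $\xi$ by $\beta^{i_{j_\xi}}$ and the last row by a suitable power of $\beta$, a square Vandermonde matrix in the distinct nodes $\beta^{i_{j_\xi}}$ with exponents $0,1,\dots,r+1$ --- hence invertible. (If $p_a$ or $p_b$ lies in $\mathcal{E}$, or $p_a=p_b$, the same computation applies with a strictly smaller Vandermonde matrix; if both lie in $\mathcal{E}$ the pattern has only $r$ erasures per group.)

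For the case $a\neq b$, say $1\le a<b\le\mu$ with $p_a,p_b\notin\mathcal{E}$ (if $p_a\in\mathcal{E}$ or $p_b\in\mathcal{E}$ the affected group is repaired by its own local code, reducing to the previous case), I would eliminate the other $\mu-2$ groups via their invertible $r\times r$ Vandermonde blocks, reducing correctability --- exactly as in \cref{lem:PMDSgeneral}, up to scaling one row by a power of $\beta$ --- to invertibility of the matrix $\vF$ of that proof with parameter $z=b-a$, whose left (resp.\ right) columns are indexed by the set $\{j_u\}$ of $\L$-values of $\mathcal{E}\cup\{p_a\}$ (resp.\ $\{j_u'\}$ of $\mathcal{E}\cup\{p_b\}$). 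By \cite[Lemma~3]{blaum2016construction} this holds iff $Nz+\sum_{u=1}^{r+1}j_u'-\sum_{u=1}^{r+1}j_u\not\equiv0\pmod{\order(\beta)}$. The SD simplification is now that $\{j_u\}$ and $\{j_u'\}$ share the $r$ values coming from $\mathcal{E}$, so $\sum_u j_u'-\sum_u j_u=i_{p_b}-i_{p_a}$, of absolute value at most $\max_{j\in\L}j\le N-1$; with $1\le z\le\mu-1$ this gives $1=N-(N-1)\le Nz+\sum_u j_u'-\sum_u j_u\le N(\mu-1)+(N-1)=N\mu-1<\order(\beta)$, so the quantity is nonzero modulo $\order(\beta)$ and $\vF$ is invertible.

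I do not expect a genuine obstacle: the proof is essentially a reorganization of that of \cref{lem:PMDSgeneral}, and it is the cancellation of the $\mathcal{E}$-part of the two sums that replaces the bound $N\ge(r+1)(\max_{j\in\L}j-r)+1$ of the PMDS case by the much weaker $N\ge\max_{j\in\L}j+1$. The only point needing care is the bookkeeping --- peeling off the ``clean'' groups and handling the degenerate sub-patterns where the two extra erasures land on $\mathcal{E}$ or coincide --- together with the fact that cancelling the $\mathcal{E}$-contribution from the two sums is legitimate, which is exactly what the remark in the proof of \cref{lem:PMDSgeneral} ensures, since the condition of \cite[Lemma~3]{blaum2016construction} depends only on $\sum_u j_u$ and $\sum_u j_u'$.
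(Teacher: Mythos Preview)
Your proposal is correct and arrives at the same final inequality $1\le Nz+(j'-j)\le N\mu-1<\order(\beta)$ as the paper. The only organizational difference is that the paper, for the case $a\neq b$, appeals directly to \cite[Theorem~5]{blaum2016construction}, which already reduces the SD pattern to the invertibility of $\beta^{-j}+\beta^{-N(z-z')-j'}$ (equivalently $N(z'-z)+j'-j\not\equiv0\pmod{\order(\beta)}$); you instead route through the full $2(r+1)\times2(r+1)$ matrix $\vF$ and \cite[Lemma~3]{blaum2016construction}, and then exploit the shared column set $\mathcal{E}$ to cancel the $r$ common summands and reach the same scalar condition. Both are the same argument at heart---your version makes the cancellation step explicit, whereas the paper absorbs it into the cited theorem.
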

\begin{IEEEproof}
  The case of $r+2$ erasures in the same local group (horizontal code) is the same as in Lemma~\ref{lem:PMDSgeneral} and \cite[Theorem~5]{blaum2016construction}.  Now consider the case of $r$ column erasures in positions $j_1,\ldots,j_{r} \in \cL$ and two random erasures in local groups $z+1$ and $z'+1$, with $0\leq z< z' \leq \mu-1$ in positions $j,j' \in \cL \setminus \{j_1,\ldots,j_{r}\}$. By the same argument as in \cite[Theorem~5]{blaum2016construction} we need to show that $\beta^{-j} + \beta^{-N(z-z')-j'}$ is invertible. With $1\leq z,z' \leq \mu$ and $0\leq j,j' \leq N-1$ we get
  \begin{align*}
    N(z'-z)+j'-j \geq N +j'-j \geq N - (N-1) > 0
  \end{align*}
  and
  \begin{align*}
    N(z'-z)+j'-j \leq N(\mu -1) + N-1 = N\mu  -1 < \order(\beta) \ .
  \end{align*}
  Combining these we get $1\leq N(z'-z)+j'-j \leq N\mu -1$, so
  \begin{align*}
    N(z'-z)+j'-j \neq 0 \mod \order(\beta)
  \end{align*}
  and it follows that $\beta^{-j} + \beta^{-N(z-z')-j'}$ is invertible.
\end{IEEEproof}

With these generalizations of \cite[Construction~A/B]{blaum2016construction} we are now ready to construct PMDS and SD codes, where each local code is a $d$-MSR code.
\begin{construction}[Locally $d$-MSR PMDS/SD array codes]\label{con:SDcodesLocalRegeneration}
Let $s=2$ and $q,\mu,n,r,d,N \in \mathbb{Z}_{>0}$ be positive integers with
\begin{itemize}
    \item $r \leq n$
    \item $q$ a power of $2$
    \item $q \geq \max\{\mu N, bn\}+1$, where $b=d+1-(n-r)$
    \item $\ell = b^n$
\end{itemize}
 For an element $\beta \in \F_q$ with $\order(\beta) \geq \max\{\mu N,nb\}$ denote $\beta_{i,j} = \beta^{in+j-1}, 0\leq i\leq b-1, 1\leq j\leq n$.

We define the following $[\mu n, \mu(n-r)-2; \ell]_{q^M}$ array code $\code(\mu ,n,r,2,N,d;\ell)_{q}$ as
\begin{align*}
\left\{
  \bC \in \F_{q}^{\ell \times \mu n} \, : \, \bH^{(a)} \bC_{a,:} = \0 \, \forall \, a=0,\dots,\ell-1
\right\}.
\end{align*}
The matrix $\bH^{(a)}$ is defined as
  \begin{align*}
    \bH^{(a)} =
    \begin{bmatrix}
      \bH_0^{(a)} & \mathbf{0} &\hdots & \mathbf{0} \\
      \mathbf{0} & \bH_0^{(a)} &\hdots & \mathbf{0} \\
      \vdots & \vdots &\ddots & \cdots \\
      \mathbf{0}& \mathbf{0} &\hdots & \bH_0^{(a)} \\
      \bH_1^{(a)} & \bH_2^{(a)} &\hdots & \bH_{\mu }^{(a)} \\
    \end{bmatrix} \in \F_{q}^{r \mu +2 \times \mu n} \ ,
    \end{align*}
    where
    \begin{align}\label{eq:parityCheckLocalS2}
      \bH_0^{(a)}  =
      \begin{bmatrix}
        1&1& \hdots & 1 \\
        \beta_{a_1,1} & \beta_{a_2,2} & \hdots & \beta_{a_{n},n}\\
        \vphantom{\int\limits^x}\smash{\vdots} & \vphantom{\int\limits^x}\smash{\vdots} & & \vphantom{1}\smash{\vdots} \\
        \beta_{a_1,1}^{r-1} & \beta_{a_2,2}^{r-1} & \hdots & \beta_{a_{n},n}^{r-1}
      \end{bmatrix} \in \F_{q}^{r \times n} \ ,
  \end{align}
  with $a \in [0,\ell-1]$ and $a = \sum_{i = 1}^{n} a_{i} b^{i-1}$  with $a_i \in [0,b-1]$.
   For $0\leq j\leq \mu -1$ let
  \begin{align*}
    \bH_{j+1}^{(a)}\! = \!
\begin{bmatrix}
      \beta_{a_1,1}^r & \beta_{a_2,2}^r & \hdots & \beta_{a_{n},n}^r \\
      \beta^{-jN} \beta_{a_1,1}^{-1} & \!\beta^{-jN} \beta_{a_2,2}^{-1}  & \hdots & \!\beta^{-jN} \beta_{a_{n},n}^{-1}
    \end{bmatrix}\! \in \!\F_{q}^{2 \times n}  .
  \end{align*}
\end{construction}
It remains to show that the local codes are MSR codes and the conditions under which the code is a PMDS or SD code.
\begin{theorem}\label{thm:PMDSMSR}
Let $\mu,n,r$ and $s=2$ be valid PMDS parameters, $d$ be an integer with $n-r\leq d \leq n-1$, and $q > \max\{\mu N, bn\}$ with
  \begin{equation*}
    N=(r+1)(rn-1-r)+1 \ .
  \end{equation*}
  Then the code $\code(\mu ,n,r,2,N,d;\ell)_{q}$ as in~\cref{con:SDcodesLocalRegeneration} is a locally $d$-MSR $\pmds(\mu,n,r,2,\cW;b^{n})$ code over $\F_q$, as in~\cref{def:locallyMSR}, for $\cW = \{W_1,\ldots,W_\mu\}$ with $W_i = [(i-1)n+1, in]$.
\end{theorem}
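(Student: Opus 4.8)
The plan is to separate the two requirements of \cref{def:locallyMSR}: that $\code := \code(\mu,n,r,2,N,d;\ell)_q$ is a $\pmds(\mu,n,r,2,\cW;b^n)$ code, and that each local code $\code|_{W_i}$ is a $d$-MSR code. The crucial structural observation is that the defining constraint $\H^{(a)}\bC_{a,:}=\0$ involves only the $a$-th row of the array, so $\code$ is exactly the array code whose $a$-th row is an arbitrary element of the scalar code $\code^{(a)} := \{\bc\in\F_q^{\mu n} : \H^{(a)}\bc=\0\}$, with distinct rows completely uncoupled. Reading off the block structure of $\H^{(a)}$, the code $\code^{(a)}$ is precisely the generalized Blaum code $\code(\mu,n,r,2,\L_a,N)$ of \cref{sec:BlaumSDandPMDS}: the diagonal block $\H_0^{(a)}$ of \eqref{eq:parityCheckLocalS2} is the matrix $\H_0$ with the $\beta$-powers $i_1,\dots,i_n$ replaced by $e_1,\dots,e_n$, where $\beta^{e_k}$ denotes the locator in column $k$; and the bottom block $\H_{j+1}^{(a)}$, whose two rows are $(\beta^{re_k})_k$ and $(\beta^{-jN}\,\beta^{-e_k})_k=(\beta^{-jN-e_k})_k$, coincides with $\H_{j+1}$. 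Here $\L_a=\{e_1,\dots,e_n\}$, and since the $bn$ locators $\{\beta_{i,j}\}$ are distinct powers of $\beta$ with exponents in $\{0,1,\dots,bn-1\}$, we have $|\L_a|=n$ and $\max_{e\in\L_a}e\le bn-1$.

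First I would establish the PMDS property. Because $d\le n-1$ we have $b=d+1-(n-r)\le r$, hence $\max_{e\in\L_a}e\le bn-1\le rn-1$, so $N=(r+1)(rn-1-r)+1\ge(r+1)\bigl(\max_{e\in\L_a}e-r\bigr)+1$; moreover $\order(\beta)\ge\mu N$ holds by \cref{con:SDcodesLocalRegeneration}. Therefore \cref{lem:PMDSgeneral} applies to $\L=\L_a$ and shows that every $\code^{(a)}$ is a (scalar) PMDS code. Consequently, for any $R\subseteq W_i$ with $|R|=n-r$, and for any $R=[\mu n]\setminus\bigcup_{i}E_i$ with $E_i\subset W_i$, $|E_i|=r$, each punctured code $\code^{(a)}|_R$ is an MDS code with the length and dimension prescribed by \cref{def:pmds}. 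Since $\code|_R$ is the array code whose $a$-th row is $\code^{(a)}|_R$ (rows uncoupled), any column-erasure pattern that is correctable in every row is correctable in the array, so $\code|_R$ is the corresponding MDS array code. This verifies both bullets of \cref{def:pmds}, i.e., $\code$ is a $\pmds(\mu,n,r,2,\cW;b^n)$ code with $\cW$ as stated.

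Next I would identify each local code with a Ye--Barg code. By the row decomposition, $\code|_{W_i}$ is the array code whose $a$-th row is $\code^{(a)}|_{W_i}$; since $\code^{(a)}$ is PMDS this is an $[n,n-r,r+1]$ MDS code, and it is contained in $\ker\H_0^{(a)}$, which also has dimension $n-r$, so $\code^{(a)}|_{W_i}=\ker\H_0^{(a)}$. Comparing $\H_0^{(a)}$ in \eqref{eq:parityCheckLocalS2} with the matrix $\Ha$ of \cref{def:yeBarg}, and noting that $b=d+1-n+r$, $\ell=b^n$, and $q\ge bn$ all match the requirements there, we conclude that $\code|_{W_i}$ is exactly a Ye--Barg $d$-MSR code on the locator set $\{\beta_{i,j}\}$. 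As remarked after \cref{def:yeBarg}, such a code is (row-wise MDS and) $d$-MSR. The same holds for every $i\in[\mu]$ because the identical block $\H_0^{(a)}$ sits on every diagonal position of $\H^{(a)}$. Together with the previous paragraph this gives that $\code$ is a locally $d$-MSR $\pmds(\mu,n,r,2,\cW;b^n)$ code, as required.

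The main obstacle --- apart from carefully matching the mildly inconsistent index conventions for the locators $\beta_{i,j}$ between \cref{con:SDcodesLocalRegeneration} and \cref{def:yeBarg} --- is the conceptual step underlying the first two paragraphs: recognizing that the \emph{global} PMDS coupling of the construction lives entirely within each individual row of the array (so \cref{lem:PMDSgeneral} applies verbatim, row by row), while the subpacketization/regeneration structure lives across rows but only inside a single local group (so each $\code|_{W_i}$ is literally a Ye--Barg code). The only genuine computation is then verifying that the explicit choice $N=(r+1)(rn-1-r)+1$ dominates the bound of \cref{lem:PMDSgeneral} uniformly in $a$, which reduces to the estimate $\max_{e\in\L_a}e\le bn-1\le rn-1$ that follows from $b\le r$.
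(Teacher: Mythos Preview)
Your proposal is correct and follows essentially the same approach as the paper: decompose the array code row by row, identify the $a$-th row as the generalized Blaum code $\code(\mu,n,r,2,\L^{(a)},N)$ and apply \cref{lem:PMDSgeneral} via the estimate $\max_{e\in\L^{(a)}}e\le bn-1\le rn-1$ (using $b\le r$), and then recognize each local code $\code|_{W_i}$ as a Ye--Barg code by comparing $\H_0^{(a)}$ with \cref{def:yeBarg}. Your write-up is in fact slightly more careful than the paper's in justifying that $\code^{(a)}|_{W_i}=\ker\H_0^{(a)}$ and in flagging the indexing mismatch for the $\beta_{i,j}$.
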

\begin{IEEEproof}
  First, note that the $\beta_{i,j}$ in~\cref{con:SDcodesLocalRegeneration} are the (distinct) elements $\beta^0,\beta^1,\ldots,\beta^{bn-1}$. In order for the $\beta^i$ to be distinct, we require $\cO(\beta) \geq bn$, i.e., $q > bn$. Now consider the $j$-th local group. The $a$-th row fulfills the parity check equations given in~\cref{eq:parityCheckLocalS2} and since all elements $\beta_{i,j}$ are distinct, it is immediate that the local group is an $[n,n-r; b^{n}]$ Ye-Barg code as in~\cref{def:yeBarg}.

  For the PMDS property, observe that the $a$-th row, i.e., the row fulfilling the parity-check equations $\bH^{(a)}$, is a code $\code(\mu,n,r,2,\cL^{(a)},N)$ as in~\cref{sec:BlaumSDandPMDS}, where $\cL^{(a)} = \{i-1+(a_i-1)n \ | \ i \in [n]\}$ by definition of the $\beta_{i,j}$. For any $a$ it holds that
  \begin{align*}
    \max_{i \in \cL^{(a)}} i \leq \max_{\substack{i \in \cL^{(a)}\\a\in [0,\ell-1]}} i = rn -1 \ .
  \end{align*}
  By~\cref{lem:PMDSgeneral} a code as in~\cref{sec:BlaumSDandPMDS} is PMDS if $N > (r+1)(\max_{i\in \cL}i -r)$ and the lemma statement follows.
\end{IEEEproof}

\begin{corollary}\label{cor:two_parities_construction}
Let $\mu,n,r$ and $s=2$ be valid PMDS parameters and $d$ be an integer with $n-r\leq d \leq n-1$.
Then, there is a $d$-MSR PMDS code over $\Fq$ with these parameters of field size
\begin{align*}
\mu r(rn-r+n-2)+1 \leq q \leq 2\mu r(rn-r+n-2)
\end{align*}
and subpacketization $\ell = [d+1-(n-r)]^n$.
\end{corollary}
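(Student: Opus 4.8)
The plan is to read the corollary off from \cref{thm:PMDSMSR} together with an elementary density argument for powers of two, so that essentially all of the work is bookkeeping. \cref{thm:PMDSMSR} already guarantees that, for the distinguished value $N=(r+1)(rn-1-r)+1$ and \emph{any} power of two $q$ with $q>\max\{\mu N,\,bn\}$ where $b=d+1-(n-r)$, the array code $\code(\mu,n,r,2,N,d;\ell)_q$ of \cref{con:SDcodesLocalRegeneration} is a locally $d$-MSR $\pmds(\mu,n,r,2,\cW;b^n)$ code. Hence it suffices to (i) rewrite $N$ in the closed form that appears in the claimed interval, (ii) show that the field-size condition $q>\max\{\mu N,\,bn\}$ collapses to $q>\mu N$, and (iii) exhibit a power of two $q$ in the half-open interval $(\mu N,\,2\mu N]$.

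For (i) I would expand $N=(r+1)(rn-1-r)+1$ and collect terms, obtaining
\begin{align*}
N=(r+1)(rn-1-r)+1 = r^2n-r^2+rn-2r = r(rn-r+n-2),
\end{align*}
so that $\mu N=\mu r(rn-r+n-2)$, which is exactly one less than the left endpoint of the claimed range. For (ii), the hypothesis $d\le n-1$ forces $b=d+1-(n-r)\le r$, hence $bn\le rn$; and since $\mu\ge 2$ (part of being valid PMDS parameters) while $r\ge 1$ and $n\ge 2$ are forced by $n-r\le d\le n-1$ together with $r<n$, a short estimate gives $\mu N=\mu r(rn-r+n-2)\ge rn\ge bn$, so that $\max\{\mu N,\,bn\}=\mu N$. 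The only slightly delicate point is that this last inequality is tight at the extremal parameters $n=2$, $r=1$, $\mu=2$; I would check that case (and the conclusion $\max\{\mu N,bn\}=\mu N$) directly.

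For (iii) I would use the fact that every integer $x\ge 1$ admits a power of two in $(x,\,2x]$: if $2^{k-1}$ denotes the largest power of two not exceeding $x$, then $2^{k}>x$ while $2^{k}=2\cdot 2^{k-1}\le 2x$. Taking $x=\mu N$ yields a power of two $q$ with $\mu r(rn-r+n-2)+1\le q\le 2\mu r(rn-r+n-2)$. For such a $q$, the multiplicative group $\F_q^{\ast}$ is cyclic of order $q-1\ge\mu N=\max\{\mu N,\,bn\}$, so a primitive element $\beta$ has order at least $\max\{\mu N,\,bn\}$, exactly as \cref{con:SDcodesLocalRegeneration} requires. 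Then \cref{thm:PMDSMSR} certifies that $\code(\mu,n,r,2,N,d;\ell)_q$ is a locally $d$-MSR PMDS code over $\F_q$ with subpacketization $\ell=b^n=(d+1-(n-r))^n$, which is the claim. I do not anticipate a genuine obstacle: the only care required is in the arithmetic simplification of $N$ and in confirming that $\mu N$ dominates $bn$ over the full valid-parameter range.
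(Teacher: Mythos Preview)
Your proposal is correct and follows essentially the same strategy as the paper: invoke \cref{thm:PMDSMSR}, simplify $N=(r+1)(rn-1-r)+1$ to $r(rn-r+n-2)$, argue that $\mu N$ dominates $bn$, and then locate a suitable $q$ in the interval $(\mu N,2\mu N]$.

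The one genuine difference is in step (iii). The paper appeals to Bertrand's postulate to place a \emph{prime} in $(\mu N,2\mu N]$, whereas you use the elementary observation that there is always a \emph{power of two} in $(x,2x]$. Your choice is actually better aligned with \cref{con:SDcodesLocalRegeneration}, which explicitly requires $q$ to be a power of $2$; the paper's invocation of Bertrand is a minor slip in this respect (unless one separately argues that the underlying construction from \cite{blaum2016construction} works over arbitrary prime fields). Your argument is also more elementary, since it avoids Bertrand altogether. As a small remark, the ``tight'' case $n=2$, $r=1$, $\mu=2$ you flag is not a valid PMDS parameter set for $s=2$ (it violates $s\le(n-r)(\mu-1)$), but this does not affect your inequality $\mu N\ge bn$, which still holds throughout the valid range.
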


\begin{IEEEproof}
We use Theorem~\ref{thm:PMDSMSR} and derive bounds on the smallest field size $q$ satisfying the bound $q > \max\{\mu N, bn\}$ with $N=(r+1)(rn-1-r)+1 = r(rn-r+n-2)$.

First note that $1 \leq b = d+1-(n-r) \leq r$ for the valid choices of $d$. Furthermore, note that $r \geq 1$ and $n\geq r+1\geq 2$.
Thus, we have
\begin{align*}
  \mu N &= \mu r(rn-r+n-2)\\
&= \mu \Big[rn+\underbrace{r^2(n-1)-2r}_{\geq -1}\Big] \\ %
&\geq \mu(rn-1) \geq rn \geq bn \ .
\end{align*}
Hence, we in fact only require $q > \mu N$. There is a prime power\footnote{Trivially, there is a power of two in this range. Further, by Bertrand's postulate, there is even a prime number within this range.} between $\mu N +1$ and $2\mu N$, which proves the claim.
\end{IEEEproof}

\begin{theorem}\label{thm:SDMSR}
  Let $\mu,n,r$ and $s=2$ be valid PMDS parameters and $q > \max\{rn\mu, bn\}$. Then the code $\code(\mu ,n,r,2,rn,d;\ell)_{q}$ as in~\cref{con:SDcodesLocalRegeneration} is a locally $d$-MSR $\sd(\mu,n,r,s,\cW;b^{n})$ code over $\F_q$, for $\cW = \{W_1,\ldots,W_\mu\}$ with $W_i = [(i-1)n+1, in]$.
\end{theorem}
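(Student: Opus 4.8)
The plan is to mirror the proof of \cref{thm:PMDSMSR} step for step, the only difference being that the row-wise appeal to \cref{lem:PMDSgeneral} is replaced by \cref{lem:SDgeneral}, which only requires the much weaker bound $N\ge\max_{i\in\L}i+1$ instead of a bound quadratic in $r$; this is exactly why a smaller field suffices. Concretely, two things have to be checked: (i) each local code $\code|_{W_i}$ is a $d$-MSR code, and (ii) the array code corrects every admissible SD erasure pattern, i.e.\ a common set $E\subset[n]$ of $r$ coordinates erased in all $\mu$ local groups together with any two further arbitrary erasures.

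For (i) I would reuse the argument from \cref{thm:PMDSMSR} verbatim: the $\beta_{i,j}=\beta^{i-1+(j-1)n}$ are precisely the $bn$ powers $\beta^{0},\dots,\beta^{bn-1}$, which are pairwise distinct exactly because $\cO(\beta)\ge bn$, i.e.\ $q>bn$; hence, within any fixed local group, the $a$-th row satisfies the parity-check equations $\H_0^{(a)}$ of \cref{eq:parityCheckLocalS2} with $n$ distinct evaluation points, so the local code is an $[n,n-r;b^{n}]$ Ye--Barg code as in \cref{def:yeBarg} and therefore a (row-wise MDS) $d$-MSR code.

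For (ii) I would proceed row by row. The $a$-th row of the array --- the codeword annihilated by $\H^{(a)}$ --- is a codeword of the code $\code(\mu,n,r,2,\L^{(a)},N)$ from \cref{sec:BlaumSDandPMDS} with $N=rn$ and $\L^{(a)}=\{\,i-1+(a_i-1)n \mid i\in[n]\,\}$, since $\H^{(a)}$, $\H_0^{(a)}$, $\H_{j+1}^{(a)}$ have exactly the shape of $\H$, $\H_0$, $\H_{j+1}$ with $\beta^{i_\xi}$ replaced by $\beta_{a_\xi,\xi}$. As in \cref{thm:PMDSMSR} one has $\max_{i\in\L^{(a)}}i\le rn-1$ for every $a\in[0,\ell-1]$, so $N=rn\ge \max_{i\in\L^{(a)}}i+1$ and \cref{lem:SDgeneral} applies: every row of the array is an SD code. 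Since an SD erasure pattern, by definition, erases the same coordinate set in each local group plus two arbitrary extra coordinates, it is corrected independently in each of the $\ell$ rows; together with (i) this shows that $\code(\mu,n,r,2,rn,d;\ell)_q$ is a locally $d$-MSR $\sd(\mu,n,r,s,\cW;b^{n})$ code for $\cW=\{W_1,\dots,W_\mu\}$ with $W_i=[(i-1)n+1,in]$.

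Finally, the two steps impose only $q>bn$ (distinctness of the $\beta_{i,j}$) and $q>\mu N=rn\mu$ (needed in \cref{lem:SDgeneral} so that $N\mu-1<\cO(\beta)$), i.e.\ $q>\max\{rn\mu,bn\}$, matching the hypothesis. I do not expect a genuine obstacle: essentially all the work is already done in \cref{thm:PMDSMSR} and \cref{lem:SDgeneral}. The only point worth double-checking is that the SD property genuinely only needs $N\ge \max_{i\in\L}i+1$ rather than the bound of \cref{lem:PMDSgeneral} --- which is precisely what makes the field-size requirement drop from roughly $\mu r(rn-r+n-2)$ down to $rn\mu$ --- and that $\max_{i\in\L^{(a)}}i\le rn-1$ holds uniformly in $a$, both of which are immediate from the definition of the $\beta_{i,j}$.
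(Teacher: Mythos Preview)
Your proposal is correct and follows exactly the paper's approach: the paper's own proof consists of the single sentence ``The proof follows immediately from the proof of~\cref{thm:PMDSMSR} and~\cref{lem:SDgeneral},'' and you have simply spelled out what that sentence means --- reuse the $d$-MSR argument for the local codes verbatim, then apply \cref{lem:SDgeneral} row by row with $N=rn\ge\max_{i\in\L^{(a)}}i+1$ in place of \cref{lem:PMDSgeneral}.
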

\begin{IEEEproof}
  The proof follows immediately from the proof of~\cref{thm:PMDSMSR} and~\cref{lem:SDgeneral}.
\end{IEEEproof}

\begin{remark}
It is easy to check that by removing the last row of the parity-check matrix \eqref{eq:PMDS_two_parities_parity_check_matrix} of the PMDS codes in \cite{blaum2016construction}, we obtain a PMDS code with one global parity ($s=1$). By the same operation on all the parity-check matrices for the rows of the $d$-MSR PMDS code in \cref{con:SDcodesLocalRegeneration}, we similarly obtain $d$-MSR PMDS codes with one global parity.
We do not discuss this case in detail since the resulting codes have the same field size as the ones with two global parities.
\end{remark}

\section{Universal PMDS codes with local row-wise MDS MSR codes}\label{sec:universalPMDSconstruction}

In this section, we present a general technique for constructing PMDS codes with MSR local codes, by combining an arbitrary row-wise MDS MSR code (cf.~\cref{def:rowWiseMDS}) with a universal PMDS code family.
The latter notion was first defined in \cite{martinez2019universal}, and we formalize its definition below in~\cref{def:universal_pmds_family}.
Roughly speaking, a universal PMDS code family arises from a PMDS construction in which the local code can be chosen arbitrarily as the $\FqM$-span of an $\Fq$-linear MDS code.
Although the universality requirement seems to be strong, there are several PMDS constructions in the literature that fulfill this property, for instance \cite{rawat2014,martinez2019universal} (cf.~the overview in \cite{martinez2019universal}).
For the construction of \cite{gabrys2018constructions}, we show its universality in~\cref{ssec:Gabrys_et_al_universal_PMDS_construction}.
Hence, some of the PMDS constructions with the smallest field sizes in the literature have this property, which enables the new general construction to achieve rather small field sizes as well.
Note that the PMDS construction with local regeneration in~\cref{sec:constructionS2} is not of the type presented here, since the PMDS family in \cite{blaum2016construction} is not universal (due to strong dependencies between the choice of the local and global parities).

\subsection{A general code construction}

The following definition formalizes the notion of universal PMDS code family, which was introduced in \cite{martinez2019universal}.

\begin{definition}[Universal Partial MDS code family]\label{def:universal_pmds_family}
Let $\mu,n,r,s$ be valid PMDS parameters.
A family of codes is a universal PMDS code family $\pmdsfamily(\mu,n,r,s)$ over $\FqM$ if there is a partition $\mathcal{W} = \{W_1,W_2,\ldots,W_{\mu}\}$ (fixed for the entire family) such that
\begin{itemize}
\item every code $\code \in \pmdsfamily(\mu,n,r,s)$ is a $\pmds(\mu,n,r,s,\cW;1)$ code over $\FqM$ and
\item for any MDS code $\code_{\mathsf{local}}[n,n-r,r+1]$ over $\Fq$, there is exactly one $\code \in \pmdsfamily(\mu,n,r,s)$ such that $\code|_{W_i} = \langle\Clocal\rangle_{\FqM} \simeq \code_{\mathsf{local}}^{\times M}$ %
  for all $i=1,\dots,\mu$ (see~\cref{fig:universal_PMDS_illustration} for an illustration of this property). We denote this unique code by $\family(\code_{\mathsf{local}}) := \code$ (i.e., $\family(\cdot)$ is an injective mapping between the set of MDS codes and the family $\family$).
\end{itemize}
\end{definition}

\begin{figure*}[ht]
\begin{center}
\resizebox{0.7\textwidth}{!}{
\begin{tikzpicture}
\def\n{2.5cm}
\def\cwheight{0.4cm}
\def\yshift{-2cm}

\node[left, align=right] at (-0.1cm, 0.5*\cwheight) {Codeword of $\family(\code_{\mathsf{local}})$ $=$};
\draw (0,0) rectangle node {$\in \langle\Clocal\rangle_{\FqM}\vphantom{^{\FqM}}$} (\n,\cwheight);
\draw (\n,0) rectangle node {$\in \langle\Clocal\rangle_{\FqM}\vphantom{^{\FqM}}$} (2*\n,\cwheight);
\draw (2*\n,0) rectangle node {$\cdots$} (4*\n,\cwheight);
\draw (4*\n,0) rectangle node {$\in \langle\Clocal\rangle_{\FqM}\vphantom{^{\FqM}}$} (5*\n,\cwheight);
\node[right] at (5*\n,0.4*\cwheight) {$\in \FqM^{\mu n}$};
\draw [decorate,decoration={brace,amplitude=10pt},xshift=0pt,yshift=2pt] (0,\cwheight) -- (\n,\cwheight) node [black,midway,yshift=0.6cm] {$W_1$};
\draw [decorate,decoration={brace,amplitude=10pt},xshift=0pt,yshift=2pt] (\n,\cwheight) -- (2*\n,\cwheight) node [black,midway,yshift=0.6cm] {$W_2$};
\draw [decorate,decoration={brace,amplitude=10pt},xshift=0pt,yshift=2pt] (4*\n,\cwheight) -- (5*\n,\cwheight) node [black,midway,yshift=0.6cm] {$W_\mu$};
\draw[|->, >=latex] (2.5*\n,-0.5*\cwheight) -- node[left] {Expand every entry in $\FqM$ as a} node[right] {column vector $\Fq^M$ using a basis of $\FqM$ over $\Fq$} (2.5*\n,\yshift+0.5*\cwheight);

\draw [decorate,decoration={brace,amplitude=10pt},xshift=-2pt,yshift=0pt] (0,\yshift-5*\cwheight) -- (0,\yshift) node [left,black,midway,xshift=-0.4cm] {$M$ rows};
\draw (0*\n,\yshift-0*\cwheight) rectangle node {$\in \Clocal$} (1*\n,\yshift-1*\cwheight);
\draw (0*\n,\yshift-1*\cwheight) rectangle node {$\in \Clocal$} (1*\n,\yshift-2*\cwheight);
\draw (0*\n,\yshift-2*\cwheight) rectangle node[yshift=0.2*\cwheight] {$\vdots$} (1*\n,\yshift-4*\cwheight);
\draw (0*\n,\yshift-4*\cwheight) rectangle node {$\in \Clocal$} (1*\n,\yshift-5*\cwheight);
\draw (1*\n,\yshift-0*\cwheight) rectangle node {$\in \Clocal$} (2*\n,\yshift-1*\cwheight);
\draw (1*\n,\yshift-1*\cwheight) rectangle node {$\in \Clocal$} (2*\n,\yshift-2*\cwheight);
\draw (1*\n,\yshift-2*\cwheight) rectangle node[yshift=0.2*\cwheight] {$\vdots$} (2*\n,\yshift-4*\cwheight);
\draw (1*\n,\yshift-4*\cwheight) rectangle node {$\in \Clocal$} (2*\n,\yshift-5*\cwheight);
\draw (2*\n,\yshift-0*\cwheight) rectangle node[yshift=0.2*\cwheight] {$\ddots$} (4*\n,\yshift-5*\cwheight);
\draw (4*\n,\yshift-0*\cwheight) rectangle node {$\in \Clocal$} (5*\n,\yshift-1*\cwheight);
\draw (4*\n,\yshift-1*\cwheight) rectangle node {$\in \Clocal$} (5*\n,\yshift-2*\cwheight);
\draw (4*\n,\yshift-2*\cwheight) rectangle node[yshift=0.2*\cwheight] {$\vdots$} (5*\n,\yshift-4*\cwheight);
\draw (4*\n,\yshift-4*\cwheight) rectangle node {$\in \Clocal$} (5*\n,\yshift-5*\cwheight);
\node[right] at (5*\n,\yshift-2.5*\cwheight) {$\in \Fq^{M \times (\mu n)}$};

\end{tikzpicture}
}
\end{center}
\caption{Illustration of the codeword structure of the PMDS code $\family(\code_{\mathsf{local}})$ in~\cref{def:universal_pmds_family}.}
\label{fig:universal_PMDS_illustration}
\end{figure*}

The following code construction combines a universal PMDS code family and a row-wise MSR code.
Note that the code is well-defined since the MSR code is row-wise MDS (i.e., $\family(C_{\msr}^{(a)})$ is well-defined for all rows $a$ of the MSR code).

\begin{construction}\label{constr:general_universal_pmds_row_wise_MSR_construction}
Let $\mu,n,r,s$ be valid PMDS parameters
and $\pmdsfamily(\mu,n,r,s)$ be a universal PMDS code family.
Let $\code_{\msr}[n,n-r;\ell]$ be a row-wise MDS $(h,d)$-MSR code and denote by $\code_{\msr}^{(a)}$ the MDS code in its $a$-th row for $a=0,\dots,\ell$.
We define the code
\begin{align*}
  \pmdsfamily&(\code_{\msr}) \coloneqq \\
  &\left\{
  \bC \in \F_{q^M}^{\ell \times \mu n} \, : \, \bC_{a,:}  \in \family(\code_{\msr}^{(a)}) \, \forall \, a=0,\dots,\ell-1
\right\}.
\end{align*}
\end{construction}

\begin{theorem}\label{thm:universal_construction_general_statement}
  The code $\pmdsfamily(\code_{\msr})$ in~\cref{constr:general_universal_pmds_row_wise_MSR_construction} is a locally $(h,d)$-MSR $\pmds(\mu,n,r,s,\cW;\ell)$ code over $\FqM$, for a partition $\cW = \{W_1,W_2,\ldots,W_{\mu}\}$ of $[\mu n]$ with $|W_i|=n \ \forall \ i\in [\mu]$.
\end{theorem}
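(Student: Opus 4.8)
The plan is to verify the three defining properties of a locally $(h,d)$-MSR PMDS code separately, leveraging the structure of \cref{constr:general_universal_pmds_row_wise_MSR_construction} row by row. The partition $\cW$ is inherited directly: since $\pmdsfamily(\mu,n,r,s)$ is a universal PMDS family, it comes equipped with a fixed partition $\cW = \{W_1,\dots,W_\mu\}$ with $|W_i| = n$, and we take this same $\cW$ for $\pmdsfamily(\code_{\msr})$. It is also immediate from the construction that $\pmdsfamily(\code_{\msr})$ is $\Fq$-linear over $\FqM$ (each row constraint $\bC_{a,:}\in\family(\code_{\msr}^{(a)})$ is $\FqM$-linear, hence the whole array code is linear over $\FqM$), and that the dimension is $\ell$ times the dimension of each $\family(\code_{\msr}^{(a)})$, namely $\ell\big((n-r)\mu - s\big)$ as required for a $\pmds(\mu,n,r,s,\cW;\ell)$ code.

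First I would establish the \emph{local MSR property}. Fix a local group $W_i$. For each row $a$, the restriction $\family(\code_{\msr}^{(a)})|_{W_i}$ equals $\langle \code_{\msr}^{(a)}\rangle_{\FqM} \simeq (\code_{\msr}^{(a)})^{\times M}$ by \cref{def:universal_pmds_family}. Unwinding the construction, $\code(\mu n)|_{W_i}$ — viewed as an array code over $\Fq$ — is, up to the fixed $\FqM/\Fq$ basis expansion applied to each of the $M$ copies, exactly the code $\code_{\msr}$ itself (its $a$-th row is the $\FqM$-span of the $a$-th row of $\code_{\msr}$). Since $\code_{\msr}$ is an $(h,d)$-MSR code, and taking the $\FqM$-span of a code preserves MDS-ness and the regeneration bound of \cref{def:regeneratingCode} (by \cref{col:cartesianCodeParameters} for the MDS part; the repair scheme carries over verbatim because downloading a symbol of $\FqM$ is the same as downloading $M$ symbols of $\Fq$ from that node in parallel), $\code|_{W_i}$ is an $(h,d)$-MSR $[n,n-r,r+1;\ell]$ code. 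This gives both the local-MDS condition of \cref{def:pmds} and the locally $(h,d)$-MSR condition of \cref{def:locallyMSR}.

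Next, the \emph{global PMDS property}. Fix erasure sets $E_i\subset W_i$ with $|E_i|=r$ for all $i$, and let $R = [\mu n]\setminus\bigcup_i E_i$; we must show $\code|_R$ is an $[\mu n - r\mu,\ (n-r)\mu - s,\ s+1;\ell]$ MDS code. I would argue row-by-row: $\code|_R$ has the property that $\bC_{a,:}|_R$ ranges over $\family(\code_{\msr}^{(a)})|_R$, and each $\family(\code_{\msr}^{(a)})$ is a $\pmds(\mu,n,r,s,\cW;1)$ code over $\FqM$, so its puncturing $\family(\code_{\msr}^{(a)})|_R$ is an $[\mu n - r\mu,\ (n-r)\mu - s,\ s+1]$ MDS code over $\FqM$ by the second bullet of \cref{def:pmds}. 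An array code whose every row is the same fixed MDS code (here the code depends on $a$, but all of them have the same length $\mu n - r\mu$, dimension $(n-r)\mu-s$, and distance $s+1$) — is itself an MDS code of those parameters: the minimum distance of the array code is the minimum over rows of the row-code distances (a nonzero array has a nonzero row), which is $s+1$, and the dimension over $\FqM$ multiplies by $\ell$, matching the Singleton bound. Hence $\code|_R$ is MDS with the claimed parameters, completing the verification.

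The main obstacle I anticipate is the local MSR step — specifically, making precise that the $\FqM$-span operation is compatible with the MSR repair scheme, i.e., that $\langle\code_{\msr}\rangle_{\FqM}$ inherits the $(h,d)$-MSR property from $\code_{\msr}$ and not merely the MDS property. The subtlety is that a priori the repair of $\langle\code_{\msr}\rangle_{\FqM}$ could use $\FqM$-linear (rather than $\Fq$-linear) combinations of downloaded data, but since $M(\code,\mathcal F,\mathcal R)$ counts symbols of the base field and the bound in \eqref{eq:boundRegenrating} scales linearly with $\ell$, one checks that applying the repair scheme of $\code_{\msr}$ coordinatewise across the $M$ copies achieves the bound $\tfrac{hd\ell}{h+d-n+r}$ for $\langle\code_{\msr}\rangle_{\FqM}$, and the converse bound is automatic from \eqref{eq:boundRegenrating}. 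Everything else is bookkeeping built on \cref{def:universal_pmds_family}, \cref{col:cartesianCodeParameters}, and the fact that a "row-wise" array code over a fixed collection of MDS codes with common parameters is MDS.
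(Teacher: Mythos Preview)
Your proposal is correct and follows essentially the same approach as the paper: verify the PMDS property row-by-row (each row is a PMDS code from the family, hence its puncturing is MDS), and obtain the local $(h,d)$-MSR property by observing that the local code, after expanding $\FqM$ over $\Fq$, decomposes as $M$ independent copies of $\code_{\msr}$, which can therefore be repaired coordinatewise. The paper's proof is terser on the global-PMDS step (it simply says ``the claim follows by the definition''), while you spell it out explicitly, and your ``main obstacle'' paragraph identifies precisely the point the paper handles via the row-rearrangement illustrated in \cref{fig:universal_PMDS_local_regeneration}.
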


\begin{IEEEproof}
By construction, the codewords of $\pmdsfamily(\code_{\msr})$ are matrices whose rows are contained in a PMDS code of the family $\pmdsfamily$. In particular, the PMDS code in the $a$-th row has the MDS code $C_{\msr}^{(a)}$ as its local code. If we puncture all rows in all positions but $W_i$ (for some $i=1,\dots,\mu$), we obtain in the $a$-th row the code $C_{\msr}^{(a)}$. Hence,
\begin{align*}
  \pmdsfamily&(\code_{\msr})|_{W_i} = \\
  &\left\{ \begin{bsmallmatrix}
  \bC_{0,:}^{(0)} \\
  \vdots \\
  \bC_{M-1,:}^{(0)} \\
  \bC_{0,:}^{(1)}\\
  \vdots 	   \\
  \bC_{M-1,:}^{(\ell-1)}
\end{bsmallmatrix} \in \Fq^{M\ell \times n} \, : \, \bC_{j,:}^{(a)} \in C_{\msr}^{(a)} \, \forall \, \substack{a=0,\dots,\ell-1 \\ j=0,\dots,M-1}\right\} \\
&\simeq \underbrace{\code_{\msr} \times \dots \times \code_{\msr}}_{M \text{ times}}  \quad \forall \, i=1,\dots,\mu,
\end{align*}
where the last step follows by re-arranging the rows of a codeword (see~\cref{fig:universal_PMDS_local_regeneration} for an illustration).
Hence, the overall local code is a product of $M$ $d$-MSR codes, and hence is $d$-MSR itself.
The claim follows by the definition of $d$-MSR PMDS codes.
\end{IEEEproof}

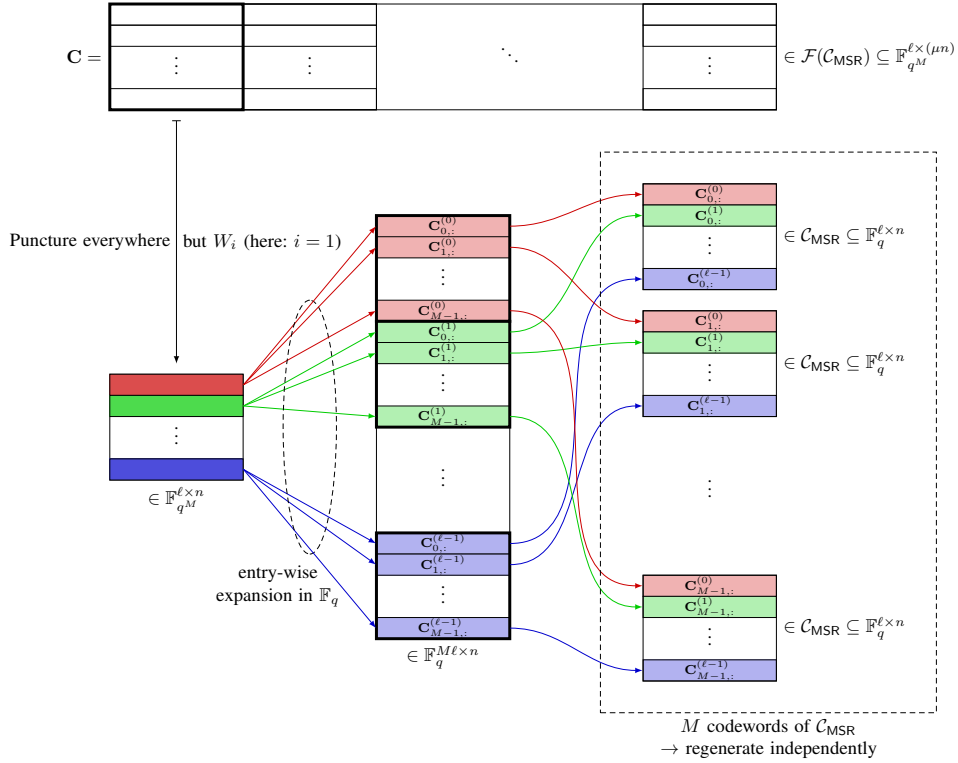
\begin{figure*}[ht]
\begin{center}
\resizebox{0.7\textwidth}{!}{\begin{tikzpicture}
\def\n{2.5cm}
\def\cwheight{0.4cm}
\def\yshift{-7cm}
\def\xshiftexp{2*\n}
\def\xshiftsort{4*\n}
\def\yshiftexp{\yshift+2.5*\cwheight}
\def\yshiftsortone{\yshift+9*\cwheight}
\def\yshiftsorttwo{\yshift+3*\cwheight}
\def\yshiftsortdots{\yshift-3*\cwheight}
\def\yshiftsortthree{\yshift-9.5*\cwheight}
\definecolor{cw1expcolor}{rgb}{0.8,0,0}
\definecolor{cw2expcolor}{rgb}{0,0.8,0}
\definecolor{cw3expcolor}{rgb}{0,0,0.8}
\newcommand{\myfontsize}{\scriptsize}

\def\myeps{2*\cwheight}

\tikzstyle{codeword}=[draw,rectangle, minimum width=\n, minimum height=\cwheight, inner sep=0pt]

\node[codeword, minimum height=5*\cwheight, minimum width=5*\n] (cw) at (2*\n,0) {};
\node[right] at (cw.east) {$\in \family(\code_{\msr}) \subseteq \FqM^{\ell \times (\mu n)}$};
\node[codeword] (cw11) at (0,2*\cwheight) {};
\node[codeword] (cw21) at (0,1*\cwheight) {};
\node at (0,-0.15*\cwheight) {$\vdots$};
\node[codeword] (cw31) at (0,-2*\cwheight) {};
\node[codeword, minimum height=5*\cwheight, ultra thick] (cw1) at (0,0) {};
\node[left] at (cw1.west) {$\bC = $};

\node[codeword] (cw12) at (\n,2*\cwheight) {};
\node[codeword] (cw22) at (\n,1*\cwheight) {};
\node at (\n,-0.15*\cwheight) {$\vdots$};
\node[codeword] (cw32) at (\n,-2*\cwheight) {};
\node[codeword, minimum height=5*\cwheight] at (\n,0) {};

\node[codeword] (cw13) at (4*\n,2*\cwheight) {};
\node[codeword] (cw23) at (4*\n,1*\cwheight) {};
\node at (4*\n,-0.15*\cwheight) {$\vdots$};
\node[codeword] (cw33) at (4*\n,-2*\cwheight) {};
\node[codeword, minimum height=5*\cwheight] at (4*\n,0) {};

\node at (2.5*\n,0.35*\cwheight) {$\ddots$};

\draw[|->, >=latex] (0,-3*\cwheight) -- node[left] {Puncture everywhere} node[right] {but $W_i$ (here: $i=1$)} (0,\yshift+3*\cwheight);

\node[codeword, minimum height=5*\cwheight] (cwpunc) at (0,\yshift) {};
\node[codeword, fill=cw1expcolor!70] (cw1punc) at (0,\yshift+2*\cwheight) {};
\node[codeword, fill=cw2expcolor!70] (cw2punc) at (0,\yshift+1*\cwheight) {};
\node at (0,\yshift-0.15*\cwheight) {$\vdots$};
\node[codeword, fill=cw3expcolor!70] (cw3punc) at (0,\yshift-2*\cwheight) {};

\node[below] at (cwpunc.south) {$\in \FqM^{\ell \times n}$};

\node[codeword, minimum height=20*\cwheight] (cwexp) at (\xshiftexp,\yshiftexp-2.5*\cwheight) {};
\node[below] at (cwexp.south) {$\in \Fq^{M\ell \times n}$};

\node[codeword, fill=cw1expcolor!30] (cw11exp) at (\xshiftexp,\yshiftexp+7*\cwheight) {\myfontsize $\bC^{(0)}_{0,:}$};
\node[codeword, fill=cw1expcolor!30] (cw12exp) at (\xshiftexp,\yshiftexp+6*\cwheight) {\myfontsize $\bC^{(0)}_{1,:}$};
\node at (\xshiftexp,\yshiftexp+5*\cwheight-0.15*\cwheight) {$\vdots$};
\node[codeword, fill=cw1expcolor!30] (cw13exp) at (\xshiftexp,\yshiftexp+3*\cwheight) {\myfontsize $\bC^{(0)}_{M-1,:}$};
\node[codeword, minimum height=5*\cwheight, ultra thick] (cw1exp) at (\xshiftexp,\yshiftexp+5*\cwheight) {};

\node[codeword, fill=cw2expcolor!30] (cw21exp) at (\xshiftexp,\yshiftexp+2*\cwheight) {\myfontsize $\bC^{(1)}_{0,:}$};
\node[codeword, fill=cw2expcolor!30] (cw22exp) at (\xshiftexp,\yshiftexp+1*\cwheight) {\myfontsize $\bC^{(1)}_{1,:}$};
\node at (\xshiftexp,\yshiftexp-0.15*\cwheight) {$\vdots$};
\node[codeword, fill=cw2expcolor!30] (cw23exp) at (\xshiftexp,\yshiftexp-2*\cwheight) {\myfontsize $\bC^{(1)}_{M-1,:}$};
\node[codeword, minimum height=5*\cwheight, ultra thick] (cw2exp) at (\xshiftexp,\yshiftexp) {};

\node at (\xshiftexp,\yshiftexp-4.63*\cwheight) {$\vdots$};

\node[codeword, fill=cw3expcolor!30] (cw31exp) at (\xshiftexp,\yshiftexp-10*\cwheight+2*\cwheight) {\myfontsize $\bC^{(\ell-1)}_{0,:}$};
\node[codeword, fill=cw3expcolor!30] (cw32exp) at (\xshiftexp,\yshiftexp-10*\cwheight+1*\cwheight) {\myfontsize $\bC^{(\ell-1)}_{1,:}$};
\node at (\xshiftexp,\yshiftexp-10*\cwheight-0.15*\cwheight) {$\vdots$};
\node[codeword, fill=cw3expcolor!30] (cw33exp) at (\xshiftexp,\yshiftexp-10*\cwheight-2*\cwheight) {\myfontsize $\bC^{(\ell-1)}_{ M-1,:}$};
\node[codeword, minimum height=5*\cwheight, ultra thick] (cw2exp) at (\xshiftexp,\yshiftexp-10*\cwheight) {};

\draw[densely dashed]  (\n, \yshift-2.4cm) arc(-90:90:0.5cm and 2.4cm);

\draw[->, >=latex, cw1expcolor] (cw1punc.east) -- (cw11exp.west);
\draw[->, >=latex, cw1expcolor] (cw1punc.east) -- (cw12exp.west);
\draw[->, >=latex, cw1expcolor] (cw1punc.east) -- (cw13exp.west);

\draw[->, >=latex, cw2expcolor] (cw2punc.east) -- (cw21exp.west);
\draw[->, >=latex, cw2expcolor] (cw2punc.east) -- (cw22exp.west);
\draw[->, >=latex, cw2expcolor] (cw2punc.east) -- (cw23exp.west);

\draw[->, >=latex, cw3expcolor] (cw3punc.east) -- (cw31exp.west);
\draw[->, >=latex, cw3expcolor] (cw3punc.east) -- (cw32exp.west);
\draw[->, >=latex, cw3expcolor] (cw3punc.east) -- (cw33exp.west);
\draw[densely dashed] (\n, \yshift+2.4cm) arc(90:-90:-0.5cm and 2.4cm);
\node[below, align=center, xshift=-0.4cm] at (\n-0.2cm, \yshift-2.5cm) {entry-wise \\ expansion in $\Fq$};

\node[codeword, fill=cw1expcolor!30] (cw11sort) at (\xshiftsort,\yshiftsortone+2*\cwheight) {\myfontsize $\bC^{(0)}_{0,:}$};
\node[codeword, fill=cw2expcolor!30] (cw12sort) at (\xshiftsort,\yshiftsortone+1*\cwheight) {\myfontsize $\bC^{(1)}_{0,:}$};
\node at (\xshiftsort,\yshiftsortone-0.15*\cwheight) {$\vdots$};
\node[codeword, fill=cw3expcolor!30] (cw13sort) at (\xshiftsort,\yshiftsortone-2*\cwheight) {\myfontsize $\bC^{(\ell-1)}_{0,:}$};
\node[codeword, minimum height=5*\cwheight] (cw1sort) at (\xshiftsort,\yshiftsortone) {};
\node[right] at (cw1sort.east) {$\in \code_{\msr} \subseteq \Fq^{\ell \times n}$};

\draw (cw11exp.east) edge[->, >=latex, cw1expcolor, out=0,in=180] (cw11sort.west);
\draw (cw21exp.east) edge[->, >=latex, cw2expcolor, out=0,in=180] (cw12sort.west);
\draw (cw31exp.east) edge[->, >=latex, cw3expcolor, out=0,in=180] (cw13sort.west);

\node[codeword, fill=cw1expcolor!30] (cw21sort) at (\xshiftsort,\yshiftsorttwo+2*\cwheight) {\myfontsize $\bC^{(0)}_{1,:}$};
\node[codeword, fill=cw2expcolor!30] (cw22sort) at (\xshiftsort,\yshiftsorttwo+1*\cwheight) {\myfontsize $\bC^{(1)}_{1,:}$};
\node at (\xshiftsort,\yshiftsorttwo-0.15*\cwheight) {$\vdots$};
\node[codeword, fill=cw3expcolor!30] (cw23sort) at (\xshiftsort,\yshiftsorttwo-2*\cwheight) {\myfontsize $\bC^{(\ell-1)}_{1,:}$};
\node[codeword, minimum height=5*\cwheight] (cw2sort) at (\xshiftsort,\yshiftsorttwo) {};
\node[right] at (cw2sort.east) {$\in \code_{\msr} \subseteq \Fq^{\ell \times n}$};

\draw (cw12exp.east) edge[->, >=latex, cw1expcolor, out=0,in=180] (cw21sort.west);
\draw (cw22exp.east) edge[->, >=latex, cw2expcolor, out=0,in=180] (cw22sort.west);
\draw (cw32exp.east) edge[->, >=latex, cw3expcolor, out=0,in=180] (cw23sort.west);

\node at (\xshiftsort,\yshiftsortdots+0.35*\cwheight) {$\vdots$};

\node[codeword, fill=cw1expcolor!30] (cw31sort) at (\xshiftsort,\yshiftsortthree+2*\cwheight) {\myfontsize $\bC^{(0)}_{M-1,:}$};
\node[codeword, fill=cw2expcolor!30] (cw32sort) at (\xshiftsort,\yshiftsortthree+1*\cwheight) {\myfontsize $\bC^{(1)}_{M-1,:}$};
\node at (\xshiftsort,\yshiftsortthree-0.15*\cwheight) {$\vdots$};
\node[codeword, fill=cw3expcolor!30] (cw33sort) at (\xshiftsort,\yshiftsortthree-2*\cwheight) {\myfontsize $\bC^{(\ell-1)}_{M-1,:}$};
\node[codeword, minimum height=5*\cwheight] (cw3sort) at (\xshiftsort,\yshiftsortthree) {};
\node[right] at (cw3sort.east) {$\in \code_{\msr} \subseteq \Fq^{\ell \times n}$};

\draw (cw13exp.east) edge[->, >=latex, cw1expcolor,out=0,in=180] (cw31sort.west);
\draw (cw23exp.east) edge[->, >=latex, cw2expcolor,out=0,in=180] (cw32sort.west);
\draw (cw33exp.east) edge[->, >=latex, cw3expcolor,out=0,in=180] (cw33sort.west);

\draw[densely dashed] (3.5*\n-2*\cwheight, \yshiftsortone+4*\cwheight) rectangle (5.7*\n, \yshiftsortthree-4*\cwheight);
\node[below, align=center] at (4.6*\n-\cwheight, \yshiftsortthree-4*\cwheight) {$M$ codewords of $\code_{\msr}$ \\ $\rightarrow$ regenerate independently};

\end{tikzpicture}}
\end{center}
\caption{Illustration of the local regeneration procedure implied by the proof of~\cref{thm:universal_construction_general_statement} (notation as in the proof).}
\label{fig:universal_PMDS_local_regeneration}
\end{figure*}

The remaining difficulty in~\cref{constr:general_universal_pmds_row_wise_MSR_construction} is to find suitable constructions of universal PMDS code families.
In fact, some families in the literature already have this property: the Gabidulin-code-based construction of PMDS codes in \cite{rawat2014} and the PMDS code family constructed from linearized Reed--Solomon (RS) codes in \cite{martinez2019universal} are both universal.
For another construction in the literature, \cite{gabrys2018constructions}, we first show that it can be turned into a universal PMDS code family.
We then summarize the resulting parameters and field sizes for all three specific constructions.

\subsection{Construction~{\ref{constr:general_universal_pmds_row_wise_MSR_construction}} using the Gabidulin-Code-Based PMDS Family} \label{sec:constrGabidulin}

The PMDS code construction in \cite{rawat2014} is based on Gabidulin codes (see \cref{ssec:Gabidulin_codes}), where the fact that the codes have maximal minimum rank distance is used to show that the constructed codes are PMDS.
The construction works as follows.
\begin{itemize}
\item Choose an arbitrary MDS code $\Clocal[n,n-r,r+1]$ over $\Fq$ and a generator matrix $\bG_\mathsf{local}$ thereof.
\item Choose a Gabidulin code $\CGab$ (cf.~ \cref{ssec:Gabidulin_codes}) of parameters $[\mu (n-r), \mu (n-r)-s]$ over $\FqM$. This requires $M \geq \mu (n-r)$.
\item Encode a message in $\FqM^{\mu (n-r)-s}$ with the Gabidulin code $\CGab$, which gives a vector $\bx \in \FqM^{\mu (n-r)}$.
\item Split the vector $\bx$ into $\mu$ sub-blocks $\bx^{(i)}$ of size $(n-r)$, i.e., $\bx = [\bx^{(1)}, \dots, \bx^{(\mu)}]$.
\item Encode each subblock with the generator matrix $\bG_\mathsf{local}$ to obtain the final codeword $\bc$, i.e.,
\begin{align*}
\bc = \left[ \bx^{(1)}\bG_\mathsf{local} , \dots, \bx^{(\mu)}\bG_\mathsf{local} \right] \in \FqM^{\mu n}.
\end{align*}
\end{itemize}
As $\Clocal$ is an arbitrary MDS code, we obtain a universal PMDS code family by fixing a Gabidulin code $\CGab$ and varying the local code.
For fixed PMDS code parameters, the construction requires only $M \geq \mu (n-r)$ (due to the Gabidulin code) and no further restriction on $q$ (other than coinciding with the field size of the local code).
Combining this family with the Ye--Barg MSR codes (which are row-wise MDS),~\cref{thm:universal_construction_general_statement} implies the following statement.

\begin{corollary}\label{cor:universal_construction_Gabidulin}
For all valid PMDS parameters $\mu,n,r,s$, integer $d$ with $n-r \leq d \leq n-1$,
and $\cW = \{W_1,W_2,\ldots,W_{\mu}\}$ a partition of $[\mu n]$ with $|W_i|=n \ \forall \ i\in [\mu]$,
there is a $d$-MSR $\pmds(\mu,n,r,s,\cW;\ell)$ code over $\FqM$ if the field size and subpacketization satisfies
\begin{align*}
M \geq \mu(n-r), \quad q \geq bn, \quad \text{and} \quad  \ell &= b^n,
\end{align*}
where $b = d+1-n+r$. In particular, such a code exists for a field of size $q^M = \left[(d+1-n+r)n\right]^{\mu(n-r)}$.
\end{corollary}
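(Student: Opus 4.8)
The plan is to invoke \cref{thm:universal_construction_general_statement} with a concrete choice of the two ingredients it needs: a universal PMDS family and a row-wise MDS MSR code. For the PMDS family I would take the Gabidulin-code-based construction described just above this corollary, which by the discussion there is a universal PMDS family $\pmdsfamily(\mu,n,r,s)$ over $\FqM$ provided $M \geq \mu(n-r)$ (so that a Gabidulin code $\CGab$ of parameters $[\mu(n-r),\mu(n-r)-s]$ over $\FqM$ exists). For the MSR ingredient I would take the Ye--Barg $d$-MSR code of \cref{def:yeBarg} with local length $n$ and $r$ parities, which requires $q \geq bn$ with $b = d+1-n+r$ and has subpacketization $\ell = b^n$; by the remark following \cref{def:yeBarg} this code is row-wise MDS, since each of its rows is an $[n,n-r]$ RS code. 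Feeding $\code_{\msr} = $ (this Ye--Barg code) into \cref{constr:general_universal_pmds_row_wise_MSR_construction} and applying \cref{thm:universal_construction_general_statement} (with $h=1$) immediately yields a locally $d$-MSR $\pmds(\mu,n,r,s,\cW;\ell)$ code over $\FqM$ with $\ell = b^n$.

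Next I would verify the compatibility of the two field-size requirements. The local MDS code $\Clocal$ used in the Gabidulin construction lives over $\Fq$, and the Ye--Barg rows are $[n,n-r]$ RS codes, which exist over any field of size at least $n$; the Ye--Barg construction additionally demands $q \geq bn$. The Gabidulin part imposes no condition on $q$ beyond it being the field over which the (arbitrary) local MDS code is defined, and imposes $M \geq \mu(n-r)$ on the extension degree. These conditions are mutually consistent: pick any prime power $q \geq bn$, then pick $M \geq \mu(n-r)$, and all three listed inequalities $M \geq \mu(n-r)$, $q \geq bn$, $\ell = b^n$ hold simultaneously. This establishes the first assertion.

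For the explicit ``in particular'' claim, I would instantiate the free parameters at their extremes: take $q$ to be the smallest prime power with $q \geq bn$ — but to get the clean closed form stated, simply take $q = bn$ when $bn$ is a prime power, or more cleanly observe that we may take $q \geq bn$ arbitrary and $M = \mu(n-r)$, giving overall field size $q^M$. To get exactly $q^M = [(d+1-n+r)n]^{\mu(n-r)} = (bn)^{\mu(n-r)}$ one sets $q = bn$ and $M = \mu(n-r)$; here one should note $b = d+1-n+r \geq 1$ for valid $d$, so $bn \geq n \geq 2$ and a field of size $bn$ exists whenever $bn$ is a prime power (and the statement is understood in that sense, or one rounds $q$ up to the next prime power, only increasing the field). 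Substituting $\ell = b^n$ completes the corollary.

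I do not anticipate a genuine obstacle here, as this is essentially an assembly of already-established pieces; the only point requiring a little care is the bookkeeping that the Ye--Barg code genuinely satisfies \emph{row-wise MDS} (not just MSR), which is exactly the content of the remark after \cref{def:yeBarg} — each $A_i$ being diagonal forces every row to be an honest $[n,n-r]$ RS code — and the check that ``$q$ over which $\Clocal$ is defined'' and ``$q \geq bn$ for Ye--Barg'' refer to the same field, so that both ingredients can coexist over a single $\FqM$. Once that is noted, \cref{thm:universal_construction_general_statement} does all the work.
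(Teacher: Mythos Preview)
Your proposal is correct and follows precisely the approach of the paper: combine the Gabidulin-based universal PMDS family (which needs $M \geq \mu(n-r)$) with the Ye--Barg $d$-MSR code (which needs $q \geq bn$ and is row-wise MDS) via \cref{thm:universal_construction_general_statement}. The paper does not give a separate proof for this corollary beyond the sentence ``Combining this family with the Ye--Barg MSR codes (which are row-wise MDS), \cref{thm:universal_construction_general_statement} implies the following statement,'' so your write-up is, if anything, more detailed than what appears there.
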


\subsection{Construction~{\ref{constr:general_universal_pmds_row_wise_MSR_construction}} using the Linearized-RS-Codes-Based PMDS Family}

The PMDS code construction in \cite{martinez2019universal} is based on linearized Reed--Solomon codes, which are sum-rank-metric codes that were introduced in \cite{martinez2018skew} and can be seen as a combination of Reed--Solomon and Gabidulin codes.
We do not formally introduce the codes here, but briefly summarize some of their key properties.
Let $\mu <q$, $n' \leq M$, and $k' \leq n' \mu$. Then, a linearized Reed--Solomon code of parameters $[\mu n',k']$ over $\FqM$ is a $k'$-dimensional subspace of $\FqM^{\mu n'}$.
The codes are considered in the sum-rank metric w.r.t.\ the parameter $\mu$, in which codewords are subdivided into $\mu$ blocks of size $n'$ and the distance of two codewords is the sum of the rank distances of the $\mu$ blocks.
The minimal sum-rank distance of a linearized Reed--Solomon code is $\mu n' -k'+1$.
This property is again essential for the codes in \cite{martinez2019universal} to be PMDS.
The construction works as follows.
\begin{itemize}
\item Choose an arbitrary MDS code $\Clocal[n,n-r,r+1]$ over $\Fq$ and a generator matrix $\bG_\mathsf{local}$ thereof.
\item Choose a linearized Reed--Solomon code $\CLRS$ (cf.~\cite{martinez2018skew,martinez2019universal}) of parameters $[\mu (n-r), \mu (n-r)-s]$ over $\FqM$. This requires $M \geq n-r$ and $q > \mu$.
\item Encode a message in $\FqM^{\mu (n-r)}$ with the linearized Reed--Solomon code $\CLRS$, which gives a vector $\bx \in \FqM^{\mu (n-r)}$.
\item Split the vector $\bx$ into $\mu$ sub-blocks $\bx^{(i)}$ of size $(n-r)$, i.e., $\bx = [\bx^{(1)}, \dots, \bx^{(\mu)}]$.
\item Encode each subblock with the generator matrix $\bG_\mathsf{local}$ to obtain the final codeword $\bc$, i.e.,
\begin{align*}
\bc = \left[ \bx^{(1)}\bG_\mathsf{local} , \dots, \bx^{(\mu)}\bG_\mathsf{local} \right] \in \FqM^{\mu n}.
\end{align*}
\end{itemize}
As $\Clocal$ is an arbitrary MDS code, we obtain a universal PMDS code family by fixing a linearized Reed--Solomon code $\CLRS$ and varying the local code.
For fixed PMDS code parameters, the construction requires only $M \geq n-r$ and $q > \mu$ in addition to $q$ being equal to the field size of the local code.\footnote{Note that, compared to the Gabidulin-based PMDS construction above, the restriction on $M$ is much weaker, and we have an additional condition on $q$. This means that the logarithm of the field size (and thus the soft-O complexity of operations in the field) is not linear in $\mu$ anymore, but logarithmic in $\mu$.}
Combining this family with the Ye--Barg MSR codes (which are row-wise MDS),~\cref{thm:universal_construction_general_statement} implies the following statement.

\begin{corollary}\label{cor:universal_construction_sum_rank}
For all valid PMDS parameters $\mu,n,r,s$, integer $d$ with $n-r \leq d \leq n-1$, and $\cW = \{W_1,W_2,\ldots,W_{\mu}\}$ a partition of $[\mu n]$ with $|W_i|=n \ \forall \ i\in [\mu]$,
there is a $d$-MSR $\pmds(\mu,n,r,s,\cW;\ell)$ code over $\FqM$ if the field size and subpacketization satisfies
\begin{align*}
M \geq n-r, \quad q \geq \max\{bn,\mu+1\}, \quad \text{and} \quad  \ell &= b^n,
\end{align*}
where $b = d+1-n+r$. In particular, such a code exists for a field of size $q^M = \max\!\left\{(d+1-n+r)n, \mu+1\right\}^{n-r}$.
\end{corollary}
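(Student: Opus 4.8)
The plan is to obtain the claimed code as an instance of \cref{constr:general_universal_pmds_row_wise_MSR_construction} and then apply \cref{thm:universal_construction_general_statement} with $h=1$. First I would choose the row-wise MDS MSR ingredient $\code_{\msr}$ to be the Ye--Barg $d$-MSR code of length $n$ with $r$ parities from \cref{def:yeBarg}: as remarked right after that definition it is row-wise MDS (its $a$-th row is an $[n,n-r]$ RS code over $\Fq$, hence MDS), it has subpacketization $\ell = b^n$ with $b = d+1-n+r$, and it exists over any field $\Fq$ with $q \geq bn$. Second, I would take $\pmdsfamily(\mu,n,r,s)$ to be the linearized-Reed--Solomon-based universal PMDS family described in this subsection, obtained by fixing an $[\mu(n-r),\mu(n-r)-s]$ linearized RS code $\CLRS$ over $\FqM$ and letting the local code range over all $[n,n-r,r+1]$ MDS codes over $\Fq$; this family is available whenever $M \geq n-r$ and $q > \mu$.

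Next I would verify compatibility of the two ingredients. The rows $\code_{\msr}^{(a)}$ of the Ye--Barg code are precisely $[n,n-r,r+1]$ MDS codes over $\Fq$, which is the domain of the injection $\family(\cdot)$ from \cref{def:universal_pmds_family}; hence $\family(\code_{\msr}^{(a)})$ is well defined for every row $a$, and \cref{constr:general_universal_pmds_row_wise_MSR_construction} applies verbatim. Intersecting the two field-size conditions gives $q \geq \max\{bn,\mu+1\}$ and $M \geq n-r$. \cref{thm:universal_construction_general_statement} then yields that $\pmdsfamily(\code_{\msr})$ is a locally $d$-MSR $\pmds(\mu,n,r,s,\cW;\ell)$ code over $\FqM$, with $\ell=b^n$ and $\cW$ the partition attached to the universal family, which is the first assertion of the corollary.

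For the concrete field size, I would specialize to $M = n-r$ and take $q$ to be a prime power at least $\max\{bn,\mu+1\}$, giving a code over a field of size $q^M = \max\{(d+1-n+r)n,\mu+1\}^{n-r}$. I do not expect any real obstacle: the whole argument is a substitution into \cref{thm:universal_construction_general_statement}, and the only point deserving a line of care is that the field-size requirements of the Ye--Barg construction and of the linearized-RS PMDS family must be met over the \emph{same} base field $\Fq$ --- which is unproblematic, since both are merely lower bounds on $q$ with no further structural constraint, so any prime power $q \geq \max\{bn,\mu+1\}$ works.
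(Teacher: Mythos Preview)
Your proposal is correct and matches the paper's approach exactly: the paper obtains this corollary by plugging the Ye--Barg row-wise MDS $d$-MSR code (requiring $q\geq bn$, $\ell=b^n$) into \cref{constr:general_universal_pmds_row_wise_MSR_construction} with the linearized-RS universal PMDS family (requiring $M\geq n-r$, $q>\mu$) and invoking \cref{thm:universal_construction_general_statement}. Your observation that both field-size constraints are only lower bounds on the common base field $\Fq$, hence compatible, is precisely the point the paper relies on implicitly.
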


\subsection{Construction~{\ref{constr:general_universal_pmds_row_wise_MSR_construction}} using the PMDS Family in Gabrys et al.}\label{ssec:Gabrys_et_al_universal_PMDS_construction}

The PMDS code construction in \cite[Section~IV.A]{gabrys2018constructions} uses Reed--Solomon codes as its local codes.
The following theorem generalizes the construction to arbitrary local codes, showing that the code family is in fact universal.
Note that we heavily rely on ideas from \cite[Lemma~2]{gabrys2018constructions}, \cite[Corollary~5]{gabrys2018constructions}, and \cite[Lemma~7]{gabrys2018constructions} in the proof.

\begin{theorem}[Generalization of the PMDS Construction in \cite{gabrys2018constructions}]\label{thm:generalization_PMDS_Gabrys_etal}
Let $n,\mu,r,s$ be valid PMDS parameters and $\FqM$ be a field.
Suppose that there are distinct field elements $\alpha_{1,1},\alpha_{1,2},\ldots,\alpha_{\mu,n} \in \FqM$ such that any subset of $(r+1)s$ elements of the $\alpha_{i,j}$ is linearly independent over $\F_q$.
Define
\begin{align*}
\bH^{(j)} =
\begin{bmatrix}
  \alpha_{j,1} & \alpha_{j,2} &  \hdots & \alpha_{j,n} \\
  \alpha_{j,1}^{q} & \alpha_{j,2}^{q} &  \hdots & \alpha_{j,n}^{q} \\
 \vdots & \vdots & \ddots & \vdots \\
 \alpha_{j,1}^{q^{s-1}} & \alpha_{j,2}^{q^{s-1}} &  \hdots & \alpha_{j,n}^{q^{s-1}} \\
\end{bmatrix} \quad \forall \, 1\leq j\leq \mu.
\end{align*}
Then, the $[\mu n,\mu(n-r)-s]_{\FqM}$ code with parity-check matrix
\begin{align*}
   \bH =
     \begin{bmatrix}
      \bH^{(0)} & \mathbf{0} &\hdots & \mathbf{0} \\
      \mathbf{0} & \bH^{(0)} &\hdots & \mathbf{0} \\
      \vdots & \vdots &\ddots & \cdots \\
      \mathbf{0}& \mathbf{0} &\hdots & \bH^{(0)} \\
      \bH^{(1)} & \bH^{(2)} &\hdots & \bH^{(\mu)}  \\
    \end{bmatrix} \in \FqM^{(\mu r + s) \times \mu n}
  \end{align*}
is a PMDS code, where $\bH^{(0)} \in \Fq^{r \times n}$ (note $\Fq \subseteq \FqM$) is a parity-check matrix of an arbitrary $[n,n-r]_q$ MDS code.

\end{theorem}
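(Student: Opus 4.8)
The plan is to reduce the PMDS property to a single matrix–invertibility statement and then exploit the $\Fq$-linearity of the Frobenius map, following closely the arguments behind \cite[Lemma~2, Corollary~5, Lemma~7]{gabrys2018constructions}; the only genuinely new point is that $\H^{(0)}$ is now an \emph{arbitrary} MDS parity-check matrix over $\Fq$ rather than a Vandermonde matrix, and the proof must only use the two properties of $\H^{(0)}$ that still hold: its entries lie in $\Fq$, and any $r$ of its columns are $\Fq$-linearly independent. Concretely, it suffices to establish: (a) $\code|_{W_i}$ equals $\langle\code_\mathsf{local}\rangle_{\FqM}$, the $[n,n-r,r+1]_{\FqM}$ code with parity-check matrix $\H^{(0)}$ (this is immediate from the block-diagonal top part of $\H$ once one knows $\dim\code=\mu(n-r)-s$); and (b) for every choice of $E_j\subseteq W_j$ with $|E_j|=r$ for $j=1,\dots,\mu$ and every $\mathcal{T}\subseteq[\mu n]\setminus\bigcup_jE_j$ with $|\mathcal{T}|=s$, the $(\mu r+s)\times(\mu r+s)$ submatrix of $\H$ formed by the columns indexed by $\bigcup_jE_j\cup\mathcal{T}$ is invertible. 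The remaining bookkeeping that turns (a) and (b) into \cref{def:pmds} --- namely $\dim\code=\mu(n-r)-s$, MDS-ness of the code punctured in the $E_j$, and the case $|\mathcal{T}|<s$, which reduces to $|\mathcal{T}|=s$ by extending $\mathcal{T}$ to an $s$-subset of surviving positions (possible since $\mu(n-r)\ge s+(n-r)>s$) --- is routine and proceeds exactly as in \cite{gabrys2018constructions}. The heart of the argument is (b).

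For (b), set $T_j:=\mathcal{T}\cap W_j$ (so $\sum_j|T_j|=s$) and split the submatrix into its top $\mu r$ rows $M_1$ and its bottom $s$ rows $M_2$. Then $M_1$ is block diagonal with $j$-th block $\H^{(0)}_{E_j\cup T_j}\in\Fq^{r\times(r+|T_j|)}$, which has full row rank $r$ because $\H^{(0)}_{E_j}$ is invertible; hence $\ker M_1$ has dimension $\sum_j|T_j|=s$, and the whole submatrix is invertible if and only if $M_2$ restricted to $\ker M_1$ is an invertible $s\times s$ matrix. A convenient basis of $\ker M_1$ is: for $t\in T_j$, expand $\H^{(0)}_{:,t}=\sum_{e\in E_j}\lambda^{(j)}_{t,e}\,\H^{(0)}_{:,e}$ with unique $\lambda^{(j)}_{t,e}\in\Fq$ and let $v_t$ be the vector supported on $E_j\cup T_j$ with $(v_t)_t=1$, $(v_t)_e=-\lambda^{(j)}_{t,e}$ for $e\in E_j$ and $0$ elsewhere. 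The $v_t$, $t\in\mathcal{T}$, lie in $\ker M_1$ and are linearly independent (pairwise distinct ``leading'' coordinates).

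Applying $M_2$ and recalling that its $j$-th column block is the Moore-type matrix with columns $\phi(\alpha_{j,i})$, where $\phi(\alpha):=(\alpha,\alpha^q,\dots,\alpha^{q^{s-1}})^{\top}$, one gets $M_2 v_t=\phi(\alpha_{j,t})-\sum_{e\in E_j}\lambda^{(j)}_{t,e}\,\phi(\alpha_{j,e})$ for $t\in T_j$. Since $\alpha\mapsto\alpha^{q^k}$ is $\Fq$-linear and the $\lambda^{(j)}_{t,e}$ lie in $\Fq$, this equals $\phi(\beta_t)$ with $\beta_t:=\alpha_{j,t}-\sum_{e\in E_j}\lambda^{(j)}_{t,e}\,\alpha_{j,e}\in\FqM$. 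Thus $M_2|_{\ker M_1}$, written in the basis $\{v_t\}_{t\in\mathcal{T}}$, is the Moore matrix with columns $\phi(\beta_t)$, $t\in\mathcal{T}$, which is invertible if and only if the $s$ elements $\{\beta_t\}_{t\in\mathcal{T}}$ are $\Fq$-linearly independent. Now each $\beta_t$ (with $t\in T_j$) is an $\Fq$-linear combination of the $r+1$ code locators $\alpha_{j,t}$ and $\{\alpha_{j,e}\}_{e\in E_j}$, the locator $\alpha_{j,t}$ occurring with coefficient $1$ and in no other $\beta_{t'}$; hence the $\{\beta_t\}_{t\in\mathcal{T}}$ jointly involve at most $s+r\cdot|\{j:T_j\neq\emptyset\}|\le(r+1)s$ distinct locators, which are $\Fq$-linearly independent by hypothesis, and the matrix expressing the $\beta_t$ in this locator basis contains the $s\times s$ identity (the columns indexed by $\alpha_{j,t}$, $t\in\mathcal{T}$) and therefore has rank $s$. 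Consequently the $\beta_t$ are $\Fq$-linearly independent, which proves (b).

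I expect the main obstacle to be precisely the final step: bounding the number of distinct code locators appearing in the $\{\beta_t\}$ and showing that the resulting coefficient matrix has full rank. This is where the exact shape of the hypothesis ($(r+1)s$ locators linearly independent) is forced, and it requires carefully tracking which $\alpha_{i,j}$ are shared between erased positions lying in the same local group and which are not. A secondary, more technical point is the verification that $\H$ has full rank $\mu r+s$ (equivalently $\dim\code=\mu(n-r)-s$), which underpins part (a); I would dispatch this exactly as in \cite{gabrys2018constructions}, e.g.\ by an induction on $\mu$ built on (b), or by checking directly that the code punctured in the $E_j$ already attains dimension $\mu(n-r)-s$.
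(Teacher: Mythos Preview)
Your proposal is correct and essentially follows the paper's own argument: both reduce to showing that the $s$ elements $\beta_t=\alpha_{j,t}-\sum_{e\in E_j}\lambda^{(j)}_{t,e}\alpha_{j,e}$ (the paper calls them the entries of $\gammaVec_S$) are $\Fq$-linearly independent, using that each is an $\Fq$-combination of at most $r+1$ locators with one locator appearing uniquely, and then invoking the $(r+1)s$-independence hypothesis. The only cosmetic difference is framing --- you test invertibility of the $(\mu r+s)\times(\mu r+s)$ submatrix of $\H$ via $\ker M_1$, whereas the paper first derives an explicit Moore-type parity-check matrix $\H_\gamma$ for the punctured code and then checks that any $s$ of its columns are independent --- but the linear-algebra core and the use of the hypothesis are identical.
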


\begin{IEEEproof}
Let $\bc = \left[ \bc^{(1)}, \dots, \bc^{(\mu)} \right]$ be a codeword of the code, which is divided into $\mu$ blocks $\bc^{(i)} \in \FqM^{n}$.
By definition, for all $i=1,\dots,\mu$, we have
\begin{equation}
\bH^{(0)} {\bc^{(i)}}^\top = \0 . \label{eq:any_parities_first_parity_eq}
\end{equation}
Furthermore, with $\alphaVec_{i} := [\alpha_{i,1},\alpha_{i,2},\dots,\alpha_{i,n}]$, we have
\begin{align}
\sum_{i=1}^{\mu} \alphaVec_{i}^{q^j} {\bc^{(i)}}^\top = 0, \label{eq:any_parities_second_parity_eq}
\end{align}
for all $j=0,\dots,s-1$.
Let $S := [\bs_1,\dots,\bs_\mu]$ be of the form
\begin{align*}
\bs_i = [s_{i,1},\dots,s_{i,r}] \in [n]^r, \quad s_{i,1} < s_{i,2} < \dots < s_{i,r}.
\end{align*}
Denote by $\bar{\bs}_i$ the vector in $[n]^{n-r}$ that contains, again in increasing order, the entries of $[n]$ that are not contained in $\bs_i$.
The positions $\bs_i$ correspond to the puncturing patterns $E_i$ in the definition of PMDS array codes (cf.~\cref{def:pmds}). We need to show that for each such vector $S$, the array code punctured at these positions in each local group, gives an $[\mu n-\mu r,\mu n-\mu r-s]$ MDS code.

For a vector $\bx$ of length $n$, let $\bx_{\bs_i}$ and $\bx_{\sbar_i}$ be the vectors of length $r$ and $n-r$ containing the entries of $\bx$ indexed by the entries of $\bs_i$ and $\sbar_i$, respectively.
Let $\bH$ be a parity-check matrix of an MDS code of length $n$ and dimension $n-r$. Then, the columns of $\bH$ indexed by $\bs_i$, denoted by $\bH_{\bs_i}$, are invertible and we have for any codeword $\bx$ of the code
\begin{equation*}
\0 = \bH \bx^\top = \bH_{\bs_i} \bx_{\bs_i}^\top + \bH_{\bar{\bs}_i} \bx_{\bar{\bs}_i}^\top \quad
\Rightarrow \quad \bx_{\bs_i}^\top = \bH_{\bs_i}^{-1} \bH_{\bar{\bs}_i} \bx_{\bar{\bs}_i}^\top
\end{equation*}

Hence, it directly follows from \eqref{eq:any_parities_first_parity_eq} that
\begin{align*}
\left(\bc^{(i)}\right)_{\bs_i}^\top = {\bH_{\bs_i}^{(0)}}^{-1} \bH_{\bar{\bs}_i}^{(0)} \left(\bc^{(i)}\right)_{\bar{\bs}_i}^\top,
\end{align*}
and by \eqref{eq:any_parities_second_parity_eq} that (note that $\bH^{(0)}$ has entries in $\F_q$, so $\bH^{(0)} = {\bH^{(0)}}^{q^j}$ for any $j$)
\begin{align*}
0 &= \sum_{i=1}^{\mu} \alphaVec_{i}^{q^j} {\bc^{(i)}}^\top \\
&= \sum_{i=1}^{\mu} (\alphaVec_{i})_{\bs_i}^{q^j} \left(\bc^{(i)}\right)_{\bs_i}^\top + (\alphaVec_{i})_{\sbar_i}^{q^j} \left(\bc^{(i)}\right)_{\sbar_i}^\top \\
  &= \sum_{i=1}^{\mu} \Big[\underbrace{ (\alphaVec_i)_{\bs_i} \left(\bH^{(0)}\right)_{\bs_i}^{-1} \left(\bH^{(0)}\right)_{\bar{\bs}_i} + (\alpha_i)_{\sbar_i}}_{=: \, \gammaVec_{\bs_i}} \Big]^{q^{j}} \left(\bc^{(i)}\right)_{\sbar_i}^\top.
\end{align*}
Thus, the vector
$\bc_{S} = \big[\left(\bc^{(1)}\right)_{\sbar_1}, \left(\bc^{(2)}\right)_{\sbar_2}, \dots, \left(\bc^{(\mu)}\right)_{\sbar_\mu} \big]$, which is the codeword punctured at the positions in $S$,
is contained in a code with parity-check matrix
\begin{align*}
\bH_\gamma :=
\begin{bmatrix}
\gammaVec_S^{q^0} \\
\gammaVec_S^{q^1} \\
\vdots \\
\gammaVec_S^{q^{s-1}} \\
\end{bmatrix},
\end{align*}
where
\begin{align*}
{\gammaVec_S} := \Big[ \gammaVec_{\bs_1}, \gammaVec_{\bs_2}, \dots, \gammaVec_{\bs_\mu} \Big] \in \F_{q^M}^{\mu(n-r)}.
\end{align*}

By definition, we have
\begin{align*}
\gammaVec_{\bs_i} =  (\alphaVec_i)_{\bs_i} \left(\bH^{(0)}\right)_{\bs_i}^{-1} \left(\bH^{(0)}\right)_{\bar{\bs}_i} + (\alpha_i)_{\sbar_i}.
\end{align*}
Since $\left(\bH^{(0)}\right)_{\bs_i}^{-1} \left(\bH^{(0)}\right)_{\bar{\bs}_i}$ is an $r \times (n-r)$ matrix, each entry of $\gammaVec_{\bs_i}$, and thus each entry of $\gammaVec_S$, is a linear combination of at most $r+1$ of the $\alpha_{i,j}$. Furthermore, each such linear combination contains, non-trivially, one element from $\alpha_{i,j}$ (namely the corresponding entry in $(\alpha_i)_{\sbar_i}$) that appears only in this linear combination.
Any set of $s$ entries from $\gammaVec_S$ depends on at most $s(r+1)$ of the $\alpha_{i,j}$, which are linearly independent by the independence assumption.
Hence, the $s$ entries from $\gammaVec_S$ are also linearly independent over $\F_q$.
This means that any $s$ columns of the parity-check matrix $\bH_\gamma$ are linearly independent and $\bH_\gamma$ is a parity-check matrix of an $[n\mu-r\mu,n\mu-r\mu-s]_{q^M}$ MDS code.

It remains to show that the local codes equal the $\FqM$-span of the $[n,n-r]_q$ MDS code with parity-check matrix $\bH^{(0)}$. It is clear by construction that the local codes are subcodes of this code. To see that the local codes are equal to this code, consider the code obtained from the PMDS code after puncturing arbitrary $r$ positions in each local group. This is an $[\mu (n-r), \mu (n-r)-s]_{q^M}$ MDS code. Since valid PMDS parameters fulfill $\mu (n-r)-s \geq n-r$, any $n-r$ columns of a generator matrix of the punctured code are linearly independent. In particular, by further puncturing all positions, except for the remaining $n-r$ positions in one local group, we get an $[n-r,n-r]_{q^M}$ MDS code. This proves that all the local codes have dimension $n-r$, and thus the claim.
Hence, the overall code is a PMDS code.
\end{IEEEproof}

As the MDS code over $\Fq$ can be chosen arbitrarily for fixed $\alpha_{1,1},\alpha_{1,2},\ldots,\alpha_{\mu,n} \in \FqM$,~\cref{thm:generalization_PMDS_Gabrys_etal} immediately implies a universal PMDS code family as in~\cref{def:universal_pmds_family}.
By~\cref{thm:universal_construction_general_statement}, we get the following result.

\begin{corollary}\label{cor:universal_construction_small_fields}
For all valid PMDS parameters $\mu,n,r,s$, integer $d$ with $n-r \leq d \leq n-1$, and $\cW = \{W_1,W_2,\ldots,W_{\mu}\}$ a partition of $[\mu n]$ with $|W_i|=n \ \forall \ i\in [\mu]$,
there is a $d$-MSR PMDS array code as in~\cref{constr:general_universal_pmds_row_wise_MSR_construction} of
field size
\begin{align*}
  n\big[d+1-(n-r)\big]& (n\mu)^{s(r+1)-1} \leq q^M \\
  &\leq 2n\big[d+1-(n-r)\big] (2n\mu)^{s(r+1)-1}
\end{align*}
 and subpacketization
\begin{equation*}
    \ell = \big[d+1-(n-r)\big]^n.
\end{equation*}
\end{corollary}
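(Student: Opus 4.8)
The plan is to instantiate \cref{constr:general_universal_pmds_row_wise_MSR_construction} with two concrete ingredients and then invoke \cref{thm:universal_construction_general_statement}. For the row-wise MDS $(h,d)$-MSR code $\code_{\msr}$ I would take the Ye--Barg $d$-MSR code of \cref{def:yeBarg} (which, as remarked right after that definition, is row-wise MDS and has subpacketization $\ell = b^n$ with $b = d+1-(n-r)$), defined over a field $\Fq$ with $q \ge bn$. For the universal PMDS code family $\pmdsfamily(\mu,n,r,s)$ I would use the generalization of the Gabrys et al.\ construction given by \cref{thm:generalization_PMDS_Gabrys_etal}, which yields a universal PMDS code family over $\FqM$ as soon as we have fixed $\mu n$ distinct elements $\alpha_{1,1},\dots,\alpha_{\mu,n}\in\FqM$ every $(r+1)s$ of which are $\Fq$-linearly independent. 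With these two ingredients, \cref{thm:universal_construction_general_statement} immediately gives that $\pmdsfamily(\code_{\msr})$ is a locally $d$-MSR $\pmds(\mu,n,r,s,\cW;\ell)$ code, and the subpacketization $\ell = b^n = [d+1-(n-r)]^n$ is inherited verbatim from the Ye--Barg code (and, via \cref{col:cartesianCodeParameters}, the local code is a Cartesian product of $M$ Ye--Barg codes, hence still $d$-MSR).

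The real content is therefore (i) exhibiting the $\alpha_{i,j}$ and (ii) bounding $q^M$. For (i), note that a set of $\mu n$ elements of $\FqM$ with every $(r+1)s$-subset $\Fq$-linearly independent is exactly the column set of a parity-check matrix of an $\Fq$-linear MDS code of length $\mu n$ and redundancy $(r+1)s$; a (doubly-)extended Reed--Solomon code supplies such a matrix once the field carrying these locators is large enough to have length $\mu n$. For (ii), I would take $q$ to be the smallest prime power meeting both constraints that arise — the Ye--Barg bound $q\ge bn$ and the bound forced by the locator construction — and take the extension degree $M$ as small as that construction permits. Bertrand's postulate then controls $q$, and hence each of the factors making up $q^M$, within a factor of $2$ of the corresponding threshold; this is the source of the factors of $2$ in the claimed upper bound $2nb(2n\mu)^{s(r+1)-1}$, while the lower bound $nb(n\mu)^{s(r+1)-1}$ is the threshold itself. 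The value of $\ell$ and the two displayed inequalities on $q^M$ then follow by routine bookkeeping using the valid-PMDS-parameter constraints ($r<n$, $\mu\ge 2$, $1\le b\le r$).

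The step I expect to be the main obstacle is getting the exponent down to $s(r+1)-1$ on a base of size $\Theta(n\mu)$, with only an extra factor $\Theta(nb)$, rather than the $\Theta((n\mu)^{s(r+1)})$ that a black-box ``all $\mu n$ locators in general position'' argument over a field of size $\Theta(n\mu)$ would give. Achieving the claimed bound requires choosing the $\alpha_{i,j}$ so as to exploit the block structure underlying \cref{thm:generalization_PMDS_Gabrys_etal} — namely that in any harmful configuration at most $r$ locators come from each of at most $s$ distinct blocks, together with at most $s$ further ``private'' locators (the entries of $(\alphaVec_i)_{\sbar_i}$) — so that the $\Fq$-independence we must enforce costs only $s(r+1)-1$ extra dimensions over a base field already of size $\Theta(nb)$ (the field over which the local Ye--Barg code lives), instead of $s(r+1)$ fresh dimensions over a field of size $\Theta(n\mu)$. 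Once the locators are pinned down with this efficiency, the rest is the arithmetic verification sketched above; for comparison, the cruder Gabidulin-based locator set used in \cref{cor:universal_construction_Gabidulin} would instead push the exponent up to $\mu(n-r)$, which is exactly the trade-off discussed later in \cref{sec:discussion}.
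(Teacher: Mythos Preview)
Your high-level plan---instantiate \cref{constr:general_universal_pmds_row_wise_MSR_construction} with the Ye--Barg code of \cref{def:yeBarg} and the universal PMDS family of \cref{thm:generalization_PMDS_Gabrys_etal}, then apply \cref{thm:universal_construction_general_statement}---is exactly what the paper does, and the subpacketization claim follows immediately. The gap is in step (i)/(ii): how you produce the $\alpha_{i,j}$ and bound $q^M$.

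You take the $\alpha_{i,j}$ from the columns of a parity-check matrix of an MDS (doubly-extended RS) code of length $\mu n$ and minimum distance $s(r+1)+1$. That forces the base field of the parity-check matrix to have size $\Theta(\mu n)$ and redundancy $M=s(r+1)$, so $q^M\in\Theta\!\big((\mu n)^{s(r+1)}\big)$, precisely the bound you flag as too weak. Your proposed fix---exploiting the block structure of \cref{thm:generalization_PMDS_Gabrys_etal} so that fewer than $s(r+1)$ independence constraints need to be enforced---is not how the paper gets the exponent down, and in fact that theorem really does require every $s(r+1)$-subset of the $\alpha_{i,j}$ to be $\Fq$-independent (the $r$ locators in $(\alphaVec_i)_{\s_i}$ plus the one private entry from $(\alphaVec_i)_{\sbar_i}$, for each of $s$ blocks, can all be chosen independently).

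The paper's actual trick is to drop the MDS requirement on the auxiliary code and use instead a BCH-type code over the small field $\Fq$ (with $q$ chosen only large enough for the Ye--Barg locators, i.e., $q\approx nb$). Concretely, it cites a result (\cite[Problem~8.9]{roth2006introduction}) that for $n'=q^a-1$ there is a $q$-ary code of length $n'$, minimum distance $s(r+1)+1$, and redundancy $M\le 1+\big[s(r+1)-1\big]a$; the ``$1+$'' and the ``$-1$'' in the exponent arise because one of the BCH roots can be taken to be $1\in\Fq$, contributing only a single check. Picking $a$ minimal with $q^a-1\ge n\mu$ and shortening to length $n\mu$ gives $q^M\le q\cdot(q^a)^{s(r+1)-1}$, which, with $q\le 2nb$ (Bertrand) and the bound on $q^a$ in terms of $n\mu$, yields the stated upper bound $2nb(2n\mu)^{s(r+1)-1}$. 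So the saving you are looking for is code-theoretic (BCH over the already-present small field $\Fq$) rather than structural; once you plug that in, the bookkeeping you outlined in your second paragraph goes through.
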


\begin{IEEEproof}
We combine the universal PMDS code family in~\cref{thm:generalization_PMDS_Gabrys_etal} with Ye--Barg codes (cf.~\cref{def:yeBarg}) using~\cref{constr:general_universal_pmds_row_wise_MSR_construction}.
We choose $q$ and $M$ large enough such that we can ensure that suitable field elements $\alpha_{i,j}$ (of the PMDS code family) and $\beta_{i,j}$ (of the Ye--Barg codes) exist.
A sufficient condition for the existence of the $\beta_{i,j}$ is $q \geq n(d+1-(n-r))$.
Thus, we can choose $q$ to be the smallest prime power greater or equal to $n(d+1-(n-r))$, which is at most $q \leq 2n(d+1-(n-r))$ by Bertrand’s postulate.

For the $\alpha_{i,j}$, it is a bit more involved.
By~\cref{thm:generalization_PMDS_Gabrys_etal}, it suffices to find $n\mu$ elements from $\F_{q^M}$ of which any subset of $s(r+1)$ elements is linearly independent.
We use the same idea as in \cite[Lemma~7]{gabrys2018constructions}. Take the columns of a parity-check matrix of a $\Code[n\mu,n\mu-M,s(r+1)+1]_q$ code and interpret each column in $\F_q^M$ as an element of $\F_{q^M}$.
The independence condition is then fulfilled due to the choice of the minimum distance.

The remaining question is for which $M$ and $q$ a code with parameters $[n\mu,n\mu-M,s(r+1)+1]_q$ exists.
We use the result in \cite[Problem~8.9]{roth2006introduction}, which we can reformulate in our terms as follows. For any $n' = q^a-1$, there exists a code with parameters $[n',n'-M,s(r+1)+1]_q$, where
\begin{align*}
    M \leq 1+ \big[s(r+1)-1\big]a.
\end{align*}
Choose $a$ to be the smallest integer with $n' = q^a-1 \geq n\mu$.
Note that there is such an $a$ with $q^a-1 \leq 2n\mu-1$, i.e., $\log_q(n\mu) \leq a \leq \log_q(2n\mu)$.
Hence, there is an $[n',n'-M,s(r+1)+1]_q$ code with $M \leq 1+ \big[s(r+1)-1\big]\log_q(2n\mu)$.
Shortening the codes gives an $[n\mu,n\mu-M,s(r+1)+1]_q$ code with $M \leq 1+ \big[s(r+1)-1\big]\log_q(2n\mu)$.
\end{IEEEproof}

\section{Discussion and Comparison of PMDS Code Constructions with Local Regeneration}\label{sec:discussion}

\subsection{Field Size Comparison}

Table~\ref{tab:comparison_field_sizes} compares the field sizes of the $d$-MSR PMDS constructions in Section~\ref{sec:universalPMDSconstruction} (universal PMDS code construction), for Ye--Barg MSR codes and three universal PMDS families, and Section~\ref{sec:constructionS2} (two parities).
We also compare our new constructions to the only existing construction of $d$-MSR PMDS codes, which was presented in \cite{rawat2014}. For easier reference, we label the five constructions by the letters $\mathsf{A}$--$\mathsf{E}$.

The known Construction $\mathsf{E}$ (see \cite[Construction~1, case ``$(r+\delta-1)\mid n$'']{rawat2014}) first encodes an information word from $\FqM^{\ell \times (\mu(n-r)-s)}$ with an $[\ell \mu(n-r), \ell (\mu(n-r)-s)]_{q^M}$ Gabidulin code. The resulting codeword is then subdivided into $\mu$ groups, each of length $\ell (n-r)$. These subblocks are then independently encoded using a generator matrix of an $[n,n-r;\ell]_q$ $d$-MSR code. This gives a $d$-MSR PMDS array code with subpacketization $\ell$ and field size $q^M$, where the only requirements on $\ell$ and $q$ are the constraints of the MSR code, and we require $M \geq \ell \mu(n-r)$ in order for the Gabidulin code to exist. An advantage of this construction over ours is that it does not require the MSR code to be row-wise MDS. However, the field size is exponential in the subpacketization (i.e., doubly exponential in $n$ for Ye--Barg codes).

\begin{table*}[ht]
\caption{Comparison of Field Sizes of $d$-MSR PMDS array code constructions (parameters: $d,n,\mu,r,s$ such that $r \leq n$, $s \leq (n-r)\mu$, and $n-r \leq d \leq n-1$).}
\label{tab:comparison_field_sizes}
\begin{center}
\def\arraystretch{1.5}
\begin{tabular}{l|l|c|l}
$\star$ &Construction & Restr. & Smallest field size $Q_\star = q^M$ obtained from the construction \\
\hline \hline
$\mathsf{A}$ & Constr.~\ref{con:SDcodesLocalRegeneration} (Corollary~\ref{cor:two_parities_construction}) & $s=2$ &
$\mu r(rn-r+n-2)+1 \leq \QA \leq 2\mu r(rn-r+n-2)$ \\
$\mathsf{B}$ & Constr.~\ref{constr:general_universal_pmds_row_wise_MSR_construction} + Gabidulin-based PMDS + Ye--Barg (Cor.~\ref{cor:universal_construction_Gabidulin}) & -- & $\QB = \left[(d+1-n+r)n\right]^{\mu(n-r)}$ \\
$\mathsf{C}$ & Constr.~\ref{constr:general_universal_pmds_row_wise_MSR_construction} + Lin.~RS-based PMDS + Ye--Barg (Cor.~\ref{cor:universal_construction_sum_rank}) & -- & $\QC = \max\big\{(d+1-n+r)n,\mu+1\big\}^{n-r}$ \\
$\mathsf{D}$ & Constr.~\ref{constr:general_universal_pmds_row_wise_MSR_construction} + ``small fields'' PMDS + Ye--Barg (Cor.~\ref{cor:universal_construction_small_fields}) & --
& $n\big[d+1-n+r\big] (n\mu)^{s(r+1)-1} \leq \QD$
\\
& & & \hfill $\leq 2n\big[d+1-n+r\big] (2n\mu)^{s(r+1)-1}$ \\
\hline
$\mathsf{E}$ & Known construction: \cite[Construction~1]{rawat2014} + Ye--Barg & -- & $\QE = [(d+1-n+r)n]^{(d+1-n+r)^n \mu (n-r)}$
\end{tabular}
\end{center}
\end{table*}

The following theorem states some relations between the minimal field sizes achievable by the five compared constructions. It can be interpreted as follows:
\begin{itemize}
\item Construction $\mathsf{C}$ has always smaller field size than Constructions $\mathsf{B}$ and $\mathsf{E}$.
\item For two global parities, Construction $\mathsf{A}$ has the smallest field size among all constructions (unless $r$ or $\mu$ is very large).
\item For a large number of global or local parities (and $s>2$), Construction $\mathsf{C}$ has the smallest field size among all constructions.
\item For a small number of global (but $s>2$) and local parities, Construction $\mathsf{D}$ has the smallest field size among all constructions.
\end{itemize}

\begin{theorem}\label{thm:comparison}
For all valid PMDS parameters $\mu,n,r,s$ and integers $d$ with $n-r < d \leq n-1$ (we exclude the trivial case $d=n-r$), denote by $\QA,\QB,\QC,\QD,\QE$ the smallest field sizes obtained from the constructions in Table~\ref{tab:comparison_field_sizes}.
\begin{enumerate}[label=(\roman*)]
\item\label{itm:relation_QC_QB_QE} For all parameters, we have $\QC < \QB < \QE$.
\item\label{itm:relation_s=2} For $s=2$, we have $\QA < \QD$.
If in addition, $r <n-3$, and $\mu \leq n^{n-r-3}$, then $\QA < \QC$.
\item\label{itm:range_QC_best} For $s(r+1)+2r-1\geq 2n$, we have $\QC < \QD$.
\item\label{itm:range_QD_best} For $2s(r+1)+r\leq n$, we have $\QD < \QC$.
\end{enumerate}

\end{theorem}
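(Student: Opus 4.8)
The plan is to fix the shorthand $b := d+1-(n-r)$ and first pin down the constraints it forces: since $n-r < d \leq n-1$ one has $2 \leq b \leq r$, hence $r \geq 2$, $n \geq r+1 \geq 3$, and (being valid PMDS parameters) $s \geq 1$, $\mu \geq 2$, so $bn \geq 6$. In this notation $\QB = (bn)^{\mu(n-r)}$, $\QC = \max\{bn,\mu+1\}^{n-r}$, $\QE = (bn)^{b^n\mu(n-r)}$, while \cref{cor:universal_construction_small_fields} gives $nb(n\mu)^{s(r+1)-1} \leq \QD \leq 2nb(2n\mu)^{s(r+1)-1}$ and \cref{cor:two_parities_construction} gives $\QA \leq 2\mu r(rn-r+n-2)$. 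The common strategy for all four parts is to replace the $Q$'s by crude polynomial bounds so that both sides become powers of one common base, and then compare exponents monotonically.

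For part~\ref{itm:relation_QC_QB_QE}, write $\QB = \big((bn)^\mu\big)^{n-r}$; then $\QC < \QB$ reduces to $\max\{bn,\mu+1\} < (bn)^\mu$, which holds because $bn \geq 6$ forces $(bn)^\mu \geq (bn)^2 > bn$ and $(bn)^\mu \geq 2^\mu > \mu+1$ for $\mu \geq 2$, and $\QB < \QE$ is immediate from $b^n \geq 2 > 1$. For part~\ref{itm:relation_s=2} (so $s=2$): bound $\QA \leq 2\mu r(rn-r+n-2) < 2\mu n^3$ (using $rn-r+n-2 < n(r+1)$ and $r(r+1) < n^2$) and $\QD \geq 2n(n\mu)^{2r+1} \geq 2n(n\mu)^5$; then $2\mu n^3 < 2n(n\mu)^5$ gives $\QA < \QD$. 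For the second assertion, the hypothesis $\mu \leq n^{n-r-3}$ upgrades the same bound to $\QA < 2n^{n-r}$, while $\QC \geq (bn)^{n-r} \geq (2n)^{n-r}$, and since $r < n-3$ gives $n-r \geq 2$, hence $2 < 2^{n-r}$, we conclude $\QA < 2n^{n-r} < (2n)^{n-r} \leq \QC$.

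The substantive parts are \ref{itm:range_QC_best} and \ref{itm:range_QD_best}; here I set $m := \max\{bn,\mu+1\}$ so that $\QC = m^{n-r}$, and I use the two basic facts $m \geq 2n$ and $m > \mu$. For part~\ref{itm:range_QD_best} the crucial inequalities are $2n\mu < m^2$ (from $m \geq 2n$ and $m > \mu$) and $2nb \leq 2nr < 2n^2 < m^2$; plugging these into $\QD \leq 2nb\,(2n\mu)^{s(r+1)-1}$ gives $\QD < m^2\cdot m^{2(s(r+1)-1)} = m^{2s(r+1)}$, and the hypothesis $2s(r+1)+r \leq n$, that is $2s(r+1) \leq n-r$, together with $m > 1$ yields $m^{2s(r+1)} \leq m^{n-r} = \QC$. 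For part~\ref{itm:range_QC_best} I bound $m$ from above instead: $bn < n^2 \leq (n\mu)^2$ and $\mu+1 \leq 2\mu \leq (n\mu)^2$, so $m < (n\mu)^2$ and $\QC < (n\mu)^{2(n-r)}$; meanwhile $\QD \geq 2n(n\mu)^{s(r+1)-1}$ and the hypothesis $s(r+1)+2r-1 \geq 2n$ rewrites as $s(r+1)-1 \geq 2(n-r)$, so $\QD \geq 2n(n\mu)^{2(n-r)} > (n\mu)^{2(n-r)} > \QC$.

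The main obstacle is part~\ref{itm:range_QD_best}: the naive route, bounding $\QD$ by a single power $(2n\mu)^{n-r}$ and matching exponents, fails, because $\QC = \max\{bn,\mu+1\}^{n-r}$ may be as small as $(2n)^{n-r}$ or $(\mu+1)^{n-r}$, each of which is dominated by $(2n\mu)^{n-r}$. What makes it work is the AM--GM-type estimate $m^2 \geq (2n)\mu \geq 2n\mu$, which lets one absorb the entire factor $(2n\mu)^{s(r+1)-1}$ (and the stray $2nb$) into a power of $m$ at the price of merely doubling the exponent --- and the hypothesis is exactly what pays for that doubling. I would be careful to verify the crude polynomial estimates in the extreme cases $b=2$, $r=2$, $\mu=2$, and small $n$, since that is where all the slack is being spent.
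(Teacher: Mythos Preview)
Your proof is correct and follows essentially the same strategy as the paper: reduce everything to crude polynomial bounds in a common base and compare exponents. The only noticeable variation is in part~\ref{itm:range_QC_best}, where the paper handles the two branches of $\max\{bn,\mu+1\}$ separately (showing $\QD > (bn)^{n-r}$ and $\QD > (\mu+1)^{n-r}$ one at a time), while you bound $m$ from above by $(n\mu)^2$ in one stroke; both routes are equally short and lead to the same conclusion under the same hypothesis.
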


\begin{IEEEproof}
We use $\mathsf{(a)}$ $a \leq a+1 < 3^a$ and $\mathsf{(b)}$ $a^b \geq ab$ for integers $a,b\geq 1$, which can both be proven easily by induction.

\emph{Ad \ref{itm:relation_QC_QB_QE}:} As $d>n-r$, we have $d+1-n-r>1$, so obviously $\QE > \QB$. If $(d+1-n-r)n \geq \mu+1$, it is clear that $\QB >\QC$ (here we use $\mu \geq 2$). In the case $(d+1-n-r)n \geq \mu+1$, we have
\begin{align*}
\QC=(\mu+1)^{n-r} \overset{\mathsf{(a)}}{<} 4^{\mu (n-r)} \leq \left[(d+1-n+r)n\right]^{\mu(n-r)},
\end{align*}
where $(d+1-n+r)n \geq 4$ holds by assumption.

\emph{Ad \ref{itm:relation_s=2}:} %
We have
\begin{align*}
\QA &\leq 2\mu r(rn-r+n-2) \\
&<2\mu n^2 [2(r+1)-1] \\
&\overset{\mathsf{(b)}}{\leq} 2 n (n\mu)^{2(r+1)-1}\\
&\leq n\big[d+1-n+r\big] (n\mu)^{s(r+1)-1}
\leq \QD,
\end{align*}
where we use $d+1-n+r\geq 2$. Furthermore, if also $r <n-3$ and $\mu \leq n^{n-r-3}$, we have
\begin{align*}
  \QA &\leq 2\mu r(rn-r+n-2)\\
        &< 2 \mu n^3 \leq 2 n^{n-r} \\
&\leq [(d+1-n+r)n]^{n-r} \leq \QC.
\end{align*}

\emph{Ad \ref{itm:range_QC_best}:} Denote $b := d+1-n+r$ and recall that $2 \leq b \leq r<n$. We must show $\QD > (bn)^{n-r}$ and $\QD > (\mu+1)^{n-r}$. We start with the first inequality:
\begin{align*}
  \QD &\geq nb (n\mu)^{s(r+1)-1} 
  \geq nb n^{s(r+1)-1}\\
&\geq nb \big(\underbrace{n^2}_{> nb}\big)^{\frac{s(r+1)-1}{2}}\\
&>  \big(nb\big)^{\frac{s(r+1)+1}{2}}
\geq (nb)^{n-r}.
\end{align*}
The second inequality holds since
\begin{align*}
\QD \geq nb (n\mu)^{s(r+1)-1}
> (\mu+1)^{s(r+1)-1}
\geq (\mu+1)^{n-r}.
\end{align*}

\emph{Ad \ref{itm:range_QD_best}:}
Define $\xi := \max\{nb,\mu\}$ (we use $b := d+1-n+r$ with $2 \leq b \leq r<n$ as above). It suffices to show $\QD < \xi^{n-r}$ under the given conditions.
We have
\begin{align*}
\QD \leq 2nb(\underbrace{2n}_{\leq nb \leq \xi}\mu)^{s(r+1)-1}
< \xi^{2s(r+1)}
\leq \xi^{n-r}
\leq \QC.
\end{align*}

This concludes the proof.
\end{IEEEproof}

Figures~\ref{fig:field_size_comparison_n10_mu5}, \ref{fig:field_size_comparison_n15_mu15}, and \ref{fig:field_size_comparison_n30_mu10} plot the field size bounds of Table~\ref{tab:comparison_field_sizes} over the number of local parities $r$ %
for different sets of PMDS code parameters and $d=n-1$.
The plots also illustrate the change of field size for a varying number of global parities $s$: For Construction~$\mathsf{D}$, we include several curves for different $s$. Note that the field sizes of Constructions~$\mathsf{B}$ and $\mathsf{C}$ are independent of $s$, and Construction~$\mathsf{A}$ exists only for $s=2$. Hence, we only need one curve for each of these three constructions.
The field size of Construction~$\mathsf{E}$ (known construction) is way out of the plot range, which is why it is not contained in the figures.
The plots confirm the statements of Theorem~\ref{thm:comparison} for these example parameters.

\begin{figure}[ht!]
\begin{center}
 \begin{tikzpicture}
\pgfplotsset{compat = 1.3}
\begin{axis}[
	legend style={nodes={scale=0.7, transform shape}},
	width = 1\columnwidth,
	height = 0.7\columnwidth,
	xlabel = {{Number of local parities $r$}},
	ylabel = {{Field Size $q^M$ (logarithmic)}},
	xmin = 1.0,
	xmax = 9,
	ymin = 1.0,
	ymax = 4149515568880992958512407863691161151012446232242436899995657329690652811412908146399707048947103794288197886611300789182395151075411775307886874834113963687061181803401509523685376,
	ytick={2^0, 2^100, 2^200, 2^300, 2^400, 2^500, 2^600},
	yticklabels={$2^{0}$, $2^{100}$, $2^{200}$, $2^{300}$, $2^{400}$, $2^{500}$, $2^{600}$},
	legend pos = north west,
	legend cell align=left,
	ymode=log,
	grid=both]

\addplot [name path=Construction_A_lb, draw=none, forget plot] table[row sep=\\] {
1.000000 86.000000 \\
2.000000 261.000000 \\
3.000000 526.000000 \\
4.000000 881.000000 \\
5.000000 1326.000000 \\
6.000000 1861.000000 \\
7.000000 2486.000000 \\
8.000000 3201.000000 \\
9.000000 4006.000000 \\
};

\addplot [name path=Construction_A_ub, solid, color=constructionAcolor, thick, mark=x, mark size=1.5pt] table[row sep=\\] {
1.000000 170.000000 \\
2.000000 520.000000 \\
3.000000 1050.000000 \\
4.000000 1760.000000 \\
5.000000 2650.000000 \\
6.000000 3720.000000 \\
7.000000 4970.000000 \\
8.000000 6400.000000 \\
9.000000 8010.000000 \\
};
\addlegendentry{{Construction~$\mathsf{A}$ (\textbf{only} $s=2$)}};

\addplot[constructionAcolor!20, forget plot] fill between[of=Construction_A_lb and Construction_A_ub];

\addplot [solid, color=constructionBcolor, thick, mark=*, mark size=1.5pt] table[row sep=\\] {
1.000000 999999999999999929757289024535551219930759168.000000 \\
2.000000 10995116277760000334016270611381010577371608280727552.000000 \\
3.000000 5003154509899970980142786282433158606671303048953856.000000 \\
4.000000 1152921504606846998925411587525957310856881504256.000000 \\
5.000000 2980232238769531477384860548135250905006080.000000 \\
6.000000 365615844006297569556580213138128896.000000 \\
7.000000 4747561509942999819332091904.000000 \\
8.000000 10737418240000000000.000000 \\
9.000000 5904900000.000000 \\
};
\addlegendentry{{Construction~$\mathsf{B}$ (any $s \geq 1$)}};

\addplot [solid, color=constructionCcolor, thick, mark=*, mark size=1.5pt] table[row sep=\\] {
1.000000 1000000000.000000 \\
2.000000 25600000000.000000 \\
3.000000 21870000000.000000 \\
4.000000 4096000000.000000 \\
5.000000 312500000.000000 \\
6.000000 12960000.000000 \\
7.000000 343000.000000 \\
8.000000 6400.000000 \\
9.000000 90.000000 \\
};
\addlegendentry{{Construction~$\mathsf{C}$ (any $s \geq 1$)}};

\addplot [name path=A, solid, draw=none, forget plot] table[row sep=\\] {
1.000000 500.000000 \\
2.000000 50000.000000 \\
3.000000 3750000.000000 \\
4.000000 250000000.000000 \\
5.000000 15625000000.000000 \\
6.000000 937500000000.000000 \\
7.000000 54687500000000.000000 \\
8.000000 3125000000000000.000000 \\
9.000000 175781250000000000.000000 \\
};

\addplot [name path=B, solid, color=constructionDcolor, thick, mark=*, mark size=1.5pt] table[row sep=\\] {
1.000000 2000.000000 \\
2.000000 400000.000000 \\
3.000000 60000000.000000 \\
4.000000 8000000000.000000 \\
5.000000 1000000000000.000000 \\
6.000000 120000000000000.000000 \\
7.000000 14000000000000000.000000 \\
8.000000 1600000000000000000.000000 \\
9.000000 180000000000000000000.000000 \\
};
\addlegendentry{{Construction~$\mathsf{D}$ ($s=1$)}};

\addplot[constructionDcolor!20, forget plot] fill between[of=A and B];

\addplot [name path=A, densely dashed, draw=none, forget plot] table[row sep=\\] {
1.000000 1250000.000000 \\
2.000000 6250000000.000000 \\
3.000000 23437500000000.000000 \\
4.000000 78125000000000000.000000 \\
5.000000 244140624999999995904.000000 \\
6.000000 732421874999999942623232.000000 \\
7.000000 2136230468750000059658010624.000000 \\
8.000000 6103515624999999808555060297728.000000 \\
9.000000 17166137695312499829730376624898048.000000 \\
};

\addplot [name path=B, densely dashed, color=constructionDcolor, thick, mark=*, mark size=1.5pt] table[row sep=\\] {
1.000000 20000000.000000 \\
2.000000 400000000000.000000 \\
3.000000 6000000000000000.000000 \\
4.000000 80000000000000000000.000000 \\
5.000000 999999999999999983222784.000000 \\
6.000000 11999999999999999059939033088.000000 \\
7.000000 140000000000000003909747384254464.000000 \\
8.000000 1599999999999999949813857726687608832.000000 \\
9.000000 17999999999999999821459359399829095579648.000000 \\
};
\addlegendentry{{Construction~$\mathsf{D}$ ($s=2$)}};

\addplot[constructionDcolor!20, forget plot] fill between[of=A and B];

\addplot [name path=A, densely dotted, draw=none, forget plot] table[row sep=\\] {
1.000000 19531250000000000.000000 \\
2.000000 12207031249999998954242048.000000 \\
3.000000 5722045898437499943243458874966016.000000 \\
4.000000 2384185791015624872422878617163131999223808.000000 \\
5.000000 931322574615478468488096536206099080819305974071296.000000 \\
6.000000 349245965480804424089196129715476509752726000892262668566528.000000 \\
7.000000 127329258248209946158414764295162226343119332685454272741710713847808.000000 \\
8.000000 45474735088646411657176293006980327989274251399675529511087777214883457662976.000000 \\
9.000000 15987211554602254091255863182514849771071640250139833814846410037018924956292845928448.000000 \\
};

\addplot [name path=B, densely dotted, color=constructionDcolor, thick, mark=*, mark size=1.5pt] table[row sep=\\] {
1.000000 20000000000000000000.000000 \\
2.000000 399999999999999965732603428864.000000 \\
3.000000 5999999999999999940486453133276365193216.000000 \\
4.000000 79999999999999995719222155803854345574979318317056.000000 \\
5.000000 999999999999999949387135297074018866963645011013410073083904.000000 \\
6.000000 11999999999999999337881695712461963155553465765449079847479399682146304.000000 \\
7.000000 139999999999999992138831935449925184808511285378398708820252613393702318409515008.000000 \\
8.000000 1599999999999999991605865203068878917052713767210914809951067115710532936612925522189484032.000000 \\
9.000000 17999999999999999897718261530532710659157074050397725173125021793518205539544203961683388933500567552.000000 \\
};
\addlegendentry{{Construction~$\mathsf{D}$ ($s=5$)}};

\addplot[constructionDcolor!20, forget plot] fill between[of=A and B];

\addplot [name path=A, densely dashdotted, draw=none, forget plot] table[row sep=\\] {
1.000000 1907348632812499885004360907751424.000000 \\
2.000000 372529029846191395702913588138163837976516516380672.000000 \\
3.000000 54569682106375689500374697124389996157824549777796045886247535640576.000000 \\
4.000000 7105427357601001626620762892490939136177510603478108016289821900608784559393369751552.000000 \\
5.000000 867361737988403529861735792751664312548458620700562065980322993949725939230786732913454912851155091456.000000 \\
6.000000 101643953670516039546855471846051578348063020649230710653850750358308503368293990466138783849149859849812805131748507648.000000 \\
7.000000 11580528575742387265577146326149141917123028578298023294218838984664726169485503796371183103342896046623065560338384790481374029750992896.000000 \\
8.000000 1292469707114105716725898962984866781878073952566925822205370102506903643416657233306553466534324559919010071327293222922613649451853064066442857663168512.000000 \\
9.000000 141994962939782115313243780036356686496170365146979008030366407056633877415738063271302111385412205503610798208299123875344935595805991337521535878002608856192081532551168.000000 \\
};

\addplot [name path=B, densely dashdotted, color=constructionDcolor, thick, mark=*, mark size=1.5pt] table[row sep=\\] {
1.000000 1999999999999999879418332743206357172224.000000 \\
2.000000 399999999999999988675152198241856803399745313464708631625728.000000 \\
3.000000 59999999999999994750351335397161031026540056525198869514124416444638292634238976.000000 \\
4.000000 7999999999999999738689350432933530190796265486452443783343694235770366767373822900869918947643752448.000000 \\
5.000000 999999999999999980003468347394201181668805192897008518188648311830772414627428725464789434929992439754776075181077037056.000000 \\
6.000000 119999999999999998652539748382602337863119043700985084834240147436799886550664197270033827984836522920560257056396413488708122675918512586752.000000 \\
7.000000 13999999999999999779222853399355948657679576029124143390968568097231601403057757636977298632007851838112269151912848983288557052567298652953860800181785063325696.000000 \\
8.000000 1599999999999999968728796375704226551326022227947620350626335730401824086181252359348311027751967738390754377567103205414719952050695043910145556653040705124872716499471129733234688.000000 \\
9.000000 179999999999999987753395690197246235478681545710305822948043244804615710231112366996387842909562546301169493353818131772944426897907688275914677551930379115725986735707903862379086997258067628332679168.000000 \\
};
\addlegendentry{{Construction~$\mathsf{D}$ ($s=10$)}};

\addplot[constructionDcolor!20, forget plot] fill between[of=A and B];

\end{axis}
\end{tikzpicture}
\end{center}
\vspace{-0.5cm}
\caption{Comparison of field sizes of Constructions~$\mathsf{A}$--$\mathsf{D}$ for $n=10$, $\mu=5$, and $d=9$. Construction~$\mathsf{E}$ is not shown as it is out of plot range. Lines are upper bounds, shadows indicate lower bounds. The field sizes of Constructions~$\mathsf{B}$ and $\mathsf{C}$ are independent of $s$.}
\label{fig:field_size_comparison_n10_mu5}
\end{figure}
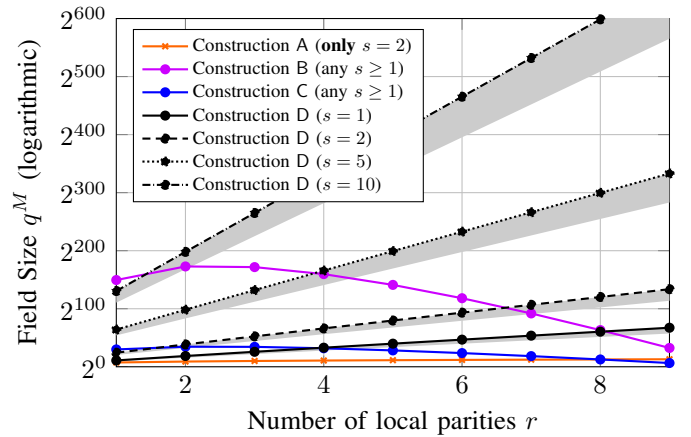

\begin{figure}[ht!]
\begin{center}
 \input{field_size_comparison_n=15_mu=15_d=14_combined_new.tex}
\end{center}
\vspace{-0.5cm}
\caption{Comparison of field sizes of Constructions~$\mathsf{A}$--$\mathsf{D}$ for $n=15$, $\mu=15$, and $d=14$. Construction~$\mathsf{E}$ is not shown as it is out of plot range. Lines are upper bounds, shadows indicate lower bounds. The field sizes of Constructions~$\mathsf{B}$ and $\mathsf{C}$ are independent of $s$.}
\label{fig:field_size_comparison_n15_mu15}
\end{figure}

\begin{figure}[ht!]
\begin{center}
 \input{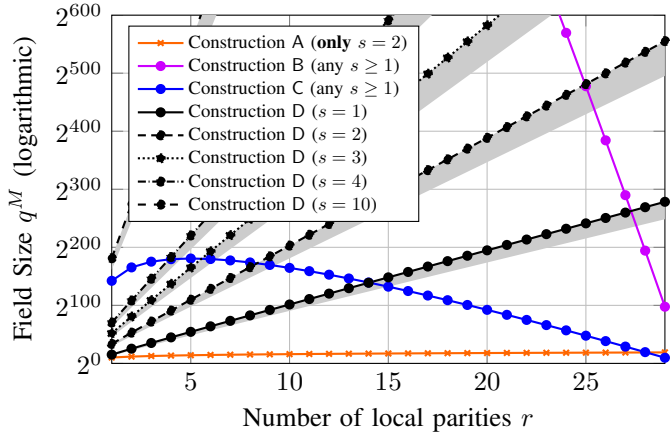}
\end{center}
\vspace{-0.5cm}
\caption{Comparison of field sizes of Constructions~$\mathsf{A}$--$\mathsf{D}$ for $n=30$, $\mu=10$, and $d=29$. Construction~$\mathsf{E}$ is not shown as it is out of plot range. Lines are upper bounds, shadows indicate lower bounds. The field sizes of Constructions~$\mathsf{B}$ and $\mathsf{C}$ are independent of $s$.}
\label{fig:field_size_comparison_n30_mu10}
\end{figure}


\section{PMDS codes with non-trivial global regeneration}\label{sec:global_regeneration}

By definition, a PMDS code punctured in arbitrary $r$ positions per group is an MDS code of distance $s+1$. In the following we construct PMDS codes where each of these MDS codes is an MSR code.
For the sake of simplicity, we focus on the case of highest practical interest: MSR codes with that repair one position ($h=1$) from all remaining positions ($d=\mu(n-r)-1$) of the MDS code.

The construction is based on two main observations: first, the principle used in the MSR codes of \cite{ye2017optimalRepair} can also be applied using Gabidulin codes (recall \cref{def:GabidulinCodes} in Section~\ref{sec:preliminaries}) instead of RS codes; second, performing linearly independent linear combinations of the symbols of a Gabidulin code yields another Gabidulin code with different code locators. Using these observations and carefully choosing the code locators for each row in an array of Gabidulin codewords, we assure the code obtained from puncturing $r$ positions in each local group is MSR.
The construction that we study works as follows.

\begin{construction}[Globally MSR PMDS array codes]\label{con:GloballyRegeneratingPMDS}

  Let $\mu,n,r,s$ be valid PMDS parameters and $\ve{B} \in \FqM^{\ell \times \mu (n-r)}$ be a matrix with entries $\beta_{i,j}, i\in[\ell], j\in [\mu (n-r)]$.
We define the following $[\mu n, \mu(n-r)-s; \ell]_{q^M}$ array code $\code(\mu ,n,r,s,\ve{B};\ell)_{q}$ as
\begin{align*}
\Big\{
  \bC \in \F_{q}^{\ell \times \mu n} \, : \, \bC_{a,:} &= \bu^{(a)} \cdot \bGa_{\ve{B}} \cdot \diag(\bG_{\mathsf{MDS}},\bG_{\mathsf{MDS}},\ldots) \\
  &\forall \, \bu^{(a)} \in \F_{q^m}^{\mu(n-r)-s},  a=0,\dots,\ell-1 \Big\},
\end{align*}
where $\bGa_{\ve{B}}$ is the generator matrix of the code $\Gab(\mu(n-r),\mu(n-r)-s, \ve{B}_{a,:})$ code as in \cref{def:GabidulinCodes} and $\bG_{\mathsf{MDS}}$ is a generator matrix of an $[n,n-r]_{q}$ MDS code.
\end{construction}

It is easy to see that if the rows of the matrix $\ve{B}$ in \cref{con:GloballyRegeneratingPMDS} contain linearly independent elements, then each row of the code is a PMDS code of the code family constructed in \cite{rawat2014}.
In the remainder of this section, we prove that if the matrix $\ve{B}$ is chosen in a suitable way, then the MDS array codes obtained from erasing $r$ positions in each local group are MSR codes of the following type, which can be seen as a Gabidulin-analog of Ye--Barg codes.

\begin{definition}[Skew Ye--Barg $d$-MSR codes]\label{def:skewYeBarg}
  Let $\mu,n,r,s$ be valid PMDS parameters, $\ell \in \ZZ_{>0}$, and $\ve{B} \in \F_{q^m}^{\ell \times \mu (n-r)}$ be a matrix with entries $\ve{B}_{i,j} =\beta_{i,j}$.
  Define $\code(\mu, n, r, s, \ve{B}) \subset \F_q^{\ell \times \mu (n-r)}$ to be an $[\mu (n-r),\mu (n-r) -s;\ell]$ array code over $\F_{q^m}$,
  where each codeword is a matrix with $\ell$ rows and $\mu(n-r)$ columns, where the $a$-th row is a codeword of a code with parity-check matrix
  \begin{equation*}
    \bHa_{\ve{B}} =
      \begin{bmatrix}
        \beta_{a,1} & \beta_{a,2} & \hdots & \beta_{a,{\mu (n-r)}}\\
        \beta_{a,1}^{q^1} & \beta_{a,2}^{q^1} & \hdots & \beta_{a,{\mu (n-r)}}^{q^1}\\
        \vphantom{\int\limits^x}\smash{\vdots} & \vphantom{\int\limits^x}\smash{\vdots} & & \vphantom{1}\smash{\vdots} \\
        \beta_{a,1}^{q^{s-1}} & \beta_{a,2}^{q^{s-1}} & \hdots & \beta_{a,{\mu (n-r)}}^{q^{s-1}}
      \end{bmatrix}\ ,
  \end{equation*}
  for $a \in [0,\ell-1]$. %
  Denote by $\bGa_{\ve{B}}$ a generator matrix corresponding to $\bHa_{\ve{B}}$.
\end{definition}
\begin{remark}
  \cref{def:skewYeBarg} is essentially the same as \cref{def:yeBarg}, except that it relies on Gabidulin codes.
  Note that there is also a difference in presentation: the locators are not given as a set of elements, but instead given explicitly as an input for each row.
  For \cref{def:yeBarg} the corresponding matrix $\ve{B}$ is easily obtained from a set $\cB = \{\beta_{i,j} \}_{i\in [b], j\in [n]}$ of distinct elements of $\FqM$ by assigning $B_{a,j} = \beta_{a_{j},j}$ for $a \in [0,\ell-1]$ and $a = \sum_{i = 1}^{n} a_i b^{i-1}$ with $a_i \in [0,b-1]$, i.e., assigning the code locators of $\bHa$ to the $a$-th row of $\ve{B}$.
\end{remark}

For the node repair algorithm of Ye--Barg codes \cite{ye2017optimalRepair}, it is essential that the rows of a codeword can be partitioned into subsets for which there exist parity checks that differ exactly in position $i$ , i.e., for which all entries are the same except for those at position $i$, which are all distinct (this is due to the close relation of Ye--Barg to Reed--Solomon codes). %
In \cref{lem:YBgroupingGivesMSR} below, we analogously prove that Skew Ye--Barg codes are MSR codes if they have the following property (which is due to their relation to Gabidulin codes).

\begin{definition}[YB-Grouping Property]\label{def:YBgrouping}
  Let $\mu,n,r,s$ be valid PMDS parameters and $\ve{B} \in \F_{q^m}^{\ell \times \mu (n-r)}$. We say that the matrix $\ve{B}$ has the \textbf{YB-grouping property w.r.t.\ $s$} if for each position $i$ the rows of the matrix can be partitioned into $\frac{\ell}{s}$ subsets of $s$ rows for which the elements in the $i$-th position are linearly independent and the elements in all other positions are the same for all $s$ rows.
\end{definition}

\begin{lemma}\label{lem:YBgroupingGivesMSR}
  Let $\mu,n,r,s$ be valid PMDS parameters and $\ve{B} \in \FqM^{\ell \times \mu (n-r)}$ be a matrix such that for any $a \in [\ell]$ the elements of its $a$-th row $\ve{B}^{(a)}$ are linearly independent over $\F_q$. Further, let $\ve{B}$ have the Ye-Barg grouping property w.r.t.\ s as in \cref{def:YBgrouping}. Then the code $\code(\mu, n, r, s, \ve{B})$ as in~\cref{def:skewYeBarg} is an MSR code.
\end{lemma}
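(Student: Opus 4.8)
The plan is to mirror Ye and Barg's proof for their MSR codes \cite{ye2017optimalRepair}, substituting the $q$-linearized (Gabidulin) structure for the polynomial (Reed--Solomon) one; the two hypotheses on $\ve{B}$ are exactly what makes this substitution go through. First I would establish the MDS property. For each $a$, the $\F_q$-linear independence of the entries of the $a$-th row of $\ve{B}$ guarantees that $\Ha_{\ve{B}}$ in \cref{def:skewYeBarg} is a genuine parity-check matrix of the Gabidulin code $\Gab(\mu(n-r),\mu(n-r)-s,\ve{B}^{(a)})$ of \cref{def:GabidulinCodes}. A Gabidulin code attains the Singleton bound in the rank metric, and the rank weight of a vector never exceeds its Hamming weight, so each row code has Hamming distance exactly $s+1$ and is MDS. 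Consequently any $\mu(n-r)-s$ columns determine every row of a codeword, hence the whole array, so $\code(\mu,n,r,s,\ve{B})$ is an $[\mu(n-r),\mu(n-r)-s;\ell]$ MDS array code --- the prerequisite before one can even speak of the MSR property.

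The core is the repair scheme. By \cref{def:regeneratingCode} with $h=1$, $d=\mu(n-r)-1$, and the role of ``$r$'' played by $s$, it suffices to repair any single node $i$ while downloading a $1/s$ fraction of the content of each of the remaining $\mu(n-r)-1$ nodes, since the cut-set value is then $\frac{(\mu(n-r)-1)\ell}{1+(\mu(n-r)-1)-\mu(n-r)+s}=\frac{(\mu(n-r)-1)\ell}{s}$. Fix $i$ and invoke the YB-grouping property w.r.t.\ $s$ (\cref{def:YBgrouping}): partition the $\ell$ rows into $\ell/s$ groups of $s$ rows so that inside a group $G=\{a_1,\dots,a_s\}$ the locators $\beta_{a_1,i},\dots,\beta_{a_s,i}$ are $\F_q$-linearly independent, while for each $j\neq i$ the locators $\beta_{a_1,j},\dots,\beta_{a_s,j}$ all equal a common value $\tilde\beta_{G,j}$. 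From helper $j\neq i$ the repairer downloads, for each group $G$, the single field element $S_{G,j}:=\sum_{u=1}^{s}c_{a_u,j}$; over the $\ell/s$ groups this is $\ell/s$ elements per helper, i.e.\ a $1/s$ fraction of node $j$, which is the target.

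For correctness, fix a group $G$ and $t\in\{0,\dots,s-1\}$. Since the column-$j$ locators agree across $G$ for all $j\neq i$, the parity-check equation $\sum_{j}\beta_{a_u,j}^{q^t}c_{a_u,j}=0$ of row $a_u$ rewrites as $\beta_{a_u,i}^{q^t}c_{a_u,i}=-\sum_{j\neq i}\tilde\beta_{G,j}^{q^t}c_{a_u,j}$. Summing over $u=1,\dots,s$ gives $\sum_{u=1}^{s}\beta_{a_u,i}^{q^t}c_{a_u,i}=-\sum_{j\neq i}\tilde\beta_{G,j}^{q^t}S_{G,j}=:V_t$, which is computable from the downloaded $S_{G,j}$ and the known locators $\tilde\beta_{G,j}$. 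Viewing $c_{a_1,i},\dots,c_{a_s,i}$ as unknowns, these $s$ identities form an $s\times s$ linear system whose coefficient matrix is the Moore matrix $(\beta_{a_u,i}^{q^t})_{0\le t\le s-1,\,1\le u\le s}$, which is invertible precisely because $\beta_{a_1,i},\dots,\beta_{a_s,i}$ are $\F_q$-linearly independent --- the one place where the YB-grouping hypothesis is consumed. Solving recovers the $i$-th entries of the $s$ rows of $G$; ranging over the $\ell/s$ groups recovers all of node $i$, so the cut-set bound is attained with equality for every $i$, and $\code(\mu,n,r,s,\ve{B})$ is an MSR code.

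I expect the point requiring the most care --- rather than a genuine obstacle, since everything is a faithful translation of \cite{ye2017optimalRepair} --- to be the exact accounting that the per-helper and total download coincide with the cut-set value of \cref{def:regeneratingCode} (including bookkeeping about the base field over which symbols and subpacketization are counted), together with making explicit that the $\F_q$-independence conditions --- row-wise, ensuring the rows are honest Gabidulin codewords with $s$ usable parity checks, and within groups, ensuring the Moore system is invertible --- are exactly what the argument needs.
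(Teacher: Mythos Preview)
Your proposal is correct and follows essentially the same approach as the paper: establish MDS via the Gabidulin structure of each row, then for each group in the YB-grouping partition download the per-group sums $\sum_{a\in G} c_{a,j}$ from each helper, sum the row parity checks over $G$, and invert the resulting $s\times s$ Moore system in the $\F_q$-independent locators $\beta_{a,i}$ to recover the failed entries. The paper's proof is slightly terser (it just says the equations are linearly independent rather than naming the Moore matrix), but the argument is identical.
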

\begin{IEEEproof}
  The MDS property follows directly as  each row is a codeword of a Gabidulin code, which are well-known to be MDS. It is easy to check that the recovery algorithm of \cite[Theorem~1]{ye2017optimalRepair} also applies to the code of~\cref{def:skewYeBarg}. For completeness we include a short proof here. Assume node $i$ failed, i.e., we need to recover the set $\{\bC_{a,i} \ \forall \ a \in [\ell] \}$ from the helper nodes with indices $[\mu (n-r)] \setminus \{i\}$.
  Denote by $\mathbb{A}_i = \{\cA_{i,1},\cA_{i,2},\ldots , \cA_{i,\frac{\ell}{s}}\}$ the partition of $[\ell]$ into the subsets $\cA_{i,z}$ of $s$ row indices for which the parity check equations differ exactly in position $i$ and the entries in position $i$ are linearly independent. Note that such a partition exists for every $i\in [\mu (n-r)]$ by definition of the Ye-Barg grouping property.
 The $a$-th row of a codeword $\bC\in \code$ is determined by the $s$ parity checks
  \begin{align*}
    0 = \sum_{j=1}^{\mu (n-r)} \beta_{a,j}^{q^t}\bC_{a,j} = \beta_{a,i}^{q^t}\bC_{a,i} + \sum_{\underset{j\neq i}{j=1}}^{\mu (n-r)} \beta_{a,j}^{q^t}\bC_{a,j}
  \end{align*}
  for $t\in [0,s-1]$. Observe that $\beta_{\cA_{i,z},j} \coloneqq \beta_{a,j} = \beta_{a',j} \ \forall \ a,a' \in \cA_{i,z}, j\neq i$.
  Summing over $a\in \cA_{i,z}$ gives
  \begin{align}
    \sum_{a\in \cA_{i,z}} \beta_{a,i}^{q^t}\bC_{a,i}&= \sum_{a\in\cA_{i,z}} \sum_{\underset{j\neq i}{j=1}}^{\mu n} \Big( \beta_{a,j}^{q^t}\bC_{a,j} \Big) \nonumber \\
    &= \sum_{\underset{j\neq i}{j=1}}^{\mu n} \Big( \beta_{\cA_{i,z},j}^{q^t}\sum_{a\in\cA_{i,z}}\bC_{a,j} \Big) \ . \label{eq:recoverySum}
  \end{align}
  This is a linear system of equations with $s$ unknowns $\bC_{a,i}, a \in \cA_{i,z}$ and $s$ equations, one for each $t \in [0,s-1]$. As the elements $\{\beta_{a,i} \ | \  a \in \cA_{i,z}\}$ are linearly independent by definition (recall that $\ve{B}$ has the Ye-Barg grouping property), the equations are linearly independent. Hence, the unknowns can be uniquely determined if the right hand side of~\cref{eq:recoverySum} is known. Therefore, for repair of node $i$, node $j$ transmits the set of symbols
  \begin{align*}
    \left\{ \sum_{a \in \cA_{i,z}}\bC_{a,j} \ | \ z \in [\ell/s] \right\} \ .
  \end{align*}
  As the cardinality of this set is $\frac{\ell}{s}$, the repair bandwidth is $(\mu (n-r) -1) \frac{\ell}{s}$ and thereby fulfills the bound on the minimal repair bandwidth of \cref{def:regeneratingCode} with equality.
\end{IEEEproof}

The code in \cref{con:GloballyRegeneratingPMDS} can be obtained from a skew Ye--Barg code by multiplying it from the right by the $\mu(n-r) \times \mu n$ matrix $\diag(\bG_{\mathsf{MDS}},\bG_{\mathsf{MDS}},\ldots)$. When puncturing arbitrary $r$ positions in each local group, we do not obtain the original skew Ye--Barg code. However, we do get the original code multiplied by an invertible matrix over $\Fq$ from the right. The rows of such a code are again Gabidulin codes by the following well-known result.
For completeness, we include a short proof of the property.

\begin{lemma}[{\cite[Lemma~3]{berger2003isometries}}]\label{lem:GabidulinCodeLocatorTransformation}
  Let $\bG\in \FqM^{k\times n}$ be a generator matrix of an $[n,k,\dmin]$ Gabidulin code $\Gab(n,k,\bb)$. Then, for any full-rank matrix $\bA \in \Fq^{n\times n}$, the code
  \begin{align*}
    \code' = \left\langle \bG \cdot \bA \right\rangle
  \end{align*}
  is an $[n,k,\dmin]$ Gabidulin code $\Gab(n,k,\bb')$ with
  \begin{align*}
    \bb' =  \bb \cdot \bA^{-1} \ .
  \end{align*}
\end{lemma}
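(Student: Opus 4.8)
The plan is to pin down $\code'$ by transforming a parity-check matrix of $\code$ and exploiting that the Frobenius map $x\mapsto x^{q}$ is $\Fq$-linear. First I would record the identification $\code' = \langle \G \cdot \vA\rangle = \{\bc\vA : \bc\in\code\}$: right-multiplying the generator matrix by $\vA$ is the same as applying the map $\bc\mapsto\bc\vA$ to every codeword. Since $\vA\in\Fq^{n\times n}$ is invertible, this map is an $\Fq$-linear (indeed $\FqM$-linear) bijection of $\FqM^{n}$, so $\code'$ is again an $\FqM$-linear code of length $n$ and dimension $k$; in particular, it suffices to exhibit a parity-check matrix of $\code'$ that is of the Gabidulin form in \cref{def:GabidulinCodes} with locator vector $\vb' = \vb\vA^{-1}$.

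The main computation is then short. Let $\H$ be the parity-check matrix of $\code=\Gab(n,k,\vb)$ from \cref{def:GabidulinCodes}, whose $i$-th row is the entrywise $q^{i}$-th power of $\vb$ for $i=0,\dots,n-k-1$; write $\vb^{[i]}$ for this entrywise power. From $\code=\{\bc:\bc\H^{\!\top}=\0\}$ one checks by elementary linear algebra that applying to $\H$ the column operations encoded by $\vA$ (concretely $\H\vA^{-1}$, up to the routine transpose bookkeeping that fixes which side $\vA$ acts on and is exactly why the statement is phrased with $\vA^{-1}$) yields a parity-check matrix of $\code'=\code\vA$; the verification is a one-line matrix identity of the form $(\bc\vA)\cdot(\cdot)^{\!\top}=\bc\H^{\!\top}=\0$ together with a dimension count. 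Now comes the one real idea: for any $\vec v\in\FqM^{n}$ and any $M\in\Fq^{n\times n}$ we have $(\vec vM)^{[i]}=\vec v^{[i]}M$, because $M$ has entries in $\Fq$ and the Frobenius $x\mapsto x^{q^{i}}$ is additive, multiplicative, and fixes $\Fq$ pointwise. Applying this with $\vec v=\vb$ and $M=\vA^{-1}$, the $i$-th row of the transformed parity-check matrix is $\vb^{[i]}\vA^{-1}=(\vb\vA^{-1})^{[i]}=(\vb')^{[i]}$, i.e.\ exactly the $i$-th row of the parity-check matrix of $\Gab(n,k,\vb')$ in \cref{def:GabidulinCodes}.

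Finally I would verify that $\vb'$ is a legitimate Gabidulin locator vector, namely that its entries are $\Fq$-linearly independent: the $\Fq$-span of the entries of $\vb'=\vb\vA^{-1}$ is contained in, and (since $\vA$ is invertible over $\Fq$) equal to, the $\Fq$-span of the entries of $\vb$, which has dimension $n$ by hypothesis; hence the $n$ entries of $\vb'$ are $\Fq$-linearly independent, and in particular $n\le M$ is automatic. Combining this with the previous paragraph, $\code'$ and $\Gab(n,k,\vb')$ have the same parity-check matrix and therefore coincide; since every $[n,k]$ Gabidulin code meets the Singleton-like bound in the rank metric, $\code'$ has minimum rank distance $\dmin=n-k+1$, giving the asserted parameters. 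I do not expect a genuine obstacle here: the whole content is the $\Fq$-linearity of Frobenius used above, and the only step requiring real care is the linear-algebra bookkeeping with $\vA$ and its transpose/inverse.
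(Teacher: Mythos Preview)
Your proposal is correct and follows essentially the same route as the paper: both insert $\vA\vA^{-1}$ between $\G$ and $\H^{\top}$, use the $\Fq$-linearity of Frobenius to recognize the resulting matrix as the parity-check matrix of $\Gab(n,k,\vb\vA^{-1})$, and then verify that the entries of $\vb\vA^{-1}$ remain $\Fq$-linearly independent. Your write-up is simply more explicit about the transpose bookkeeping and the dimension count than the paper's short proof.
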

\begin{IEEEproof}
  Let $\bH, \bH' \in \FqM^{n-k \times n}$ be the parity-check matrix of the code $\Gab(n,k,\bb)$, $\Gab(n,k,\bb')$, respectively, as in \cref{def:GabidulinCodes}. By definition we have
  \begin{align*}
    \bG \cdot \bH^T &= \0 \\
    \bG \cdot \bA \cdot \underbrace{\bA^{-1} \bH^T}_{\stackrel{\mathsf{(a)}}{=}\bH'^T} &= \0 \ ,
  \end{align*}
  where $\mathsf{(a)}$ follows from the fact that $\lambda_i \beta_i^{q^l} + \lambda_j \beta_j^{q^l} = (\lambda_i \beta_i + \lambda_j \beta_j)^{q^l} \ \forall \ \lambda_{i},\lambda_j \in \Fq$. As $\bA$ is of full rank over $\Fq$, we have $\rank_q(\bb') = \rank_q(\bb)$ and, in particular, if the elements of $\bb$ are linearly independent, so are the elements of $\bb'$, thereby fulfilling the requirements of~\cref{def:GabidulinCodes} on the code locators.
\end{IEEEproof}

Using the intermediate statements above, the following theorem gives a sufficient condition on the matrix $\ve{B}$ for \cref{con:GloballyRegeneratingPMDS} to give a globally-MSR PMDS code.

\begin{theorem}\label{thm:globalRegeneration}
Let $\mu,n,r,s$ be valid PMDS parameters, $\cW = \{W_1,W_2,\ldots,W_{\mu}\}$ be a partition of $[\mu n]$ with $|W_i|=n \ \forall \ i\in [\mu]$.
Then, the code $\code(\mu ,n,r,s,\ve{B};\ell)_{q^M}$ as in~\cref{con:GloballyRegeneratingPMDS} is a globally-MSR PMDS code if the matrix
\begin{equation*}
\ve{B} \cdot (\diag(\bG_{\mathsf{RS}},\bG_{\mathsf{RS}},...)|_{[\mu n] \setminus \cup_{i=1}^{\mu} E_i})^{-1}
\end{equation*}
has the \emph{YB grouping property} (as in~\cref{def:YBgrouping}) for any $E_i \subset W_i$ with $|E_i|=r$.
\end{theorem}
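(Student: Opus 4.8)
The plan is to establish the two defining properties of a globally-MSR PMDS code separately. The notational heart of the argument is that puncturing $r$ positions in each local group amounts, in every row of the array, to multiplying the Gabidulin codeword by one fixed matrix over $\Fq$ that is invertible. Concretely, fix an erasure pattern $E_i\subset W_i$ with $|E_i|=r$ for all $i$, set $E:=\bigcup_{i=1}^{\mu}E_i$, and write $\vA_E:=\diag(\G_{\mathsf{MDS}},\G_{\mathsf{MDS}},\ldots)|_{[\mu n]\setminus E}$ for the block-diagonal matrix of~\cref{con:GloballyRegeneratingPMDS} restricted to the surviving columns (this is the matrix denoted $\diag(\G_{\mathsf{RS}},\ldots)$ in the statement). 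Since $\G_{\mathsf{MDS}}$ generates an $[n,n-r]_q$ MDS code, any $n-r$ of its columns are linearly independent, so each of the $\mu$ diagonal blocks of $\vA_E$ is square and invertible over $\Fq$, whence $\vA_E\in\Fq^{\mu(n-r)\times\mu(n-r)}$ is invertible. I would also record at the outset that~\cref{con:GloballyRegeneratingPMDS} is well-defined only because each row $\ve{B}_{a,:}$ consists of $\Fq$-linearly independent elements (so that $\Ga_{\ve{B}}$ is a genuine Gabidulin generator matrix, cf.~\cref{def:GabidulinCodes}), and that by~\cref{lem:GabidulinCodeLocatorTransformation} this property passes to every row of $\ve{B}\vA_E^{-1}$.

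For the PMDS property I would argue row by row, since the MDS conditions in~\cref{def:pmds} constrain each individual codeword while the $\ell$ rows of a codeword of $\code(\mu,n,r,s,\ve{B};\ell)_{q^M}$ vary independently. The $a$-th row has the form $\bu^{(a)}\Ga_{\ve{B}}\diag(\G_{\mathsf{MDS}},\ldots)$, i.e.\ exactly a codeword of the Gabidulin-based PMDS construction of~\cite{rawat2014}. Restricting it to a local group $W_i$ yields $(\bu^{(a)}\Ga_{\ve{B}})_i\,\G_{\mathsf{MDS}}$, where $(\bu^{(a)}\Ga_{\ve{B}})_i\in\FqM^{n-r}$ is the $i$-th length-$(n-r)$ sub-block of a codeword of $\Gab(\mu(n-r),\mu(n-r)-s,\ve{B}_{a,:})$; as valid PMDS parameters satisfy $\mu(n-r)-s\geq n-r$ and Gabidulin codes are Hamming-MDS, this sub-block ranges over all of $\FqM^{n-r}$, so $\code|_{W_i}$ is the array code whose $\ell$ rows range independently over $\langle\Clocal\rangle_{\FqM}$, i.e.\ an $[n,n-r,r+1;\ell]$ MDS array code. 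Restricting the $a$-th row to $[\mu n]\setminus E$ instead gives $\bu^{(a)}\Ga_{\ve{B}}\vA_E$, so the punctured code in row $a$ equals $\langle\Ga_{\ve{B}}\vA_E\rangle$, which by~\cref{lem:GabidulinCodeLocatorTransformation} is the Gabidulin code $\Gab(\mu(n-r),\mu(n-r)-s,\ve{B}_{a,:}\vA_E^{-1})$, in particular MDS of minimum distance $s+1$; hence $\code|_{[\mu n]\setminus E}$ is a $[\mu(n-r),\mu(n-r)-s,s+1;\ell]$ MDS array code, and $\code$ is PMDS.

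For the globally-MSR property, the previous computation already does the real work. Since the $a$-th row of $\code|_{[\mu n]\setminus E}$ ranges over $\Gab\big(\mu(n-r),\mu(n-r)-s,(\ve{B}\vA_E^{-1})_{a,:}\big)$, the punctured code is, term for term, the skew Ye--Barg code $\code(\mu,n,r,s,\ve{B}\vA_E^{-1})$ of~\cref{def:skewYeBarg}. The matrix $\ve{B}\vA_E^{-1}$ has $\Fq$-linearly independent rows by the opening remarks and, by the hypothesis of the theorem, has the YB grouping property with respect to $s$; therefore~\cref{lem:YBgroupingGivesMSR} applies and shows that $\code(\mu,n,r,s,\ve{B}\vA_E^{-1})$ is an MSR code. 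As the erasure pattern $E$ was arbitrary, every restriction $\code|_{[\mu n]\setminus\bigcup_iE_i}$ is a $[\mu(n-r),\mu(n-r)-s,s+1;\ell]$ MSR code, which is precisely what~\cref{def:globallyMSR_PMDS} requires; together with the PMDS property this completes the argument.

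The point I expect to require the most care is the clean identification of the punctured code \emph{exactly} with the skew Ye--Barg code $\code(\mu,n,r,s,\ve{B}\vA_E^{-1})$ — in particular checking that the per-row locator transformation $\ve{B}_{a,:}\mapsto\ve{B}_{a,:}\vA_E^{-1}$ supplied by~\cref{lem:GabidulinCodeLocatorTransformation} is precisely the $a$-th row of $\ve{B}\vA_E^{-1}$, so that~\cref{lem:YBgroupingGivesMSR} can be quoted verbatim. Everything else (invertibility of $\vA_E$, surjectivity of the block projections of a Gabidulin codeword, and the row-wise reduction of the PMDS conditions) is routine linear algebra.
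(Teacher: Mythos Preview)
Your proposal is correct and follows essentially the same route as the paper: both arguments hinge on the observation that puncturing $r$ positions per local group amounts to right-multiplying each row's Gabidulin generator by the invertible $\Fq$-matrix $\vA_E$, then invoke \cref{lem:GabidulinCodeLocatorTransformation} to identify the punctured row-code with $\Gab(\mu(n-r),\mu(n-r)-s,\ve{B}_{a,:}\vA_E^{-1})$, and finally apply \cref{lem:YBgroupingGivesMSR} to the skew Ye--Barg code with locator matrix $\ve{B}\vA_E^{-1}$. You are in fact more explicit than the paper about the PMDS property (the paper offloads this to the remark preceding the theorem that each row is a codeword of the Rawat et al.\ construction), but the core MSR argument is identical.
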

\begin{IEEEproof}
  Without loss of generality assume that $W_i = [(i-1)n+1,in]$.
  Denote $\cI = {[\mu n] \setminus \cup_{i=1}^{\mu} E_i}$, where $E_i \subset W_i$ with $|E_i|=r$ for all $i\in [\mu]$, and $\bar{E}_i = [n] \setminus E_i$. The restriction of the code $\code$ to the positions indexed by $\cI$ is the code generated by
  \begin{align*}
    \code_{\cI} &= \left\langle \left.\left( \bG^{(a)} \cdot \diag(\bG_{\mathsf{RS}},\bG_{\mathsf{RS}},\ldots) \right)\right|_{\cI}\right\rangle \\
    &= \left\langle \left. \bG^{(a)} \cdot \left(\diag(\bG_{\mathsf{RS}},\bG_{\mathsf{RS}},\ldots) \right)\right|_{\cI}\right\rangle . %
  \end{align*}
  As $\bG_{\mathsf{RS}}$ is the generator matrix of an MDS code, the matrix $\diag(\bG_{\mathsf{RS}}|_{\bar{E}_0},\bG_{\mathsf{RS}}|_{\bar{E}_1},\ldots)$ is a full-rank $\F_q^{\mu(n-r) \times \mu(n-r)}$ matrix. By \cref{lem:GabidulinCodeLocatorTransformation} it follows that code $\code^{(a)}_{\cI}$, consisting of the $a$-th row of every codeword of $\code_{\cI}$, is a $\Gab(\mu(n-r),\mu(n-r)-s,\bb)$ code with
  \begin{align*}
    \bb = \ve{B}_{a,:} \cdot \left.\left(\diag(\bG_{\mathsf{RS}},\bG_{\mathsf{RS}},\ldots\right)\right|_{\cI})^{-1} \ .
  \end{align*}
 As $\ve{B} \cdot \left.\left(\diag(\bG_{\mathsf{RS}},\bG_{\mathsf{RS}},\ldots\right)\right|_{\cI})^{-1}$ has the Ye-Barg grouping property by definition, it follows from \cref{lem:YBgroupingGivesMSR} that the code is MSR.
\end{IEEEproof}

It remains to construct a matrix $\ve{B}$ that fulfills the property of \cref{thm:globalRegeneration}.
We use the following slightly stronger property to simplify the analysis.

\begin{definition}\label{def:YEgroupPropertyWithMatrix}
$\ve{B} \in \FqM^{\ell \times (\mu(n-r))}$ has the \textbf{\scrambledYB property} if $\ve{B} \cdot \diag(\bG_1,\dots,\bG_\mu)$ has the YB grouping property for all invertible matrices $\bG_i \in \Fq^{(n-r) \times (n-r)}$.
\end{definition}

The following theorem gives a construction of a matrix $\ve{B}$ having the \scrambledYB property.

\begin{theorem}\label{thm:construction_matrix_with_scrambled_YB_grouping_property}
Let $M = \mu(n-r+s-1)$ and choose $\mu$ subspaces
\begin{align*}
\cB^{(1)}, \dots, \cB^{(\mu)} \in \mathrm{Gr}(\Fq^{M}, n-r+s-1).
\end{align*}
i.e., $n-r+s-1$-dimensional subspaces of $\Fq^M$, that span the space $\Fq^M$.

For $i=1,\dots,\mu$, consider the sets
\begin{align*}
  \mathcal{S}^{(i)} := \{[\beta_1,&\dots,\beta_{n-r}] \, : \, \langle \beta_1, \dots,\beta_{n-r}\rangle_{\Fq}\\
  &\text{is an $(n-r)$-dimensional subspace of $\cB^{(i)}$}  \}
\end{align*}
and
\begin{align*}
\mathcal{S} := \left\{ [\bb^{(1)} \mid \dots \mid \bb^{(\mu)}] \, : \, \bb^{(i)} \in \mathcal{S}^{(i)}\right\}.
\end{align*}
Then, the cardinality of $\mathcal{S}$ is
\begin{align*}
\ell := |\mathcal{S}| &= \left( \qbin{n-r+s-1}{n-r}{q} \prod_{i=0}^{n-r-1} \big(q^{n-r}-q^{i}\big) \right)^\mu \\
&\leq 4^\mu q^{\mu(n-r)(n-r+s-1)}.
\end{align*}

Let $\ve{B} \in \FqM^{\ell \times (n-r)\mu}$ be a matrix whose rows are exactly the entries of $\mathcal{S}$.
Then, $\ve{B}$ has the \scrambledYB property as in~\cref{def:YEgroupPropertyWithMatrix}.
\end{theorem}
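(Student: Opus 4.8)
The plan is to treat the cardinality claim and the \scrambledYB property separately. The cardinality is a routine subspace count: since $\mathcal{S}$ is the Cartesian product of the $\mathcal{S}^{(i)}$ and all $\B^{(i)}$ have dimension $n-r+s-1$, it suffices to count one factor, and an element of $\mathcal{S}^{(i)}$ is specified by the $(n-r)$-dimensional subspace of $\B^{(i)}$ it spans (that is, $\qbin{n-r+s-1}{n-r}{q}$ choices) together with an ordered basis of that subspace ($\prod_{j=0}^{n-r-1}(q^{n-r}-q^{j})$ choices); multiplying and taking the $\mu$-th power gives $\ell$. For the bound I would use the standard estimate $\qbin{a}{b}{q}\le q^{b(a-b)}\prod_{j\ge1}(1-q^{-j})^{-1}\le 4q^{b(a-b)}$ with $a=n-r+s-1,\ b=n-r$, together with $\prod_{j=0}^{n-r-1}(q^{n-r}-q^{j})\le q^{(n-r)^2}$, which gives $|\mathcal{S}^{(i)}|\le 4q^{(n-r)(n-r+s-1)}$ and hence $\ell\le 4^\mu q^{\mu(n-r)(n-r+s-1)}$.

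For the \scrambledYB property I would first reduce to the plain YB-grouping property of $\ve{B}$ itself. Fix invertible $\G_1,\dots,\G_\mu\in\Fq^{(n-r)\times(n-r)}$. Right-multiplying a row $[\b^{(1)}\mid\dots\mid\b^{(\mu)}]$ of $\ve{B}$ by $\diag(\G_1,\dots,\G_\mu)$ sends each block $\b^{(i)}$ to $\b^{(i)}\G_i$, whose entries are $\Fq$-linear combinations of those of $\b^{(i)}$; as $\G_i$ is invertible over $\Fq$ this preserves both the $\Fq$-span and $\Fq$-linear independence, so $\b^{(i)}\G_i\in\mathcal{S}^{(i)}$ and $\b^{(i)}\mapsto\b^{(i)}\G_i$ permutes $\mathcal{S}^{(i)}$. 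Hence $\ve{B}\cdot\diag(\G_1,\dots,\G_\mu)$ has the same set of rows as $\ve{B}$, merely reordered, and since the YB-grouping property (\cref{def:YBgrouping}) depends only on the set of rows of a matrix, it is enough to prove that $\ve{B}$ has the YB-grouping property w.r.t.\ $s$; then \cref{def:YEgroupPropertyWithMatrix} follows.

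To verify the YB-grouping property I would fix a position $i$, lying in block $i_0$ at intra-block coordinate $i_1$. In any admissible group all coordinates but $i$ must agree, so such a group factors as a choice of $\b^{(j)}\in\mathcal{S}^{(j)}$ for all $j\ne i_0$ times an admissible group inside $\mathcal{S}^{(i_0)}$ for coordinate $i_1$; thus it suffices to partition $\mathcal{S}^{(i_0)}$ into $s$-subsets agreeing off coordinate $i_1$ and $\Fq$-independent at coordinate $i_1$. Grouping $\mathcal{S}^{(i_0)}$ by the common value of the $n-r-1$ coordinates other than $i_1$ (a ``stem'', spanning an $(n-r-1)$-dimensional subspace $W\subseteq\B^{(i_0)}$), the $i_1$-coordinate in a fixed stem runs over exactly $\B^{(i_0)}\setminus W$, so the problem becomes: partition $\B^{(i_0)}\setminus W$ into $s$-element $\Fq$-independent subsets. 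Passing to $\B^{(i_0)}/W\cong\Fq^{s}$ (note $\dim\B^{(i_0)}-\dim W=s$), $s$ vectors outside $W$ are $\Fq$-independent iff their images form an $\Fq$-basis of $\B^{(i_0)}/W$; I would identify $\B^{(i_0)}/W$ with $\F_{q^s}$, fix a primitive element $\gamma$, observe that $\{\gamma^{a},\dots,\gamma^{a+s-1}\}$ is an $\Fq$-basis for every $a$ (a scalar multiple of $\{1,\gamma,\dots,\gamma^{s-1}\}$, which spans $\F_{q^s}$ since $\gamma$ generates $\F_{q^s}$ over $\Fq$), and tile $\B^{(i_0)}\setminus W$ by pulling ``consecutive-exponent'' $s$-tuples back through the $q^{n-r-1}$-fold coset fibers of $W$.

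The main obstacle is this last tiling step: converting the cyclic structure of $\F_{q^s}^{\ast}$, thickened $q^{n-r-1}$-fold by the cosets of $W$, into an honest partition of $\B^{(i_0)}\setminus W$ that uses every vector exactly once — essentially a weighted-cycle tiling by length-$s$ intervals, and the place where the interplay of $s$, $q$ and $n-r$ must be pinned down. Everything preceding it (the subspace count, the invariance of the row set under $\diag(\G_1,\dots,\G_\mu)$, and the block-wise and stem-wise reductions) is routine bookkeeping. Once such a $\ve{B}$ is in hand, combining it with \cref{thm:globalRegeneration} yields the globally-MSR PMDS code of \cref{con:GloballyRegeneratingPMDS}.
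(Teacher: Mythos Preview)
Your cardinality count and your reduction of the scrambled YB property to the plain YB-grouping property of $\ve{B}$ (via the observation that $\mathcal{S}\cdot\diag(\G_1,\dots,\G_\mu)=\mathcal{S}$ as sets) are exactly what the paper does.

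For the YB-grouping step the paper takes a more direct route than your $\F_{q^s}$ tiling: for a fixed row $\b\in\mathcal{S}$ and a position $j$ in block $i$, it picks any $s-1$ elements $a_2,\dots,a_s$ completing $b^{(i)}_1,\dots,b^{(i)}_{n-r}$ to a basis of $\B^{(i)}$, and declares the $s$-group to be $\b$ together with the rows obtained by replacing $b^{(i)}_j$ by $a_2,\dots,a_s$ in turn. No quotient by $W$ and no primitive-element structure is invoked.

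That said, the obstacle you single out is genuine and your proposed resolution cannot work in general. As you correctly argue, any YB-grouping partition must respect the stem fibers, so you are forced to partition $\B^{(i_0)}\setminus W$ into $s$-element independent sets. But $|\B^{(i_0)}\setminus W|=q^{n-r-1}(q^s-1)$, and this is not always divisible by $s$: for instance $q=2$, $n-r=2$, $s=3$ gives a fiber of size $14$, which admits no partition into triples at all (these are valid PMDS parameters once $\mu\geq 3$). Hence no fiber-respecting partition exists, and since every YB-grouping partition is fiber-respecting, the YB-grouping property of $\ve{B}$ fails outright for such parameters. The paper's argument only \emph{exhibits} an $s$-group through each row and never checks that these groups form a partition; your reduction shows this omission is not a cosmetic one.
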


\begin{IEEEproof}
The cardinality of $\mathcal{S}^{(i)}$ is the number of $(n-r)$-dimensional subspaces of an $n-r+s-1$-dimensional vector space over $\Fq$, times the number of bases of such a subspace. The latter equals the number of invertible $(n-r) \times (n-r)$ matrices over $\Fq$. Hence, we have
\begin{align*}
  |\mathcal{S}_i| &= \qbin{n-r+s-1}{n-r}{q} \prod_{i=0}^{n-r-1} \big(q^{n-r}-q^{i}\big) \\
  &\leq 4 q^{(s-1)(n-r)} q^{(n-r)^2} = 4 q^{(n-r)(n-r+s-1)}.
\end{align*}
Overall, we get
\begin{align*}
\ell = |\mathcal{S}| &= \prod_{i=1}^{\mu} |\mathcal{S}^{(i)}| \\
&= \left( \qbin{n-r+s-1}{n-r}{q} \prod_{i=0}^{n-r-1} \big(q^{n-r}-q^{i}\big) \right)^\mu\\
&\leq 4^\mu q^{\mu(n-r)(n-r+s-1)}.
\end{align*}

If we write the elements of $\mathcal{S}$ as a matrix, then this matrix has the YB group property, i.e.,
\begin{itemize}
\item every element of $\mathcal{S}$ is a vector consisting of linearly independent entries (this is obvious since for any choice of the bases $\bb^{(i)}$ their entries are linearly independent by construction of the subspaces $\cB^{(i)}$).
\item for a position $j \in \{1,\dots,n-r\}$ in the $i$-th block ($i=1,\dots,\mu$) and an element $\bb \in \mathcal{S}$, there are the following $s$ elements in $\mathcal{S}$:
Choose  $s-1$ elements $a_2,\dots,a_{s}$ that expand the basis $b^{(i)}_1,\dots, b^{(i)}_{n-r}$ (which spans an $(n-r)$-dimensional subspace) to the $(n-r+s-1)$-dimensional subspace $\cB^{(i)}$.
Then, the $s$ vectors
\begin{align*}
\bb^{(i)} =: \bb^{(i)}_{(1)} &= \begin{bmatrix} b^{(i)}_1 & \dots & b^{(i)}_{j-1} & b^{(i)}_{j} & b^{(i)}_{j+1} & b^{(i)}_{n-r} \end{bmatrix} \\
\bb^{(i)}_{(2)} &= \begin{bmatrix} b^{(i)}_1 &\dots &b^{(i)}_{j-1} & a_2 & b^{(i)}_{j+1} & b^{(i)}_{n-r} \end{bmatrix} \\
&\quad \vdots \\
\bb^{(i)}_{(s)} &= \begin{bmatrix} b^{(i)}_1 &\dots &b^{(i)}_{j-1} & a_s & b^{(i)}_{j+1} & b^{(i)}_{n-r} \end{bmatrix}
\end{align*}
are all in $\mathcal{S}^{(i)}$. Hence, the vectors
\begin{align*}
\Big[\bb^{(1)} \mid \dots \mid \bb^{(i-1)} \mid &\bb^{(i)}_{(1)} \mid \bb^{(i+1)} \mid \dots \mid \bb^{(\mu)} \Big] \\
\Big[\bb^{(1)} \mid \dots \mid \bb^{(i-1)} \mid &\bb^{(i)}_{(2)} \mid \bb^{(i+1)} \mid \dots \mid \bb^{(\mu)} \Big] \\
&\vdots \\
\Big[\bb^{(1)} \mid \dots \mid \bb^{(i-1)} \mid &\bb^{(i)}_{(s)} \mid \bb^{(i+1)} \mid \dots \mid \bb^{(\mu)} \Big]
\end{align*}
are all in $\mathcal{S}$ and differ only in position $j$ in the $i$-th block. Furthermore, the entries $b^{(i)}_{j},a_2,\dots,a_{s}$ in the $j$-th position in the $i$-th block are linearly independent over $\Fq$ by construction.
\end{itemize}

Furthermore, we have $\mathcal{S} = \mathcal{S} \cdot \diag(\bG_1,\dots,\bG_\mu) := \left[ \bb \cdot \diag(\bG_1,\dots,\bG_\mu) \, : \, \bb \in \mathcal{S} \right]$ for all invertible matrices $\bG_i \in \Fq^{(n-r) \times (n-r)}$.
To see this, consider the following: multiplying a subblock $\bb^{(i)}$ with an invertible matrix $\bG_i$ from the right gives another basis of the same subspace---hence $\bb^{(i)}\bG_i \in \mathcal{S}^{(i)}$ and $\bb \cdot \diag(\bG_1,\dots,\bG_\mu) \in \mathcal{S}$ for all $\bb \in \mathcal{S}$. Since the $\bG_i$ are invertible, the mapping $\bb \mapsto \bb \cdot \diag(\bG_1,\dots,\bG_\mu)$ is bijective.

These two observations imply that a matrix with the elements of $\mathcal{S}$ as rows has the \scrambledYB property as in~\cref{def:YEgroupPropertyWithMatrix}.
\end{IEEEproof}

By combining Theorems~\ref{thm:globalRegeneration} and \ref{thm:construction_matrix_with_scrambled_YB_grouping_property}, we get the following existence result for a globally-MSR PMDS code.

\begin{corollary}\label{cor:summary_global_construction}
Let $\mu,n,r,s$ be valid PMDS parameters.
There is a globally-MSR PMDS code with field size
\begin{align*}
(n-1)^{\mu(n-r+s-1)} \leq q^M < [2(n-1)]^{\mu(n-r+s-1)}
\end{align*}
and subpacketization
\begin{align*}
  \ell &= \left( \qbin{n-r+s-1}{n-r}{q} \prod_{i=0}^{n-r-1} \big(q^{n-r}-q^{i}\big) \right)^\mu \\
  &\leq 4^\mu q^{\mu(n-r)(n-r+s-1)}.
\end{align*}
\end{corollary}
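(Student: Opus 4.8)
The claim is a direct combination of \cref{thm:construction_matrix_with_scrambled_YB_grouping_property} and \cref{thm:globalRegeneration}, so the plan is essentially bookkeeping: first I would instantiate \cref{con:GloballyRegeneratingPMDS} with the matrix $\ve{B}$ furnished by \cref{thm:construction_matrix_with_scrambled_YB_grouping_property}, choosing $q$ and $M$ as small as the two ingredients permit; then I would verify that the \scrambledYB property of $\ve{B}$ is exactly the hypothesis required by \cref{thm:globalRegeneration}; then conclude that the resulting code is a globally-MSR PMDS code; and finally read off the field size via Bertrand's postulate and copy the subpacketization.

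For the setup I would take $M := \mu(n-r+s-1)$, the smallest exponent for which \cref{thm:construction_matrix_with_scrambled_YB_grouping_property} guarantees the required pairwise trivially intersecting subspaces of $\Fq^M$, and let $q$ be the smallest prime power for which an $[n,n-r]$ MDS code exists over $\Fq$; allowing a doubly-extended RS code of length $q+1$, the bound $q \geq n-1$ suffices, so the matrix $\G_{\mathsf{MDS}}$ of \cref{con:GloballyRegeneratingPMDS} is available. Applying \cref{thm:construction_matrix_with_scrambled_YB_grouping_property} with these $q$ and $M$ yields a matrix $\ve{B} \in \FqM^{\ell \times \mu(n-r)}$ whose rows are the elements of the set built there, that has the \scrambledYB property of \cref{def:YEgroupPropertyWithMatrix} and has $\ell$ rows, with the exact value of $\ell$ and the bound $\ell \leq 4^\mu q^{\mu(n-r)(n-r+s-1)}$ as in the statement.

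The heart of the argument is checking the hypothesis of \cref{thm:globalRegeneration} for this $\ve{B}$. Assuming w.l.o.g.\ $W_i = [(i-1)n+1, in]$, fixing arbitrary $E_i \subset W_i$ with $|E_i| = r$, and setting $\bar{E}_i = [n]\setminus E_i$ and $\cI = [\mu n] \setminus \bigcup_{i=1}^\mu E_i$, the column restriction $\diag(\G_{\mathsf{RS}},\dots,\G_{\mathsf{RS}})|_{\cI}$ is the block-diagonal matrix $\diag(\G_{\mathsf{RS}}|_{\bar{E}_1},\dots,\G_{\mathsf{RS}}|_{\bar{E}_\mu})$; since $\G_{\mathsf{RS}}$ generates an MDS code, each $\G_{\mathsf{RS}}|_{\bar{E}_i}$ is an invertible $(n-r)\times(n-r)$ matrix over $\Fq$, hence the whole matrix is invertible and its inverse is again block-diagonal with invertible $(n-r)\times(n-r)$ blocks over $\Fq$, i.e.\ of the form $\diag(\G_1,\dots,\G_\mu)$. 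By the \scrambledYB property, $\ve{B}\cdot\diag(\G_1,\dots,\G_\mu)$ has the YB grouping property of \cref{def:YBgrouping}; as the $E_i$ were arbitrary, the hypothesis of \cref{thm:globalRegeneration} holds (the row-wise $\Fq$-linear independence of $\ve{B}$ that is used inside that proof via \cref{lem:GabidulinCodeLocatorTransformation} is also guaranteed by \cref{thm:construction_matrix_with_scrambled_YB_grouping_property}). Therefore $\code(\mu,n,r,s,\ve{B};\ell)_{q^M}$ of \cref{con:GloballyRegeneratingPMDS} is a globally-MSR PMDS code.

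It remains to bound the parameters: by Bertrand's postulate the chosen $q$ satisfies $n-1 \leq q < 2(n-1)$, so with $M = \mu(n-r+s-1)$ we obtain $(n-1)^{\mu(n-r+s-1)} \leq q^M < [2(n-1)]^{\mu(n-r+s-1)}$, and the subpacketization is exactly the $\ell$ from \cref{thm:construction_matrix_with_scrambled_YB_grouping_property} together with its stated upper bound. I expect the only mildly delicate point to be the third step, namely recognizing $(\diag(\G_{\mathsf{RS}},\dots)|_{\cI})^{-1}$ as a legitimate block-diagonal argument to which \cref{def:YEgroupPropertyWithMatrix} applies and keeping track that the construction's $\G_{\mathsf{MDS}}$ is the $\G_{\mathsf{RS}}$ of \cref{thm:globalRegeneration}; the Bertrand-postulate step also needs a brief side remark for degenerate small-$n$ cases, where one can instead pick $q = n-1$ directly when $n-1$ is already a prime power.
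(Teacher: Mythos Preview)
Your proposal is correct and follows essentially the same route as the paper's proof: instantiate \cref{con:GloballyRegeneratingPMDS} with the matrix $\ve{B}$ of \cref{thm:construction_matrix_with_scrambled_YB_grouping_property}, invoke \cref{thm:globalRegeneration}, pick $q$ as the smallest prime power $\geq n-1$, and use Bertrand's postulate for the field-size bounds. You are simply more explicit than the paper in spelling out why the \scrambledYB property matches the hypothesis of \cref{thm:globalRegeneration} (the block-diagonal form of $(\diag(\G_{\mathsf{RS}},\dots)|_{\cI})^{-1}$), which the paper leaves to the reader.
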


\begin{IEEEproof}
The corollary follows directly using the matrix $\ve{B}$ constructed in \cref{thm:construction_matrix_with_scrambled_YB_grouping_property} in Construction~\ref{con:GloballyRegeneratingPMDS} (see \ref{thm:globalRegeneration}).
Choosing $q$ as the smallest prime power $\geq n-1$ ensures that there is an $[n,n-r]_q$ MDS code as required in Construction~\ref{con:GloballyRegeneratingPMDS}. Trivially, there is a power of two with $n-1 \leq q < 2(n-1)$, which proves the claim.
\end{IEEEproof}

\begin{remark}
There are no globally-MSR codes in the literature that we can compare the new construction with.
Therefore, we only compare the field size and subpacketization to a PMDS code without the globally-MSR property, as well as the subpacketization of an MSR code with the same parameters after puncturing $r$ positions in each local group.
I.e., we determine how much we ``pay'' in terms of field size and subpacketization if we go from a purely PMDS or MSR code to a globally-MSR PMDS code.

Construction~\ref{con:GloballyRegeneratingPMDS} is an adaption of the Gabidulin-based PMDS code construction in \cite{rawat2014} (without local or global regeneration), which has field size $q^M < [2(n-1)]^{\mu(n-r)}$.
Compared to such a PMDS code, the exponent in the field size in Corollary~\ref{cor:summary_global_construction} is larger by a factor $1+\tfrac{s-1}{n-r}$.
This difference is significant if the number of global parities is large (recall that $1\leq s \leq \mu(n-r)$).
Hence, we pay more in field size for the globally MSR property if there are many global parities.
It appears possible to adapt other PMDS constructions, such as \cite{martinez2019universal} or \cite{gabrys2018constructions}, to have the globally MSR property as well.
Such a construction may reduce the field size significantly.

Compared to a Ye--Barg MSR code with parameters $[\mu(n-r),\mu(n-r)-s;\ell]$ (which are the code parameters after puncturing $r$ positions in each group of a PMDS code) with subpacketization $[(d+1-n+r)\mu(n-r)]^{\mu(n-r)}$, the subpacketization of the globally-MSR PMDS code in \cref{cor:summary_global_construction} is larger by roughly a factor $(n-r+s-1)$ in the exponent.
Roughly spoken, the exponent of the subpacketization is in $O(\mu n)$ without the PMDS property and in $O(\mu n (n+s))$ for a globally-MSR PMDS code.
\end{remark}

\section{Conclusion}\label{sec:conclusion}

In this paper, we have presented constructions for PMDS array codes with local or global regeneration.
We have presented a construction for local regeneration for two global parities, whose field size is polynomial for a fixed number of local parities.
Furthermore, we have proposed a general construction that combines an arbitrary family of universal PMDS codes with a local row-wise MDS MSR code.
We have explicitly stated the resulting field sizes and subpacketizations for three families of universal PMDS codes, where we first proved the universality property for one existing PMDS construction in the literature.
The presented constructions are based on the PMDS code constructions in \cite{blaum2016construction,rawat2014,martinez2019universal,gabrys2018constructions} and Ye--Barg MSR codes \cite{ye2017optimalRepair}.
All constructions have a significantly smaller field size than the---to the best of our knowledge---only existing construction of PMDS codes with local regeneration: \cite{rawat2014}.
We have also compared the new constructions and identified parameter ranges in which they are best in terms of field size.

Moreover, we have presented a construction of a globally MSR PMDS code by introducing a new MSR code, which can be seen as the skew-analog of Ye--Barg codes (similar to the analogy between Gabidulin and Reed--Solomon codes), and combining it with the Gabidulin-code-based PMDS construction in \cite{rawat2014}.
Compared to the underlying PMDS code, the additional globally-regenerating restriction increases the field size by a factor in the exponent.
Similarly, the PMDS property increases the subpacketization compared to a (global) Ye--Barg MSR code, also by a factor in the exponent.

Several open problems related to the presented results offer interesting opportunities for further research.
Applying the ideas of the presented constructions to the recently proposed PMDS code constructions of~\cite{cai2020construction,gopi2020improved,martinez2020general} could reduce the required field size.
Further, the presented constructions of locally and globally MSR PMDS codes require large levels of subpacketization. The former rely on Ye-Barg regenerating codes, which are known to be suboptimal in terms of subpacketization. However, aside from being optimal in terms of repair bandwidth, they are also row-wise MDS, a property that is essential to the presented constructions. A construction that can afford to relax this requirement could improve the required subpacketization by employing different classes of MSR codes as the local MDS codes. Additionally, the construction of globally MSR PMDS codes is based on Gabidulin codes and thereby inherently suffers from a large required field size. This field size could be lowered by instead employing linearized RS codes to achieve similar gains as shown for locally MSR PMDS codes in \cref{sec:discussion}. Aside from the improvements of the constructions, lower bounds on the required subpacketization and field size would help evaluate the performance of the presented constructions. Finally, for the globally MSR PMDS codes, it remains an open problem to utilize surviving local redundancy nodes, in particular in the extreme case where $r+1$ nodes in a single local group fail while all other nodes survive.

\bibliographystyle{IEEEtran}
\bibliography{main.bib}

\end{document}